\newcommand{\set}[1]{\left\{#1\right\}}
\newcommand{\fpr}[1]{\mathopen{}\left(#1\right)}
\newcommand{\abs}[1]{{\left|#1\right|}}
\newcommand{\enset}[2]{\left\{#1 ,\ldots , #2\right\}}
\newcommand{\funcdef}[3]{{#1}:{#2} \to {#3}}
\newcommand{\score}[1]{s\fpr{#1}}
\newcommand{\bic}{\textrm{\sc bic}}
\newcommand{\bicscore}[1]{\bic\fpr{#1}}
\newcommand{\mdl}{\textrm{\sc mdl}}
\newcommand{\mdlscore}[1]{\mdl\fpr{#1}}
\newcommand{\faai}{\Phi}
\newcommand{\define}{\leftarrow}
\newcommand{\ifam}[1]{\mathcal{#1}}
\newcommand{\freq}[1]{\mathit{fr}\fpr{#1}}
\newcommand{\supp}[1]{\mathit{supp}\fpr{#1}}
\newcommand{\exact}[1]{\mathit{e}\fpr{#1}}
\newcommand{\cum}[1]{\mathit{c}\fpr{#1}}
\newcommand{\sets}[1]{\mathit{sets}\fpr{#1}}
\newcommand{\block}[1]{\mathit{block}\fpr{#1}}
\newcommand{\closure}[1]{\mathit{closure}\fpr{#1}}
\newcommand{\prior}{\mathit{prior}}
\newcommand{\partsize}[1]{\mathit{nb}\fpr{#1}}
\newcommand{\ent}[1]{H\fpr{#1}}
\newcommand{\kl}[2]{\mathit{KL}\fpr{#1 \,\|\, #2}}
\newcommand{\pemp}{p^*}
\newcommand{\trans}{\mathcal{T}}
\newcommand{\db}{D}
\newcommand{\coll}{\ifam{C}}
\newcommand{\bg}{\ifam{B}}
\newcommand{\argmin}{\operatornamewithlimits{arg\ min}}
\newcommand{\argmax}{\operatornamewithlimits{arg\ max}}
\newtheorem{prblm}{Problem}
\newtheorem{Example}[theorem]{Example}
\newtheorem{apxcorollary43}{Corollary~\ref{cor:likelihood}} 
\newtheorem{apxtheorem45}{Theorem~\ref{theorem:4.5}} 
\newtheorem{apxtheorem418}{Theorem~\ref{theorem:itscorrect}} 
\mathchardef\mhyphen="2D
\begin{document}

\newcommand{\ourtitle}{Summarizing Data Succinctly with the Most Informative Itemsets}

\markboth{M.\ Mampaey et al.}{\ourtitle}

\title{\ourtitle}
\author{MICHAEL MAMPAEY
\affil{University of Antwerp}
JILLES VREEKEN
\affil{University of Antwerp}
NIKOLAJ TATTI
\affil{University of Antwerp}
}
\begin{abstract}
Knowledge discovery from data is an inherently iterative process. 
That is, what we know about the data greatly determines our expectations, and therefore, what results we would find interesting and/or surprising. Given new knowledge about the data, our expectations will change. Hence, in order to avoid  redundant results, knowledge discovery algorithms ideally should follow such an iterative updating procedure.

With this in mind, we introduce a well-founded approach for succinctly summarizing data with the most informative itemsets; using a probabilistic maximum entropy model, we iteratively find the itemset that provides us the most novel information---that is, for which the frequency in the data surprises us the most---and in turn we update our model accordingly. As we use the Maximum Entropy principle to obtain unbiased probabilistic models, and only include those itemsets that are most informative with regard to the current model, the summaries we construct are guaranteed to be both descriptive and non-redundant.

The algorithm that we present, called \textsc{mtv}, can either discover the top-\emph{k} most informative itemsets, or we can employ either the Bayesian Information Criterion (\textsc{bic}) or the Minimum Description Length (\textsc{mdl}) principle to automatically identify the set of itemsets that together summarize the data well. In other words, our method will `tell you what you need to know' about the data. Importantly, it is a one-phase algorithm: rather than picking itemsets from a user-provided candidate set, itemsets and their supports are mined on-the-fly. To further its applicability, we provide an efficient method to compute the maximum entropy distribution using Quick Inclusion-Exclusion.

Experiments on our method, using synthetic, benchmark, and real data, show that the discovered summaries are succinct, and correctly identify the key patterns in the data. The models they form attain high likelihoods, and inspection shows that they summarize the data well with increasingly specific, yet non-redundant itemsets.
\end{abstract}

\category{H.2.8}{Database management}{Data applications---\emph{Data mining}}

\terms{Theory, Algorithms, Experimentation}

\keywords{Frequent Itemsets, Pattern Sets, Summarization, Maximum Entropy, Minimum Description Length Principle, MDL, BIC, Inclusion-Exclusion}

\acmformat{M. Mampaey, J. Vreeken, and N. Tatti, 2011. \ourtitle.}

\begin{bottomstuff}
Michael Mampaey is supported by a Ph.D. grant of the Agency for Innovation by Science and Technology in Flanders (IWT). 
Jilles Vreeken is supported by a Post-Doctoral Fellowship of the Research Foundation -- Flanders (FWO).
Nikolaj Tatti is supported by a Post-Doctoral Fellowship of the Research Foundation -- Flanders (FWO).

Authors' address: M.\ Mampaey, J.\ Vreeken, and N.\ Tatti, ADReM Research Group, Department of Mathematics and Computer Science, University of Antwerp, Middelheimlaan 1, 2020 Antwerp, Belgium.
\end{bottomstuff}

\maketitle

\pagebreak
\section{Introduction}
\label{sec:introduction}

Knowledge discovery from data is an inherently iterative process. 
That is, what we already know about the data greatly determines our expectations, and therefore, which results we would find interesting and/or surprising. Early on in the process of analyzing a database, for instance, we are happy to learn about the generalities underlying the data, while later on we will be more interested in the specifics that build upon these concepts.
Essentially, this process comes down to summarization: we want to know what is interesting in the data, and we want this to be reported as succinctly as possible, without redundancy.

As a simple example, consider supermarket basket analysis. Say we just learned that {\em pasta} and {\em tomatoes} are sold together very often, and that we already knew that many people buy {\em wine}. Then it would not be very interesting to be told that the combination of these three items is also sold frequently; although we might not have been able to predict the sales numbers exactly, our estimate would most likely have come very close, and hence we can say that this pattern is redundant.

At the same time, at this stage of the analysis we are probably also not  interested in highly detailed patterns, e.g., an itemset representing the many ingredients of an elaborate Italian dinner. While the frequency of this itemset may be surprising, the pattern is also highly specific, and may well be better explained by some more general patterns. Still, this itemset might be regarded as highly interesting further on in the discovery process, after we have learned those more general patterns, and if this is the case, we would like it to be reported at that time. 

In a nutshell, that is the approach we adopt in this paper: we incrementally adjust our model as we iteratively discover informative patterns, in order to obtain a non-redundant summary of the data.

As natural as it may seem to update a knowledge model during the discovery process, and in particular to iteratively find results that are informative with regard to what we have learned so far, few pattern mining techniques actually follow such a dynamic approach. That is, while many methods provide a series of patterns in order of interestingness, most score these patterns using a static model; during this process the model, and hence the itemset scores, are not updated with the knowledge gained from previously discovered patterns. 
For instance, \citet{tan:02:selecting} and \citet{geng:06:interestingness} respectively study 21 and 38 interestingness measures, all of which are static, and most of which are based on the independence model \citep[e.g.,][]{brin:97:beyond,aggarwal:98:new}. The static approach inherently gives rise to a typical problem of traditional pattern mining: overwhelmingly large and highly redundant pattern sets. 

Our objective is to mine succinct summaries of binary data, that is, to obtain a small, yet high-quality set of itemsets that describes key characteristics of the data at hand, in order to gain useful insight. 
This is motivated by the fact that many existing algorithms often return too large collections of patterns with considerable redundancy, as discussed above. 
The view that we take in this paper on succinctness and non-redundancy is therefore a fairly strict one. 

While we are not the first to propose a method that updates its scoring model dynamically---examples include the swap randomization-based approach by \citet{hanhijarvi:09:tell} and the compression-based approach by \citet{vreeken:11:krimp}---there are several differences with existing methods. 
For instance, the former requires generating many randomized databases in order to estimate frequencies, whereas our model is analytical, allowing for direct frequency calculation. 
The models for the latter are not probabilistic, and while non-redundant with respect to compression, 
can contain patterns that are variations of the same theme. We treat related work in more detail in Section~\ref{sec:related}, but let us first discuss the basic features of our approach.

To model the data, we use the powerful and versatile class of maximum entropy models. We construct a maximum entropy distribution that allows us to directly calculate the expected frequencies of itemsets. Then, at each iteration, we return the itemset that provides the most information, i.e., 
for which our frequency estimate was most off. We update our model with this new knowledge, and continue the process. The non-redundant model that contains the most important information is thus automatically identified. Therefore, we paraphrase our method as `tell me what I need to know'.

While in general solving the maximum entropy model is infeasible, we show that in our setting it can be solved efficiently---depending on the amount of overlap between the selected patterns. Similarly, we give an efficient method for estimating frequencies from the model. Further, we provide an efficient convex heuristic for effectively pruning the search space when mining the most informative itemsets.
This heuristic allows us to mine collections of candidate itemsets on the fly, instead of picking them from a larger candidate collection that has to be materialized beforehand. 

Our approach does not require user-defined parameters such as a significance level or an error threshold. In practice, however, we allow the user to specify itemset constraints such as a minsup threshold or a maximum size, which reduces the size of the search space. Such constraints can be integrated quite easily into the algorithm.

We formalize the problem of identifying the most informative model both by the Bayesian Information Criterion (\textsc{bic}), and by the Minimum Description Length (\textsc{mdl}) principle; both are well-known and well-understood model selection techniques that have natural interpretations. 
To heuristically approximate these ideal solutions, we introduce the \textsc{mtv} algorithm, for mining the most informative itemsets. Alternatively, by its iterative nature, \textsc{mtv} can also mine the top-\emph{k} most informative itemsets. Finally, our approach easily allows the user to infuse background knowledge into the model (in the form of itemset frequencies, column margins, and/or row margins), to the end that redundancy with regard to what the user already knows can effectively be avoided. 

Experiments on real and synthetic data show that our approach results in succinct, non-redundant data summaries using itemsets, and provide intuitive descriptions of the data. Since they only contain a small number of key patterns about the data, they can easily be inspected manually by the user, and since redundancy is reduced to a minimum, the user knows that every pattern he or she looks at will be informative.

An earlier version of this work appeared as \citet{mampaey:11:tell}. In this work we significantly extend said paper in the following ways. We define an {\sc mdl}-based quality measure for a collection of itemsets, which is more expressive and supersedes the {\sc bic} score presented earlier. To solve the maximum entropy model, we introduce a new and faster algorithm employing Quick Inclusion-Exclusion to efficiently compute the sizes of the transaction blocks in partitions induced by an itemset collection. Moreover, we formulate how basic background information such as column margins and row margins (i.e., the density of each row and column), can be included into the model, such that results following from this background knowledge will not be regarded as informative, since the background knowledge of the user determines what he or she will find interesting. Further, we provide extensive experimental validation of our methods using fourteen real and synthetic datasets covering a wide range of data characteristics.

The remainder of this paper is organized as follows. First, Section~\ref{sec:related} discusses related work. We cover notation and preliminaries in Section~\ref{sec:prel}. Next, in Section~\ref{sec:theory} we give an introduction to Maximum Entropy models and how to compute them efficiently, and show how to measure the interestingness of a set of itemsets using \textsc{bic} and \textsc{mdl}. We give a formal problem statement  in Section~\ref{sec:problem}. Subsequently, we present the \textsc{mtv} algorithm in Section~\ref{sec:algorithm}. In Section~\ref{sec:experiments}  we report on the experimental evaluation of our method. We round up with a discussion in Section~\ref{sec:discussion} and conclude in Section~\ref{sec:conclusion}. For reasons of presentation some proofs have been placed in the Appendix.

\section{Related Work}
\label{sec:related}

Selecting or ranking interesting patterns is a well-studied topic in data mining. Existing techniques can roughly be split into two groups.

\subsection{Static Approaches}
The first group consists of techniques that measure how \emph{surprising} the support of an itemset is compared against some null hypothesis: the more the observed frequency deviates from the expected value, the more interesting it is. 
In frequent itemset mining, for instance, one can consider the null hypothesis to be that no itemset occurs more than the minimum support threshold. 
Similarly, in tile mining~\citep{geerts:04:tiling}, one searches for itemsets with a large area (support multiplied by size); here the underlying assumption is that the area of any itemset is small.
A simple and often-used probabilistic null hypothesis is the independence model~\cite{brin:97:beyond,aggarwal:98:new}. More flexible models have been suggested, for example, Bayesian Networks~\citep{jaroszewicz:04:interestingness}. The major caveat of these approaches is that the null hypothesis is static and hence we keep rediscovering the same information. As a result, this will lead to pattern collections with high levels of redundancy.

Swap randomization was proposed by \citet{gionis:07:assessing} and \citet{hanhijarvi:09:tell} as a means to assess the significance of data mining results through randomization. To this end, \citet{gionis:07:assessing} gave an algorithm by which randomized data samples can be drawn by repeatedly swapping values locally, such that the background knowledge is maintained---essentially, a Markov chain is defined. 
Then, by repeatedly sampling such random datasets, one can assess the statistical significance of a result by calculating empirical p-values. 
While the original proposal only considered row and column margin as background knowledge, \citet{hanhijarvi:09:tell} extended the approach such that cluster structures and itemset frequencies can be maintained.

While a very elegant approach, swap randomization does suffer from some drawbacks. First of all, there are no theoretical results on the mixing time of the Markov chain, and hence one has to rely on heuristics (e.g., swap as many times as there are ones in the data). Second, as typically very many swaps are required to obtain a randomized sample of the data, and finding suitable swaps is nontrivial, the number of randomized datasets we can realistically obtain is limited, and hence so is the p-value resolution by which we measure the significance of results. As our approach is to model the data probabilistically by the Maximum Entropy principle, we do not suffer from convergence issues, and moreover, as our model is analytical in nature, we can calculate exact probabilities and p-values.

\subsection{Dynamic Approaches}
The alternative approach to measuring informativeness statically, is to rank and select itemsets using a dynamic hypothesis. That is, when new knowledge arrives, e.g., in the form of an interesting pattern, the model is updated such that we take this newly discovered information into account, and hence we avoid reporting redundant results.
The method we present in this paper falls into this category.

Besides extending the possibilities for incorporating background knowledge into a static model, the aforementioned approach by \citet{hanhijarvi:09:tell} discusses that by iteratively updating the randomization model, redundancy is eliminated naturally. 

The {\sc mini} algorithm by \citet{gallo:07:mini} also uses row and column margins
to rank itemsets. It first orders all potentially interesting itemsets by computing their p-values according
to these margins. Then, as itemsets are added, the p-values are recomputed, and the 
itemsets re-ordered according to their new p-values. However, this method does not allow querying, and requires a set of candidates to be mined beforehand. 

{\sc Krimp}, by \citet{siebes:06:item,vreeken:11:krimp}, employs the {\sc mdl} principle to select those itemsets that together compress the data best. As such, patterns that essentially describe the same part of the data are rejected. The models it finds are not probabilistic, and cannot straightforwardly be used to calculate probabilities (although \citet{vreeken:07:dgen} showed data strongly resembling the original can be sampled from the resulting code tables). Further, while non-redundant from a compression point of view, many of the patterns it selects are variations of the same theme. 
The reason for this lies in the combination of the covering strategy and  encoding {\sc Krimp} utilizes. The former prefers long itemsets over short ones, while the second assumes independence between all itemsets in the code table. Hence, for {\sc Krimp} it is sometimes cheaper to encode highly specific itemsets with {\em one} relatively long code, than to encode it with {\em multiple} slightly shorter codes, which may lead to the selection of overly specific itemsets. 
Other differences to our method are that {\sc Krimp} considers its candidates in a static order, and that it is unclear how to make it consider background knowledge.
Recent extensions of \textsc{Krimp} include
the \textsc{Slim} algorithm by \citet{smets:12:slim}, which finds good code tables directly from data by iteratively heuristically finding the optimal addition to the code table, and 
the \textsc{Groei} algorithm by \citet{siebes:11:groei}, which identifies the optimal set of \emph{k} itemsets by beam search instead of identifying the optimal set overall.

\subsection{Modelling by Maximum Entropy}
The use of maximum entropy models in pattern mining has been proposed by several authors, e.g., \citep{tatti:08:decomposable,tatti:08:signifsets,wang:06:summaxent,konto:10:sdm,debie:11:dami}.
Discovering itemset collections with good {\sc bic} scores was
suggested by~\citet{tatti:08:decomposable}.
Alternatively, \citet{tatti:10:probably} samples collections and bases the significance of an itemset on its occurrence in the discovered collections.
However, in order to guarantee that
the score can be computed, the authors restrict themselves to 
a particular type of collections: downward closed and decomposable collections of itemsets.

The method of \citet{tatti:08:signifsets} uses local models. That is, to compute the support of an itemset $X$, the method only uses sub-itemsets of $X$, and outputs a p-value. 
Unlike our approach, it requires a threshold to determine whether $X$ is important. 
Relatedly, \citet{webb:10:self} defines itemsets as \emph{self-sufficient}, if their support differs significantly from what can be inferred from their sub- and supersets; therefore such a model is also local. 

\citet{wang:06:summaxent} incrementally build a maximum entropy model by adding itemsets that deviate more than a given error threshold. The approach ranks and adds itemsets in level-wise batches, i.e., first itemsets of size 1, then of size 2, and so on. This will, however, not prevent redundancy within  batches of itemsets.

\citet{debie:11:dami} proposed an alternative to swap-randomization for obtaining randomized datasets, by modeling the whole data by maximum entropy, using row and column sums as background information; besides faster, and more well-founded, unlike swap-randomization this approach does not suffer from convergence issues. Furthermore, by its analytical nature, exact p-values can be calculated. \citet{konto:10:sdm} used the model to analytically define an interestingness measure, Information Ratio, for noisy tiles, by considering both the expected density of a tile, and the complexity of transferring the true tile to the user.

Although both De~Bie and ourselves model data by the Maximum Entropy principle, there exist important differences between the two approaches. The most elementary is that while we regard it a \emph{bag of samples} from a distribution, De~Bie considers the data as a \emph{monolithic} entity. That is, De~Bie models the whole binary matrix, while we construct probabilistic model for individual rows. Informally said, De~Bie considers the location of a row in the matrix important, whereas we do not. Both these approaches have different advantages. While our approach intuitively makes more sense when modeling, say, a supermarket basket dataset, where the data consists of individual, independent samples; the monolithic approach is more suited to model data where the individual rows have meaning, say, for a dataset containing mammal presences for geographic locations. Moreover, while for our models it is straightforward to include itemset frequencies (which does not include specifying transaction identifiers), such as \emph{tomatoes} and \emph{pasta} are sold in 80\% of the transactions, this is currently not possible for the whole-dataset model. 
Additionally, while the De~Bie framework in general allows the background knowledge of a user to be false, in this work we only consider background knowledge consistent with the data. 
As opposed to~\citet{konto:10:sdm}, we do not just rank patterns according to interestingness, but formalize model selection techniques such as {\sc bic} or {\sc mdl} such that we can identify the optimal model, and hence avoid discovering overly complex models.

As overall comments to the methods described above, we note that in contrast to our approach, most of the above methods require the user to set one or several parameters to asses the informativeness of itemsets, e.g., a maximum error threshold or a significance level (although we do allow thresholding the support of candidate itemsets). Many also cannot easily be used to estimate the frequency of an itemset. Further, all of them are two-phase algorithms, i.e., they require that the user first provides a collection of candidate (frequent) itemsets to the algorithm, which must be completely mined and stored first, before the actual algorithm itself is run.

\section{Preliminaries and Notation}
\label{sec:prel}

This section provides some preliminaries and the notation that we will use throughout the rest of this paper.

By a \emph{transaction} we mean a binary vector of size $N$ generated by some unknown distribution. The $i$th element in a random transaction corresponds to an \emph{attribute} or \emph{item} $a_i$, a Bernoulli random variable.  We denote the set of all items by $A = \enset{a_1}{a_N}$. We denote the set of all possible transactions by $\trans = \set{0, 1}^N$\!.

The input of our method is a \emph{binary dataset} $\db$, which is a bag 
of $\abs{\db}$ transactions. Given the data $D$ we define an empirical distribution
\[
	q_{\db}(a_1 = v_1, \ldots, a_N = v_N) = \abs{\set{t \in \db \mid t = v}} / {\abs{\db}}\;.
\]

An \emph{itemset} $X$ is a subset of $A$. For notational convenience, given a
distribution $p$, an itemset $X = \enset{x_1}{x_L}$, and a binary vector $v$ of
length $L$, we often use $p(X = v)$ to denote $p(x_1 = v_1, \ldots, x_L = v_L)$.
If $v$ consists entirely of 1's, we use the notation $p(X = 1)$.  
A transaction $t$ is said to \emph{support} an itemset $X$ if it has 1's for all attributes that $X$ identifies. As such, the {support} of an itemset $X$ in a database $\db$ is defined as 
\[
	\mathit{supp}(X) = |\set{t \in \db \mid \pi_{x_1}(t)=1,\ldots,\pi_{x_L}(t)=1}|\;,
\]
where $\pi_{x}(t) \in \{0,1\}$ is the projection of transaction $t$ onto item $x$. Analogously, the \emph{frequency} of an itemset $X$ in a dataset $\db$ is defined as
\[
	\freq{X} = \mathit{supp}(X)/|\db| = q_{\db}(X = 1)\;.
\]  

An \emph{indicator function} $\funcdef{S_X}{\trans}{\set{0, 1}}$ of an itemset $X$
maps a transaction $t$ to a binary value such that $S_X(t) = 1$ if and only if
$t$ supports $X$.

The \emph{entropy} of a distribution $p$ over $\trans$ is defined as 
\[
	\ent{p} = - \sum_{t \in \trans} p(A = t) \log p(A = t)\;,
\]
where the base of the logarithm is 2, and by convention $0 \log 0 = 0$.
The entropy of $p$ is the expected number of bits needed to optimally encode a transaction $t$.

Finally, the \emph{Kullback-Leibler divergence} \citep{cover:06:elements} between two distributions $p$ and $q$ over $\trans$ is defined as 
\[
	\kl{p}{q} = \sum_{t\in \ifam{T}} p(A=t) \log \frac{p(A=t)}{q(A=t)}\;.
\]
Intuitively, the KL divergence between two distributions $p$ and $q$ 
is the average number of \emph{extra} bits required to encode 
data generated by $p$ using a coding optimal for $q$. 
Since $p$ defines the optimal coding distribution for this data, on average,
it will always cost extra bits when we encode data generated by $p$ using 
a coding distribution $q\neq p$,
and hence $\kl{p}{q}$ is non-negative. The KL divergence equals 0
if and only if $p$ equals $q$. 

\section{Identifying the Best Summary}
\label{sec:theory}

Our goal is to discover the collection $\coll$ of itemsets and frequencies that is the most informative for a dataset $\db$, while being succinct and as little redundant as possible. 
Here, by informative we mean whether we are able to reliably describe, or predict, the data using these itemsets and their frequencies. 

By non-redundancy we mean that, in terms of frequency, every element of $\coll$ provides significant information for describing the data that cannot be inferred from the rest of the itemset frequencies in $\coll$. 
This is equivalent to requiring that the frequency of an itemset $X \in \coll$ should be surprising with respect to $\coll \setminus X$. 
In other words, we do not want $\coll$ to be unnecessarily complex as a collection, or capture spurious information, since we only want it to contain itemsets that we really need.

Informally, assume that we have a quality score $s(\coll, \db)$ 
which measures the quality of an itemset collection $\coll$ with respect to $\db$. Then our aim is to find that $\coll$ with the highest score $s(\coll, \db)$.
Analogously, if we only want to know $k$ itemsets, we aim to find the collection $\coll$ of size at most $k$, with the highest score $s(\coll, \db)$.

Next, we will detail how we define our models, how we define this score, provide theoretical evidence why it is a good choice, and discuss how to compute it efficiently. 

\begin{Example}
As a running example, assume that we have a binary dataset $\db$ with eight items, $a$ to $h$. 
Furthermore, consider the set of itemsets $\coll = \left\{\mathit{abc}, \mathit{cd}, \mathit{def}\right\}$ with frequencies $0.5$, $0.4$ and $0.8$, respectively.
Assume for the moment that based on $\coll$, our method predicts that the frequency of the itemset $\mathit{agh}$ is $0.19$. 
Now, if we observe in the data that $\freq{\mathit{agh}}=0.18$, then we can safely say that $\mathit{agh}$ is redundant with regard to what we already know, as it does not contribute a lot of novel information, and the slight deviation from the expected value may even be coincidental.
On the other hand, if $\freq{\mathit{agh}}=0.7$, then the frequency of $\mathit{agh}$ is surprising with respect to $\coll$, and hence adding it to $\coll$ would strongly increase the amount of information $\coll$ gives us about the data; in other words $\coll\cup \{\mathit{agh}\}$ provides a substantially improved description of $\db$.
\end{Example}

\subsection{Maximum Entropy Model}

In our approach we make use of maximum entropy models. This is a class of probabilistic models that are identified by the Maximum Entropy principle~\citep{csiszar:75:i-divergence,jaynes:82:rationale}. This principle states that the best probabilistic model is the model that makes optimal use of the provided information, and that is fully unbiased (i.e., fully random, or, maximally entropic) otherwise. 
This property makes these models very suited for identifying informative patterns: by using maximum entropy models to measure the quality of a set of patterns, we know that our measurement only relies on the information we provide it, and that it will not be thrown off due to some spurious structure in the data.
These models have a number of theoretically appealing properties, which we will discuss after a formal introduction.

Assume that we are given a collection of itemsets and corresponding frequencies
\begin{equation}
	\langle \coll, \faai\rangle = \langle\enset{X_1}{X_k}, \enset{f_{1}}{f_{k}}\rangle\;,
\end{equation}
 where $X_{i}\subseteq A$ and $f_{i} \in [0,1]$, for $i=1,\ldots,k$. Note that we do not require that the frequencies $f_{i}$ of the itemsets are equal to the frequencies $\freq{X_{i}}$ in the data. If this does hold, we will call $\langle\coll,\faai\rangle$ \emph{consistent with} $\db$. For notational convenience, we will at times omit writing the frequencies $\faai$, and simply use $\coll=\enset{X_{1}}{X_{k}}$, especially when it is clear what the corresponding frequencies are.
Now, we consider those distributions over the set of all transactions $\trans$ that satisfy the constraints imposed by $\langle\coll,\faai\rangle$. That is, we consider the following set of distributions
\begin{equation}\label{eq:alldistributions}
	\ifam{P}_{\langle\coll,\faai\rangle} = \set{p \mid p(X_i = 1) = f_{i}, \mathrm{\ for\ } i=1,\ldots, k}\;.
\end{equation}
In the case that $\ifam{P}_{\langle\coll,\faai\rangle}$ is empty, we call $\langle\coll,\faai\rangle$ \emph{inconsistent}.
Among these distributions we are interested in only one, namely the unique distribution
maximizing the entropy
\begin{equation*}
	\pemp_{\langle\coll,\faai\rangle} = \argmax_{p \in \ifam{P}_{\langle\coll,\faai\rangle}} \ent{p}\;.
\end{equation*}
Again, for notational convenience we will often simply write $\pemp_{\coll}$,
or even omit $\langle\coll,\faai\rangle$ altogether, especially when they are clear from the context. 

The following famous theorem states that the maximum entropy model has an exponential form. This form will help us to discover the model and will be useful to compute the quality score of a model.

\begin{theorem}[Theorem~3.1 in~\citep{csiszar:75:i-divergence}]
\label{thr:exponential}
Given a collection of itemsets and frequencies $\langle \coll, \faai\rangle = \langle\enset{X_1}{X_k}, \enset{f_{1}}{f_{k}}\rangle$, let  $\mathcal{P}_{\langle \coll, \faai\rangle}$ be the set of distributions as defined in Eq.~\ref{eq:alldistributions}. If there is a distribution in $\mathcal{P}_{\langle \coll, \faai\rangle}$ that has only nonzero entries, then the maximum entropy distribution $\pemp_{\langle \coll, \faai\rangle}$ can be written as
\begin{equation} \label{eq:loglinearmodel}
	\pemp_{\langle \coll, \faai\rangle}\fpr{A = t} = u_0 \prod_{X \in \coll} u_X^{S_X(t)}\,,
\end{equation}
where $u_X \in \mathbb{R}$, and $u_0$ is a normalization factor such that $\pemp_{\langle \coll, \faai\rangle}$ is a proper distribution.
\end{theorem}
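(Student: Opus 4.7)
\medskip

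The plan is to treat this as a constrained optimization problem on the simplex of distributions over $\trans$ and solve it via Lagrange multipliers, using the all-positive hypothesis to guarantee an interior optimum.

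First I would set up the problem: we wish to maximize $\ent{p} = -\sum_{t\in\trans} p(A=t)\log p(A=t)$ subject to the linear equality constraints $\sum_{t\in\trans} S_{X_i}(t)\, p(A=t) = f_i$ for $i=1,\ldots,k$, together with the normalization constraint $\sum_{t\in\trans} p(A=t) = 1$. Since $\ifam{P}_{\langle\coll,\faai\rangle}$ is nonempty (by hypothesis, it contains a strictly positive distribution), and since it is a closed bounded subset of the finite-dimensional simplex on which the strictly concave function $\ent{\cdot}$ is continuous, the maximizer $\pemp$ exists and is unique by strict concavity.

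Next I would derive the exponential form. By the hypothesis, there is a fully positive $p \in \ifam{P}_{\langle\coll,\faai\rangle}$, so the feasible set has nonempty relative interior, and I would argue (using strict concavity of entropy, which drives the optimum away from the boundary where some $p(A=t) = 0$) that $\pemp(A=t) > 0$ for every $t \in \trans$. This is the crucial step that lets me legitimately apply first-order conditions at an interior point. I would form the Lagrangian
\[
	L(p,\lambda_0,\ldots,\lambda_k) = -\sum_{t\in\trans} p(A=t)\log p(A=t) - \lambda_0\fpr{\sum_{t} p(A=t) - 1} - \sum_{i=1}^k \lambda_i \fpr{\sum_{t} S_{X_i}(t)\, p(A=t) - f_i}
\]
and differentiate with respect to $p(A=t)$ for each $t$, yielding $-\log \pemp(A=t) - 1 - \lambda_0 - \sum_i \lambda_i S_{X_i}(t) = 0$. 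Exponentiating gives $\pemp(A=t) = \exp(-1-\lambda_0)\prod_i \exp(-\lambda_i)^{S_{X_i}(t)}$, which is precisely the claimed form with $u_0 = \exp(-1-\lambda_0)$ and $u_{X_i} = \exp(-\lambda_i)$. Since entropy is strictly concave and the constraints are linear, the KKT conditions are both necessary and sufficient, so this stationary point is the unique global maximum.

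The main obstacle, I expect, is handling the positivity condition cleanly: I must use the assumption that some feasible distribution has all nonzero entries to conclude that $\pemp$ itself lies in the interior of the simplex, and therefore that the unconstrained first-order stationarity condition (rather than a KKT condition with active nonnegativity multipliers) applies at every coordinate. Without this, some $\pemp(A=t)$ could vanish, and the exponential representation would break down for those $t$. Once interiority is established, the rest is a routine Lagrange-multiplier calculation, and uniqueness of $u_0$ follows from normalization while the $u_X$'s are determined (up to the finitely many constraints) by matching the $k$ frequency conditions.
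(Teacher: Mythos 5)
You should first note that the paper offers no proof of this statement at all: it is imported verbatim as Theorem~3.1 of Csisz\'{a}r's 1975 paper on $I$-divergence geometry, where it is proved in much greater generality (for $I$-projections onto linear families of measures) by convex-duality arguments. Your elementary, self-contained route---compactness and strict concavity of $\ent{\cdot}$ on the simplex for existence and uniqueness, then interiority, then a coordinate-wise Lagrangian stationarity computation---is a perfectly legitimate alternative in this finite setting where $\trans=\set{0,1}^N$, and is arguably more transparent for a reader who does not need the general measure-theoretic machinery. The Lagrange computation itself is correct, and identifying $u_0=\exp(-1-\lambda_0)$, $u_{X_i}=\exp(-\lambda_i)$ recovers exactly Eq.~\ref{eq:loglinearmodel}; the only cosmetic caveat is that the $u_X$ need not be unique if the constraints are linearly dependent, but the theorem only asserts that \emph{some} such representation exists.

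The one step that needs repair is your justification of interiority. You write that strict concavity of entropy ``drives the optimum away from the boundary,'' but strict concavity alone does not do this: a strictly concave function can perfectly well attain its constrained maximum on the boundary of a polytope. What actually forces $\pemp(A=t)>0$ for every $t$ is the unbounded one-sided derivative of $x\mapsto -x\log x$ at $x=0$, used together with the hypothesis you correctly flagged as essential. Concretely: if $\pemp(A=t_0)=0$ for some $t_0$ and $q\in\ifam{P}_{\langle\coll,\faai\rangle}$ is strictly positive, then the segment $p_\lambda=(1-\lambda)\pemp+\lambda q$ stays feasible by convexity, and the one-sided derivative of $\ent{p_\lambda}$ at $\lambda=0^{+}$ is $+\infty$, since the coordinates where $\pemp$ vanishes contribute an infinite positive rate while the remaining coordinates contribute only a finite amount; this contradicts maximality of $\pemp$. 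With that substitution your argument is complete.
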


Note that the normalization factor $u_{0}$ can be thought of as corresponding to the constraint that the empty itemset $\emptyset$ should have a frequency $f_{\emptyset}=1$.
Theorem~\ref{thr:exponential} has a technical requirement that $\mathcal{P}$
needs to contain a distribution with nonzero entries. The easiest way to
achieve this is to apply a Bayesian shift~\cite{gelman:04:bayesian} by redefining the frequencies
$f_{i}' = (1 - \epsilon)f_{i} + \epsilon 2^{-\abs{X_{i}}}$ for some small $\epsilon>0$. 
This way the frequencies remain approximately the same, but no transaction has zero probability.

\subsection{Identifying the Best Model}
Here we describe how we can quantify the goodness of a pattern collection.

A natural first choice would be to directly measure the goodness of fit, using the log-likelihood of the maximum entropy model, that is, 
\begin{equation*}
	\log \pemp_{\langle\coll,\faai\rangle}\fpr{\db} = \log \prod_{t \in \db} \pemp_{\langle\coll,\faai\rangle}\fpr{A = t} = \sum_{t \in \db} \log \pemp_{\langle\coll,\faai\rangle}\fpr{A = t}\;.
\end{equation*}
Note that if $\langle \coll, \faai\rangle$ is inconsistent, $\pemp$ does not exist. In this case we define the likelihood to be zero, and hence the log-likelihood to be $-\infty$.

The following corollary shows that for exponential models, we can easily calculate the log-likelihood.
\begin{corollary}[of Theorem~\ref{thr:exponential}]
\label{cor:likelihood}
The log-likelihood of the maximum entropy distribution $\pemp_{\langle \coll, \faai\rangle}$ for a collection of itemsets and frequencies $\langle \coll, \faai\rangle$ is equal to
\begin{align*}
	\log \pemp_{\langle \coll, \faai\rangle}\fpr{\db}  &= \abs{\db}\bigg(\log u_0 + \sum_{(X_{i}, f_{i}) \in \langle \coll, \faai\rangle} f_{i} \log u_{X_{i}}\bigg)\;.
\end{align*}
\end{corollary}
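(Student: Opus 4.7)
The proof is essentially a direct substitution of the exponential form given by Theorem~\ref{thr:exponential} into the log-likelihood, followed by an interchange of summation order.

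First, I would use the fact that transactions are modelled as independent draws, so that
\[
    \log \pemp_{\langle \coll, \faai\rangle}\fpr{\db} = \sum_{t \in \db} \log \pemp_{\langle \coll, \faai\rangle}\fpr{A = t}.
\]
Next, I would plug in Eq.~\eqref{eq:loglinearmodel}, which (after taking logarithms) gives
\[
    \log \pemp_{\langle \coll, \faai\rangle}\fpr{A = t} = \log u_0 + \sum_{X_i \in \coll} S_{X_i}(t)\,\log u_{X_i}.
\]
Substituting and swapping the two sums yields
\[
    \log \pemp_{\langle \coll, \faai\rangle}\fpr{\db} = \abs{\db}\log u_0 + \sum_{X_i \in \coll} \log u_{X_i} \sum_{t \in \db} S_{X_i}(t).
\]

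The inner sum $\sum_{t \in \db} S_{X_i}(t)$ is exactly the support of $X_i$ in $\db$, i.e.\ $\abs{\db}\freq{X_i}$. Under the assumption that $\langle \coll, \faai \rangle$ is consistent with $\db$ (which is the setting of interest here, and is implicitly assumed since otherwise $\pemp$ is defined with constraints $f_i$ that differ from the data frequencies), we have $f_i = \freq{X_i}$, and hence $\sum_{t \in \db} S_{X_i}(t) = \abs{\db} f_i$. Factoring out $\abs{\db}$ gives the claimed formula.

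There is no real obstacle in this argument; it is a routine calculation once one notes two things: (i) the exponential form of $\pemp$ turns a product over transactions into a sum of contributions indexed by itemsets, and (ii) the indicator sums aggregate into supports, and hence into $\abs{\db}$ times the frequencies used to define the model. The only point worth making explicit in the write-up is the consistency assumption tying the model frequencies $f_i$ to the empirical frequencies $\freq{X_i}$, without which the last step would instead produce $|D|\sum_i \freq{X_i}\log u_{X_i}$ rather than $|D|\sum_i f_i \log u_{X_i}$.
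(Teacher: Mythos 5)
Your proof is correct and follows essentially the same route as the paper's: substitute the exponential form of Theorem~\ref{thr:exponential}, swap the order of summation, and use $\sum_{t\in\db} S_{X_i}(t) = \supp{X_i} = \abs{\db}\freq{X_i}$. Your explicit remark about the consistency assumption ($f_i = \freq{X_i}$) is a point the paper's own proof glosses over, so making it is a small improvement rather than a deviation.
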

\begin{proof}
See the Appendix.
\end{proof}
Thus, to calculate the log-likelihood of a collection $\langle \coll, \faai\rangle$, it suffices to compute the parameters $u_X$ and $u_0$ of the corresponding distribution $\pemp_{\langle \coll, \faai\rangle}$. 

The following theorem states that if we are searching for collections with a high likelihood, we can restrict ourselves to collections that are consistent with the data.
\begin{theorem}
\label{th:maxlikelihood}
For a fixed collection of itemsets $\coll=\enset{X_{1}}{X_{k}}$, the likelihood $\pemp_{\langle \coll, \faai\rangle}(\db)$ is maximized if and only if $\langle \coll,\faai\rangle$ is consistent with $\db$, that is, $f_{i}=\freq{X_{i}}$ for all $i=1\ldots,k$.
\end{theorem}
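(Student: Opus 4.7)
The strategy is to convert the log-likelihood maximization problem into a KL-divergence minimization problem, and then to exploit the Pythagorean identity for exponential families. This reduces the theorem to a statement about which element of a specific exponential family is closest in KL-divergence to the empirical distribution.

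First I would rewrite the log-likelihood in terms of the empirical distribution $q_{\db}$:
\[
\log \pemp_{\langle \coll, \faai \rangle}(\db) = \sum_{t \in \db} \log \pemp_{\langle \coll, \faai \rangle}(A = t) = -\abs{\db}\bigl(\ent{q_{\db}} + \kl{q_{\db}}{\pemp_{\langle \coll, \faai \rangle}}\bigr),
\]
using the fact that $q_{\db}(t) = \abs{\set{s \in \db \mid s = t}}/\abs{\db}$. Since $\ent{q_{\db}}$ depends only on the data, maximizing the log-likelihood over $\faai$ is equivalent to minimizing $\kl{q_{\db}}{\pemp_{\langle \coll, \faai \rangle}}$ over $\faai$.

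Next, let $\hat{p} = \pemp_{\langle \coll, \hat{\faai}\rangle}$ be the maximum-entropy distribution for the consistent choice $\hat{f}_i = \freq{X_i}$. Note that $q_{\db}$ itself satisfies $q_{\db}(X_i = 1) = \freq{X_i}$, so $q_{\db} \in \ifam{P}_{\langle \coll, \hat{\faai}\rangle}$. I would then establish the Pythagorean identity
\[
\kl{q_{\db}}{\pemp_{\langle \coll, \faai \rangle}} = \kl{q_{\db}}{\hat{p}} + \kl{\hat{p}}{\pemp_{\langle \coll, \faai \rangle}}.
\]
To see this, observe via Theorem~\ref{thr:exponential} that both $\hat{p}$ and $\pemp_{\langle \coll, \faai \rangle}$ take the exponential form of Eq.~\ref{eq:loglinearmodel}; hence their log-ratio $\log(\hat{p}/\pemp_{\langle \coll, \faai \rangle})$ is an affine combination of the indicator functions $S_{X_i}$. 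Because $q_{\db}$ and $\hat{p}$ both assign expectation $\freq{X_i}$ to every $S_{X_i}$, the expectations of this log-ratio agree under $q_{\db}$ and under $\hat{p}$, and rearranging gives the identity.

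The theorem now follows immediately: the first term $\kl{q_{\db}}{\hat{p}}$ does not depend on $\faai$, while the second term is non-negative and vanishes if and only if $\pemp_{\langle \coll, \faai \rangle} = \hat{p}$; by uniqueness of the maximum entropy distribution this is equivalent to $f_i = \hat{p}(X_i = 1) = \freq{X_i}$ for every $i$. The inconsistent case is disposed of by the convention that the log-likelihood is $-\infty$ and hence cannot be the maximum. The main obstacle is a rigorous derivation of the Pythagorean identity, which requires the nonzero-probability hypothesis of Theorem~\ref{thr:exponential}; this is handled cleanly by the Bayesian shift already introduced in the text, or equivalently by working with the information projection in the closure of the exponential family.
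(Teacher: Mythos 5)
Your proof is correct. The paper itself states Theorem~\ref{th:maxlikelihood} without supplying a proof (it is neither proved inline nor in the Appendix), so there is no official argument to compare against; but your route is the natural one and is consistent with the machinery the paper does use elsewhere. The reduction of likelihood maximization to minimizing $\kl{q_{\db}}{\pemp_{\langle\coll,\faai\rangle}}$ is standard, the observation that $q_{\db}\in\ifam{P}_{\langle\coll,\hat{\faai}\rangle}$ is exactly what makes the Pythagorean identity applicable, and your derivation of that identity---the log-ratio of two members of the exponential family of Eq.~\ref{eq:loglinearmodel} is affine in the indicators $S_{X_i}$, whose expectations agree under $q_{\db}$ and $\hat{p}$---is sound. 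The conclusion that the $\faai$-dependent term $\kl{\hat{p}}{\pemp_{\langle\coll,\faai\rangle}}$ is nonnegative and vanishes exactly at the consistent frequencies settles both directions of the ``if and only if,'' and you correctly dispose of the inconsistent case via the paper's convention. Note that the same one-line core of your argument, namely $\tfrac{1}{\abs{\db}}\bigl(\log\hat{p}(\db)-\log\pemp_{\langle\coll,\faai\rangle}(\db)\bigr)=\kl{\hat{p}}{\pemp_{\langle\coll,\faai\rangle}}\geq 0$, is precisely the style of KL-nonnegativity argument the authors use in the Appendix to prove Theorem~\ref{theorem:4.5}, so your proof fits the paper's toolbox. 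Your flagging of the strict-positivity hypothesis of Theorem~\ref{thr:exponential} as the one genuine technical obstacle, and deferring to the Bayesian shift the paper already introduces, is the right call; the only caveat worth adding is that without that shift the closure of the exponential family must be handled with some care, since the log-ratio you take expectations of may be undefined on transactions of probability zero.
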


For our goal, maximum entropy models are theoretically superior over any other  model. Let us discuss why. Let $\db_1$ and $\db_2$ be two datasets such that $\freq{X \mid \db_1}
= \freq{X \mid \db_2}$ for any $X \in \coll$. Let $\pemp_1$ and $\pemp_2$ be
the corresponding maximum entropy models, then, by definition, $\pemp_1 =
\pemp_2$.  In other words, the model depends only the support of the chosen
itemsets.  This is a natural requirement, since we wish to measure the quality
of the statistics in $\coll$ and nothing else. Similarly,
Corollary~\ref{cor:likelihood} implies that $\pemp_1\fpr{\db_2} = \pemp_1\fpr{\db_1}$. This is also a natural property because otherwise, the score would be depending on some statistic not included in $\coll$. Informally said, the scores are equal if we cannot distinguish between $\db_1$ and $\db_2$ using the information $\coll$ provides us.
The next theorem states that among all such models, the maximum entropy model has the best likelihood, in other words, the maximum entropy model uses the available information as efficiently as possible.

\begin{theorem}
\label{theorem:4.5}
Assume a collection of itemsets $\coll = \enset{X_1}{X_k}$ and let $\pemp_\coll$ be the maximum entropy model,
computed from a given dataset $\db$.
Assume also an alternative model $r(A = t \mid f_1, \ldots, f_k)$, where $f_i = \freq{X_i \mid D}$, 
that is, a statistical model parametrized by the frequencies of $\coll$. Assume
that for any two datasets $\db_1$ and $\db_2$, where $\freq{X \mid \db_1}
= \freq{X \mid \db_2}$ for any $X \in \coll$, it holds that
\[
	1/\abs{D_1} \log r(D_1 \mid f_1, \ldots, f_k) = 1/\abs{D_2} \log r(D_2 \mid f_1, \ldots, f_k)\;.
\]
Then $\pemp_{\coll}(\db) \geq r(\db)$ for any dataset $\db$.
\end{theorem}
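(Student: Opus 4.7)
The plan is to show that the log-likelihood gap is proportional to a KL divergence and then invoke its non-negativity. Let $q_\db$ denote the empirical distribution of $\db$ on $\trans$, so that for any distribution $\pi$ on $\trans$ we have $\log \pi\fpr{\db} = \abs{\db}\sum_t q_\db(t)\log \pi(t)$. I will perform the same rewriting for $\pi = \pemp_\coll$ and $\pi = r$, and then compare.

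The first step handles the $\pemp_\coll$ side. By Theorem~\ref{thr:exponential} the log-density $\log \pemp_\coll(t) = \log u_0 + \sum_i S_{X_i}(t)\log u_{X_i}$ is a linear combination of the indicators. Since $q_\db$ and $\pemp_\coll$ share the same marginals $f_i$ on each $X_i$, the expectations of this linear expression under $q_\db$ and under $\pemp_\coll$ coincide, so
$\sum_t q_\db(t)\log \pemp_\coll(t) = \sum_t \pemp_\coll(t)\log \pemp_\coll(t) = -\ent{\pemp_\coll}$, which in particular recovers Corollary~\ref{cor:likelihood}.

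The second step is the analogous exchange for $r$: proving $\sum_t q_\db(t)\log r(t) = \sum_t \pemp_\coll(t)\log r(t)$. The hypothesis on $r$ says that $\sum_t q_{\db'}(t)\log r(t)$ takes one and the same value for every dataset $\db'$ whose frequencies on $\coll$ are $f_1,\ldots,f_k$. Extending this to the distribution $\pemp_\coll$ (which also has those marginals, and which is strictly positive after a Bayesian shift) is achieved by approximating $\pemp_\coll$ with a sequence of empirical distributions having matching marginals and passing to the limit by continuity. Combining both rearrangements gives
\[
\log \pemp_\coll\fpr{\db} - \log r\fpr{\db} = \abs{\db}\sum_t \pemp_\coll(t)\log \frac{\pemp_\coll(t)}{r(t)} = \abs{\db}\,\kl{\pemp_\coll}{r} \geq 0\,,
\]
which yields the desired inequality $\pemp_\coll\fpr{\db} \geq r\fpr{\db}$.

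The main obstacle is the extension step for $r$: the assumption is literally phrased over pairs of datasets, and must be lifted to the (generally non-empirical) distribution $\pemp_\coll$. A clean workaround that bypasses continuity is to argue that the assumption forces $\log r(t)$ to lie in the linear span of $\{1, S_{X_1},\ldots, S_{X_k}\}$ as a function of $t$---any component outside this span could be exploited by redistributing probability mass between two transactions while preserving every marginal $f_i$, changing the expectation and contradicting the hypothesis. Once $\log r$ is known to be linear in the indicators, the swap $q_\db \mapsto \pemp_\coll$ in the expectation is immediate, exactly as in the argument for $\pemp_\coll$ itself.
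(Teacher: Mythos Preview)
Your proposal is correct and follows essentially the same route as the paper: rewrite $\frac{1}{|\db|}\log \pemp_\coll(\db) = -\ent{\pemp_\coll}$ via the exponential form, lift the hypothesis on $r$ from empirical distributions to $\pemp_\coll$ by approximating $\pemp_\coll$ with a sequence of rational distributions in $\ifam{P}_\coll$, and conclude by non-negativity of $\kl{\pemp_\coll}{r}$. The paper makes the approximation step explicit by noting that $\ifam{P}_\coll$ is a polytope with rational defining equations, so rational points (hence empirical distributions with the correct marginals) are dense and reach $\pemp_\coll$. Your ``linear span'' alternative is a neat repackaging of the same fact---if the linear functional $q\mapsto \sum_t q(t)\log r(t)$ is constant on the rational points of $\ifam{P}_\coll$, it is constant on all of $\ifam{P}_\coll$, which is equivalent to $\log r$ lying in $\mathrm{span}\{1,S_{X_1},\ldots,S_{X_k}\}$---but it does not truly bypass the density/continuity step, since you still need rational directions in the tangent space and a rational interior point to produce the two contradicting datasets.
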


\begin{proof}
See the Appendix.
\end{proof}

Using only log-likelihood to evaluate a model, however, suffers from overfitting: larger collections of itemsets will
always provide more information, hence allow for better estimates, and therefore have a better log-likelihood.
Consequently, we need to prevent our method from overfitting. 
In order to do so, we will explore the Bayesian Information Criterion (\textsc{bic}), and the Minimum Description Length (\textsc{mdl}) principle---both of which are well-known and well-founded model selection techniques. We start by discussing \textsc{bic}, which is the least strict, and least involved of the two.

The Bayesian Information Criterion ({\sc bic}) measures the quality of a model by taking both its log-likelihood, and the number of parameters of said model into account. It favors models that fit the data well using few parameters; in our setting, the number of parameters of the model $\pemp$ corresponds exactly to the number of itemsets $k$. It has a strong theoretical support in Bayesian model selection~\citep{schwarz:78:estimating}.

\begin{definition}
Given a collection of itemsets $\langle\coll,\faai\rangle=\langle\{X_{1},\ldots,X_{k}\}, \{f_{1},\ldots,f_{k}\}\rangle$, the {\sc bic} score with respect to a dataset $\db$ is defined as
\begin{equation}
	\bicscore{\langle\coll,\faai\rangle, \db} = - \log \pemp_{\langle\coll,\faai\rangle}\fpr{\db} + k/2 \log \abs{\db}\;.
\end{equation}
\end{definition}

The better a model fits the data, the higher its likelihood. On the other hand, the more parameters the model has---i.e., the more complex it is---the higher its penalty. Therefore, it is be possible that a model that fits the data slightly worse, but contains few parameters, is favored by {\sc bic} over a model that fits the data better, but is also more complex. From Corollary~\ref{cor:likelihood} we see that the first term of the {\sc bic} score is equal to $\abs{\db} H(\pemp_{\langle\coll,\faai\rangle})$. Hence, the likelihood term grows faster than the penalty term with respect to the size of the $\db$. As such, the more data (or evidence) we have, the more complicated the model is allowed to be.

\begin{corollary}[of Theorem~\ref{th:maxlikelihood}]
For a fixed collection of itemsets $\coll=\enset{X_{1}}{X_{k}}$, the {\sc bic} score $\bicscore{\langle\coll,\faai\rangle, \db}$ is minimized if and only if $\langle\coll,\faai\rangle$ is consistent with $\db$, that is, $f_{i}=\freq{X_{i}}$ for all $i$.
\end{corollary}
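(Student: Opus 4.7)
The plan is to observe that the BIC score has two additive components: the negative log-likelihood $-\log \pemp_{\langle\coll,\faai\rangle}(\db)$ and the complexity penalty $(k/2)\log |\db|$. Since the collection $\coll$ is held fixed, its cardinality $k$ is fixed as well, and $|\db|$ is obviously fixed, so the penalty term is a constant independent of the choice of frequencies $\faai$. Therefore, minimizing $\bic$ over $\faai$ reduces to maximizing the log-likelihood over $\faai$.

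Next, I would invoke Theorem~\ref{th:maxlikelihood} directly: it states precisely that for a fixed collection $\coll$, the likelihood $\pemp_{\langle\coll,\faai\rangle}(\db)$ is maximized if and only if $\langle\coll,\faai\rangle$ is consistent with $\db$, meaning $f_i = \freq{X_i}$ for all $i$. Combining this with the first observation, the BIC score is minimized under the same condition.

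I should briefly mention the inconsistent case: if $\langle\coll,\faai\rangle$ is inconsistent, then by the convention introduced before Corollary~\ref{cor:likelihood}, the likelihood is zero and the negative log-likelihood is $+\infty$, so the BIC is $+\infty$ and certainly not minimal. This ensures the statement covers the ``only if'' direction cleanly.

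There is no real obstacle here—the result is a one-line corollary of Theorem~\ref{th:maxlikelihood} together with the observation that the BIC penalty depends only on $k$ and $|\db|$, both of which are fixed when $\coll$ and $\db$ are fixed. The only subtlety worth flagging is making sure the ``if and only if'' is handled symmetrically, which follows immediately from the corresponding ``if and only if'' in Theorem~\ref{th:maxlikelihood}.
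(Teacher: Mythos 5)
Your proposal is correct and follows exactly the paper's argument: the \textsc{bic} penalty term $k/2\log\abs{\db}$ is constant once $\coll$ and $\db$ are fixed, so minimizing \textsc{bic} reduces to maximizing the likelihood, and Theorem~\ref{th:maxlikelihood} supplies the if-and-only-if. The remark about the inconsistent case is a harmless extra detail the paper leaves implicit.
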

\begin{proof}
Follows directly from Theorem~\ref{th:maxlikelihood} and the fact that the {\sc bic} penalty term, $k/2\log\abs{\db}$, does not depend on the frequencies $f_{i}$.
\end{proof}

While the {\sc bic} score helps to avoid overfitting, it is somewhat simplistic. That is, it only incorporates the \emph{number} of itemsets to penalize a summary, and not their complexity. As stated in the introduction, if possible, we would typically rather be given some number of general patterns than the same number of highly involved patterns. 
\textsc{mdl} provides us a means to define a score that also takes into account the complexity of the itemsets in $\coll$.

The Minimum Description Length (\textsc{mdl}) principle \citep{rissanen:78:mdl,grunwald:05:tut}, like its close cousin \textsc{mml} (Minimum Message Length)~\citep{wallace:05:mml}, is a practical version of Kolmogorov Complexity~\citep{vitanyi:93:book}. All three embrace the slogan {\em Induction by Compression}. The \textsc{mdl} principle can be roughly described as follows.

Given a dataset $\db$ and a set of models $\mathcal{M}$ for $\db$, the best model $M \in \mathcal{M}$ is the one that minimizes
\[
L(M) + L(\db \mid M)
\]
in which
\begin{itemize}
\item $L(M)$ is the length, in bits, of the description of the model $M$, and
\item $L(\db \mid M)$ is the length, in bits, of the description of the
      data, encoded with $M$.
\end{itemize}

This is called two-part \textsc{mdl}, or {\em crude} \textsc{mdl}. This stands opposed to
{\em refined} \textsc{mdl}, where model and data are encoded together \citep{grunwald:07:book}.
We use two-part \textsc{mdl} because we are specifically interested in the model: the set of itemsets that yields the best description length. Further, although refined \textsc{mdl} has stronger theoretical foundations, it cannot be computed except for some special cases. We should also point out that refined \textsc{mdl} is asymptotically equivalent to \textsc{bic} if the number
of transactions goes to infinity and the number of free parameters stays
fixed. However, for moderate numbers of transactions there may be significant
differences. Generally speaking, \textsc{mdl} tends to be  more conservative than \textsc{bic}~\cite{grunwald:07:book}.

To use \textsc{mdl}, we have to define what our set of models $\mathcal{M}$ is, how a model $M$ describes a database, and how all of this is encoded in bits. 
Intuitively, we want to favor itemset collections that are small, i.e., collections which can describe the data well, using few itemsets. At the same time, we also prefer collections with small itemsets over collections with large ones.
\begin{definition}
\label{def:mdlscore}
Given a collection of itemsets $\langle\coll,\faai\rangle=\langle\{X_{1},\ldots,X_{k}\}, \{f_{1},\ldots,f_{k}\}\rangle$, let $x=\sum_{i=1}^{k}\abs{X_{i}}$.
We define the {\sc mdl} score of $\langle\coll,\faai\rangle$ with respect to the dataset $\db$ as
\begin{equation} \label{eq:mdlscore}
	\mdlscore{\langle\coll,\faai\rangle, \db} = L(\db \mid \langle\coll,\faai\rangle) + L(\langle\coll,\faai\rangle)\;,
\end{equation}
where 
\[
L(\db \mid \langle\coll,\faai\rangle) = -\log \pemp_{\langle\coll,\faai\rangle}\fpr{\db} \quad \textrm{and} \quad L(\langle\coll,\faai\rangle) = l_{1} k + l_{2} x + 1\;,
\]
with 
\[
	l_{1} = \log\abs{\db} + N\log(1+N^{-1})+ 1 \approx \log\abs{\db} + \log e + 1
\]
and
\[
	l_{2} = \log N\;.
\]
Whenever $\db$ is clear form the context, we simply write $\mdlscore{\langle\coll,\faai\rangle}$.
\end{definition}
The first term is simply the negative log-likelihood of the model, which corresponds to the description length of the data given the maximum entropy model induced by $\coll$. The second part is a penalty term, which corresponds to the description length of the model. It is a linear function of $k=\abs{\coll}$ and $x=\sum_{i}\abs{X_{i}}$, of which the coefficients depend on $N$ and $\abs{\db}$. How it is derived is explained further below.
The smaller this score, the better the model. Given two collections with an equal amount of itemsets, the one containing fewer items is penalized less; conversely, if they have the same total number of items, the one that contains those items in fewer itemsets is favored.
Consequently, the best model is identified as the model that provides a good balance between high likelihood and low complexity. Moreover, we automatically avoid redundancy, since models with redundant itemsets are penalized for being too complex, without sufficiently improving the likelihood.

With this quality score we evaluate collections of itemsets, rather than the (maximum entropy) distributions we construct from them. The reason for this is that we want to summarize the data with a succinct set of itemsets, not model it with a distribution. A single distribution, after all, may be described by many different collections of itemsets, simple or complex. Further, we assume that the set $\ifam{M}$ of models consists of collections of itemsets which are represented as vectors, rather than as sets. This choice keeps the quality score function computationally simple and intuitive, and is not disadvantageous: if $\coll$ contains duplicates, they simply increase the penalty term. Additionally, we impose no restrictions on the consistency of $\langle\coll,\faai\rangle$, that is, there are collections $\langle\coll,\faai\rangle$ for which $\ifam{P}_{\langle\coll,\faai\rangle}$ is empty, and hence the maximum entropy distribution does not exist. As mentioned above, in this case we define the likelihood to be zero, and hence the description length is infinite.

We now describe the derivation of the penalty term, which equals 
\[
	k \log\abs{\db} + x \log N + (k + 1) + kN\log(1+N^{-1})\;.
\]
To describe an itemset we encode a support using $\log \abs{\db}$ bits and the
actual items in the itemsets using $\log N$ bits. This gives us the first two
terms. We use the third term to express whether there are more itemsets,
one bit after each itemset\footnote{Although it is intuitive, and it provides good results in practice, using $1$ bit to signal the end of an itemset is slightly inefficient; it could be further optimized by the decision tree encoding of~\citet{wallace:93:trees}.}. and one extra bit to accommodate the case
$\coll=\emptyset$. The term $kN\log (1+N^{-1})$ is a normalization factor, to ensure that the encoding is \emph{optimal} for the prior distribution over all pattern collections. That is, the encoding corresponds to a distribution
\[
	\prior(\langle\coll,\faai\rangle)=2^{- l_{1} k - l_{2} x - 1}\;,
\]
which assigns high probability to simple summaries, and low probability to complex ones. 
The following equation shows that the above encoding is optimal for this prior.
\begin{align*}
	\sum_{\langle\coll,\faai\rangle} \prior(\langle\coll,\faai\rangle)&= \sum_{k=0}^{\infty}\sum_{x=0}^{kN} \binom{kN}{x}\abs{\db}^{k} 2^{-l_{1} k - l_{2} x - 1} \\
	&= \sum_{k=0}^{\infty} 2^{- k - 1} 2^{- kN \log(1+N^{-1})} 2^{ - k\log\abs{D} } \abs{\db}^{k} \sum_{x=0}^{kN} \binom{kN}{x} N^{-x}\\
	&= \sum_{k=0}^{\infty} 2^{- k - 1} 2^{- kN \log(1+N^{-1})} (1+N^{-1})^{kN}\\
	&=\sum_{k=0}^{\infty} 2^{-k-1} = 1
\end{align*}

The following corollary shows that for identifying the \textsc{mdl} optimal model, it suffices to only consider summaries that are consistent with the data.
\begin{corollary}[of Theorem~\ref{th:maxlikelihood}]
For a fixed collection of itemsets $\coll=\enset{X_{1}}{X_{k}}$, the {\sc mdl} score $\mdlscore{\langle\coll,\faai\rangle, \db}$ is maximized if and only if $\langle\coll,\faai\rangle$ is consistent with $\db$, that is, $f_{i}=\freq{X_{i}}$ for all $i$.
\end{corollary}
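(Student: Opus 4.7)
The plan is to mirror the proof of the analogous corollary for the {\sc bic} score, by appealing to Theorem~\ref{th:maxlikelihood}. First I would observe that by Definition~\ref{def:mdlscore}, the {\sc mdl} score decomposes as
\begin{equation*}
	\mdlscore{\langle\coll,\faai\rangle, \db} = -\log \pemp_{\langle\coll,\faai\rangle}\fpr{\db} + l_{1} k + l_{2} x + 1,
\end{equation*}
where $k = \abs{\coll}$ and $x = \sum_{i=1}^{k} \abs{X_{i}}$. The crucial point is that the penalty term $l_{1} k + l_{2} x + 1$ depends only on $k$ and $x$---structural quantities that are fully determined once $\coll$ is fixed---and in particular it does not depend on the associated frequencies $\faai$. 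Hence, when $\coll$ is held fixed, the penalty is a constant as $\faai$ varies.

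Consequently, optimizing the {\sc mdl} score over $\faai$ reduces to optimizing the first term, $-\log \pemp_{\langle\coll,\faai\rangle}\fpr{\db}$, which in turn is equivalent to optimizing the likelihood $\pemp_{\langle\coll,\faai\rangle}\fpr{\db}$ in the opposite direction. Theorem~\ref{th:maxlikelihood} tells us exactly when this likelihood is maximized: precisely when $\langle\coll,\faai\rangle$ is consistent with $\db$, i.e., $f_{i} = \freq{X_{i}}$ for all $i$. This yields the biconditional.

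The one point requiring a small comment is the case of an inconsistent $\langle\coll,\faai\rangle$, where $\pemp_{\langle\coll,\faai\rangle}$ does not exist; by the convention fixed earlier (likelihood equal to zero, and hence $-\log \pemp = +\infty$), the {\sc mdl} score blows up, so inconsistent choices are trivially dominated by the consistent one and do not affect the statement. There is no real obstacle here---once the penalty term is recognized as independent of $\faai$, the corollary follows immediately from Theorem~\ref{th:maxlikelihood} in a single line, just as in the {\sc bic} case.
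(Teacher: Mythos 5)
Your proof is correct and follows essentially the same route as the paper's: both observe that for fixed $\coll$ the penalty term is constant in $\faai$ (the paper phrases this as the frequencies being encoded with constant length $\log\abs{\db}$, you phrase it as the penalty depending only on $k$ and $x$), and then invoke Theorem~\ref{th:maxlikelihood} to conclude. Your additional remark on the inconsistent case, where the score becomes infinite by convention, is a harmless elaboration that the paper leaves implicit.
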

\begin{proof}
Follows directly from Theorem~\ref{th:maxlikelihood} and the fact that for a fixed $\ifam{C}$, the frequencies $f_{i}$ are encoded with a constant length $\log\abs{\db}$, and hence the penalty term is always the same.
\end{proof}
Therefore, in the rest of this paper we will assume that $\langle\coll,\faai\rangle$ is always consistent with $\db$, and hence we will omit $\faai$ from notation.

\subsection{Reducing Redundancy}

Here we show that our score favors collections of itemsets that exhibit low redundancy, and make a theoretical link with some popular lossless redundancy reduction techniques from the pattern mining literature.
Informally, we define redundancy as anything that does not deviate (much) from our expectation, or in other words is unsurprising given the information that we already have.
The results below hold for {\sc mdl} as well as for {\sc bic}, and hence we write $s$ to denote either score.

A baseline technique for ranking itemsets is to compare the observed frequency
against the expected value of some null hypothesis. The next theorem shows that
if the observed frequency of an itemset $X$ agrees with the expected value $\pemp(X
= 1)$, then $X$ is redundant.

\begin{theorem}
\label{thr:deviate}
Let $\coll$ be a collection of itemsets and let $\pemp$ be the corresponding maximum entropy model. Let $X \notin \coll$ be an itemset such that $\freq{X} = \pemp\fpr{X = 1}$. Then $\score{\coll \cup \set{X}, \db} > \score{\coll, \db}$.
\end{theorem}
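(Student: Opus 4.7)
The plan is to reduce the claim to two independent facts: (i) the likelihood term is unchanged when we append $X$ to $\coll$, and (ii) the penalty term strictly increases. Since both the {\sc bic} and {\sc mdl} scores have the form $s(\coll,\db)= -\log\pemp_{\coll}(\db) + \textrm{pen}(\coll)$, establishing (i) and (ii) simultaneously handles both scores.

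For (i), the key observation is that the maximum entropy distribution does not change when we add a constraint that is already satisfied. Concretely, $\pemp_{\coll}$ lies in the feasible set $\ifam{P}_{\coll\cup\{X\}}$, because it already satisfies the $\coll$-constraints by construction and satisfies $p(X=1)=\freq{X}$ by hypothesis. Moreover $\ifam{P}_{\coll\cup\{X\}}\subseteq \ifam{P}_{\coll}$, so any distribution in the smaller set has entropy at most $\ent{\pemp_{\coll}}$. Hence $\pemp_{\coll}$ is itself a maximizer over $\ifam{P}_{\coll\cup\{X\}}$, and by the uniqueness of the maximum entropy distribution we get $\pemp_{\coll\cup\{X\}}=\pemp_{\coll}$. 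In particular the likelihoods on $\db$ coincide, so the negative log-likelihood term is identical for $\coll$ and $\coll\cup\{X\}$.

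For (ii), the penalty is a function only of the combinatorial shape of the collection. Under {\sc bic}, adding $X$ increments $k$ by one, contributing an extra $\tfrac{1}{2}\log\abs{\db}>0$. Under {\sc mdl}, as defined in Definition~\ref{def:mdlscore}, adding $X$ increments $k$ by one and $x$ by $\abs{X}\geq 1$, contributing an extra $l_1 + l_2\abs{X}>0$. In both cases the penalty term strictly increases.

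Combining (i) and (ii) yields $s(\coll\cup\{X\},\db) > s(\coll,\db)$, as required. I do not expect any real obstacle here; the only point to be careful about is invoking Theorem~\ref{thr:exponential}'s nonzero-entries hypothesis implicitly (e.g.\ via the Bayesian shift mentioned after that theorem) so that $\pemp_{\coll}$ and $\pemp_{\coll\cup\{X\}}$ are guaranteed to exist and be unique, making the identification $\pemp_{\coll\cup\{X\}}=\pemp_{\coll}$ rigorous.
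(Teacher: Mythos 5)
Your proposal is correct and follows essentially the same route as the paper: both show $\pemp_{\coll\cup\{X\}}=\pemp_{\coll}$ via the feasible-set inclusion $\ifam{P}_{\coll\cup\{X\}}\subseteq\ifam{P}_{\coll}$ together with $\pemp_{\coll}\in\ifam{P}_{\coll\cup\{X\}}$ and uniqueness of the entropy maximizer, so the likelihood terms coincide while the penalty strictly increases. Your remark about making the existence/uniqueness hypotheses explicit is a fair point of care that the paper glosses over, but it does not change the argument.
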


\begin{proof}
We will prove the theorem by showing that the likelihood terms for both
collections are equal.  Define the collection $\coll_1 = \coll \cup
\set{X}$ and let $\mathcal{P}_1$ be the corresponding set of
distributions. Let $\pemp_1$ be the distribution maximizing the entropy in
$\mathcal{P}_1$. Note that since $\coll \subset \coll_1$, we have
$\mathcal{P}_1 \subseteq \mathcal{P}$ and hence $\ent{\pemp_1} \leq
\ent{\pemp}$. On the other hand, the assumption in the theorem implies that $\pemp
\in \mathcal{P}_1$ and so $\ent{\pemp} \leq \ent{\pemp_1}$. Thus, $\ent{\pemp}
= \ent{\pemp_1}$ and since the distribution maximizing the entropy is unique,
we have $\pemp = \pemp_1$. This shows that the likelihood terms in
$\score{\coll,\db}$ and $\score{\coll_1,\db}$ are equal. The penalty term
in the latter is larger, which concludes the proof.
\end{proof}

Theorem~\ref{thr:deviate} states that adding an itemset $X$ to $\coll$ improves the
score only if its observed frequency deviates from the expected value. 
The amount of deviation required to lower the score, is determined by the penalty term.
This gives us a convenient advantage over methods that are based solely on deviation,
since they require a user-specified threshold.

Two useful corollaries follow from Theorem~\ref{thr:deviate}, which connect our approach to well-known techniques for removing redundancy from pattern set collections---so-called \emph{condensed representations}.
The first corollary relates our approach to \emph{closed} itemsets~\citep{pasquier:99:discovering}, and \emph{generator} itemsets (also known as \emph{free} itemsets~\citep{boulicaut:03:free}). An itemset is closed if all of its proper supersets have a strictly lower support. An itemset is called a generator if all of its proper subsets have a strictly higher support.

\begin{corollary}[of Theorem~\ref{thr:deviate}]
\label{cor:closed}
Let $\coll$ be a collection of itemsets. Assume that $X, Y \in \coll$ such that $X \subset Y$ and $\freq{X} = \freq{Y} \neq 0$. Assume that $Z \notin \coll$ such that $X \subset Z \subset Y$. Then $\score{\coll \cup \set{Z},\db} > \score{\coll,\db}$.
\end{corollary}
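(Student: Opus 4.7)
The plan is to reduce this corollary directly to Theorem~\ref{thr:deviate} by showing that, under the stated hypotheses, the observed frequency of $Z$ in the data agrees exactly with the expected frequency predicted by the current maximum entropy model $\pemp_{\coll}$. Once this is established, Theorem~\ref{thr:deviate} immediately yields $\score{\coll \cup \set{Z},\db} > \score{\coll,\db}$.

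First I would establish the data-side identity $\freq{Z} = \freq{X} = \freq{Y}$. Since $X \subset Y$ and $\freq{X} = \freq{Y}$, every transaction supporting $X$ also supports $Y$ (both itemsets are supported by exactly the same set of transactions). For $X \subset Z \subset Y$, any transaction supporting $Y$ must support $Z$, so $\freq{Z} \geq \freq{Y}$; conversely, any transaction supporting $Z$ supports $X$, so $\freq{Z} \leq \freq{X}$. Chaining the inequalities with $\freq{X}=\freq{Y}$ gives the equality.

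Next I would establish the analogous identity for the model, namely $\pemp_{\coll}(Z=1) = \freq{Z}$. Because $X, Y \in \coll$, the maximum entropy distribution satisfies the constraints $\pemp_{\coll}(X=1) = \freq{X}$ and $\pemp_{\coll}(Y=1) = \freq{Y}$. Using the indicator-function inclusions $S_Y \le S_Z \le S_X$ (which follow from $X \subset Z \subset Y$) and taking expectations under $\pemp_{\coll}$, I get
\[
\freq{Y} = \pemp_{\coll}(Y=1) \le \pemp_{\coll}(Z=1) \le \pemp_{\coll}(X=1) = \freq{X}.
\]
Combined with $\freq{X}=\freq{Y}=\freq{Z}$, this forces $\pemp_{\coll}(Z=1) = \freq{Z}$.

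With $\freq{Z} = \pemp_{\coll}(Z=1)$ and $Z \notin \coll$, Theorem~\ref{thr:deviate} applies verbatim and concludes the proof. I do not anticipate a real obstacle here; the only mild subtlety is being careful that the argument works for both the {\sc bic} and {\sc mdl} versions of $s$, which is automatic since Theorem~\ref{thr:deviate} was itself stated uniformly for both scores, and the argument above never touched the penalty term.
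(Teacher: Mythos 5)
Your proposal is correct and follows essentially the same route as the paper: the paper's proof likewise observes that any $p$ in $\ifam{P}_{\coll}$ (in particular $\pemp_{\coll}$) must satisfy $p(Z=1)=\freq{Z}$ because $p(X=1)=\freq{X}=\freq{Y}=p(Y=1)$ and $X\subset Z\subset Y$, and then invokes Theorem~\ref{thr:deviate}. Your write-up merely makes explicit the monotonicity and data-side steps that the paper leaves implicit.
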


\begin{proof}
Let $p \in \mathcal{P}$, as defined in Eq.~\ref{eq:alldistributions}. We have that $p(X = 1) = \freq{X} = \freq{Y} = p(Y = 1)$.
Hence we must have $p(Z = 1)=\freq{Z}$. Since $\pemp \in \mathcal{P}$, it must hold that $\pemp(Z = 1)=\freq{Z}$.
The result follows from Theorem~\ref{thr:deviate}.
\end{proof}

Corollary~\ref{cor:closed} implies that if the closure and a generator of an itemset $Z$ are already
in the collection, then adding $Z$ will worsen the score. 
The second corollary
provides a similar relation with \emph{non-derivable} itemsets~\citep{calders:07:ndidami}. 
An itemset is called derivable if its support can be inferred exactly given the supports of all of its proper subsets.

\begin{corollary}[of Theorem~\ref{thr:deviate}]
\label{cor:ndi}
Let $\coll$ be a collection of itemsets. Assume that $X \notin \coll$ is a derivable itemset and all proper sub-itemsets of $X$ are included in $\coll$. Then $\score{\coll \cup \set{X},\db} > \score{\coll,\db}$.
\end{corollary}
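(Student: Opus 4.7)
The plan is to reduce this corollary to Theorem~\ref{thr:deviate} by showing that the derivability hypothesis forces $\pemp(X=1) = \freq{X}$, so that Theorem~\ref{thr:deviate} applies directly. The key observation is that the inclusion--exclusion bounds defining non-derivability are valid for \emph{any} probability distribution, not just the empirical one: if a distribution $p$ matches the frequencies of all proper sub-itemsets of $X$, then $p(X=1)$ is squeezed between the same upper and lower bounds that define derivability.

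Concretely, I would proceed as follows. Let $\mathcal{P}$ be the set of distributions consistent with the frequency constraints of $\coll$, as in Eq.~\ref{eq:alldistributions}. Because every proper sub-itemset $Y \subsetneq X$ lies in $\coll$, every $p \in \mathcal{P}$ satisfies $p(Y=1) = \freq{Y}$. Now apply the standard non-derivability bounds \citep{calders:07:ndidami}: for each $Y \subsetneq X$, the sum $\sum_{Y \subseteq K \subsetneq X} (-1)^{|X \setminus K|+1} p(K=1)$ yields either an upper or a lower bound on $p(X=1)$, depending on the parity of $|X \setminus Y|$. Since $X$ is derivable, the tightest such upper and lower bounds coincide, and they are determined entirely by the values $p(K=1) = \freq{K}$ for $K \subsetneq X$. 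Consequently $p(X=1)$ takes a single value, the same for every $p \in \mathcal{P}$.

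Since the empirical distribution $q_\db$ also lies in $\mathcal{P}$ (by consistency of $\langle\coll,\faai\rangle$), this common value must equal $q_\db(X=1) = \freq{X}$. In particular $\pemp(X=1) = \freq{X}$, so the hypothesis of Theorem~\ref{thr:deviate} is satisfied, and we conclude $\score{\coll \cup \set{X}, \db} > \score{\coll, \db}$.

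The only non-routine step is justifying that the inclusion--exclusion bounds apply to an arbitrary $p \in \mathcal{P}$ rather than just to $q_\db$; but this is immediate because those bounds follow from the nonnegativity of $p$ on the cells of the partition induced by $X$ and its subsets, which holds for any probability distribution. Everything else is bookkeeping plus an appeal to Theorem~\ref{thr:deviate}.
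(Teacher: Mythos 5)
Your proposal is correct and follows essentially the same route as the paper, which proves this corollary by mirroring the proof of Corollary~\ref{cor:closed}: show that the constraints from the proper sub-itemsets force $p(X=1)$ to a single value for every $p \in \ifam{P}$, note that $q_\db \in \ifam{P}$ so this value is $\freq{X}$, conclude $\pemp(X=1)=\freq{X}$, and invoke Theorem~\ref{thr:deviate}. Your added observation---that the inclusion--exclusion bounds defining derivability hold for any distribution matching the sub-itemset frequencies, not just the empirical one---is precisely the detail the paper leaves implicit, and you justify it correctly via nonnegativity of $p$ on the cells of the induced partition.
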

\begin{proof}
The proof of this corollary is similar to the proof of Corollary~\ref{cor:closed}. 
\end{proof}

An advantage of our method is that it can avoid redundancy in a very general way. 
The closed and non-derivable itemsets are two types of lossless representations,
whereas our method additionally can give us lossy redundancy removal.
For example, in the context of Corollary~\ref{cor:closed}, we can choose to reject $X$ 
from $\coll$ even if $\freq{X}$ does not equal $\freq{Y}$ exactly, and as such
we can prune redundancy more aggressively.

\subsection{Efficiently Computing the Maximum Entropy Model}
\label{sec:modelcompute}
Computing the maximum entropy model comes down to
finding the $u_{X}$ and $u_{0}$ parameters from Theorem~\ref{thr:exponential}.
To achieve this, we use the well-known Iterative Scaling procedure by \citet{darroch:72:generalized}, 
which is given here as Algorithm~\ref{algo:iterscale}. 
Simply put, it iteratively updates the parameters of an exponential distribution, until it converges to the maximum entropy 
distribution $\pemp$ which satisfies a given set of constraints---itemset frequencies in our case.
The distribution is initialized with the uniform distribution, which is done by
setting the $u_{X}$ parameters to $1$, and $u_{0}=2^{-N}$ to properly normalize the distribution. 
Then, for each itemset $X\in\coll$, we adjust the corresponding parameter $u_{X}$ to enforce $p(X = 1)=\freq{X}$ (line~\ref{algo:paramupdate},\ref{algo:normalize}).
This process is repeated in a round robin fashion until $p$ converges, and it can be shown (see \citet{darroch:72:generalized})
that $p$ always converges to the maximum entropy distribution $\pemp$\!. 
Typically the number of iterations required for convergence is low (usually $<$10 in our experiments).

\begin{algorithm}[h]
\DontPrintSemicolon
\Input{itemset collection $\coll=\{X_{1}, \ldots, X_{k}\}$, frequencies $\freq{X_{1}}, \ldots, \freq{X_{k}}$ } 
\Output{parameters $u_{X}$ and $u_{0}$ of the maximum entropy distribution $\pemp_{\coll}$ satisfying $\pemp_{\coll}(X_{i})=\freq{X_{i}}$ for all $i$}

initialize $p$\; \nllabel{algo:inititerscale}
\Repeat{$p$ converges} {
	\For {\EACH $X$ in $\coll$} {
		compute $p(X=1)$\; \nllabel{algo:inferset}
		$u_{X} \define u_{X} \frac{\freq{X}}{p(X = 1)} \frac{1-p(X = 1)}{1-\freq{X}}$\; \nllabel{algo:paramupdate}
		$u_{0} \define u_{0} \frac{1-\freq{X}}{1-p(X = 1)}$\; \nllabel{algo:normalize}
	}
}
\Return {$p$}\;
\caption{\textsc{IterativeScaling}($\coll$)
}
\label{algo:iterscale}
\end{algorithm}

\begin{Example}
In our running example, with $\coll = \{\mathit{abc}, \mathit{cd}, \mathit{def}\}$, 
the maximum entropy model has three parameters $u_{1}, u_{2}, u_{3}$, 
and a normalization factor $u_{0}$. Initially we set $u_{1}=u_{2}=u_{3}=1$
and $u_{0}=2^{-N}=2^{-8}$. Then, we iteratively loop over the itemsets and scale
the parameters. For instance, for the first itemset $\mathit{abc}$ with frequency $0.5$,
we first compute its current estimate to be $2^{-3}=0.125$. 
Thus, we update the first parameter $u_{1}= 1 \cdot (0.5 / 2^{-3}) \cdot ((1-2^{-3}) / 0.5) = 7$. 
The normalization factor becomes $u_{0}=2^{-8} \cdot 0.5 /(1-2^{-3}) \approx 2.2 \cdot 10 ^{-3}$.
Next, we do the same for $\mathit{cd}$, and so on. After a few iterations, the model parameters converge to
$u_{1}=28.5$, $u_{2}=0.12$, $u_{3}=85.4$ and $u_{0}=3\cdot10^{-4}$.
 \end{Example}

Straightforward as this procedure may be, the greatest computational bottleneck is the inference of the probability of an itemset on line~\ref{algo:inferset} of the algorithm,
\begin{equation}
\label{eq:inference1}
	p (X = 1) = \sum_{t \in \trans \atop S_{X}(t) = 1} p(A = t)\;.
\end{equation}
Since this sum ranges over all possible transactions supporting $X$,
it is infeasible to calculate by brute force, even for a moderate number $N$ items.
In fact, it has been shown that querying the maximum entropy model is \textbf{PP}-hard in general~\citep{tatti:06:computational}.

Therefore, in order to be able to query the model efficiently, we introduce a partitioning scheme, 
which makes use of the observation that many transactions have the same probability in the maximum entropy distribution. 
Remark that an itemset collection $\coll$ partitions $\trans$ into blocks of transactions
which support the same set of itemsets. That is, two transactions $t_{1}$ and $t_{2}$ belong 
to the same block $T$ if and only if $S_{X}(t_{1})=S_{X}(t_{2})$ for all $X$ in $\coll$. 
Therefore, we know that $p(A=t_{1})=p(A=t_{2})$ if $p$ is of the form in Eq.~\ref{eq:loglinearmodel}.
This property allows us to define $S_X(T) = S_X(t)$ for any $t \in T$ and $X \in \coll$.
We denote the partition of $\trans$ induced by $\coll$ as $\trans_{\coll}$.
Then the probability of an itemset is
\begin{equation}
\label{eq:inference2}
	p(X = 1) = \sum_{T \in \trans_{\coll} \atop S_{X}(T)=1} p(A \in T)\;.
\end{equation}
The sum in Eq.~\ref{eq:inference1} has been reduced to a sum over blocks of transactions, and
the inference problem has been moved from the transaction space $\trans$ to the block space $\trans_{\coll}$.
In our setting we will see that $|\trans_{\coll}| \ll |\trans|$, which makes inference a lot more feasible. 
In the worst case, this partition may contain up to $2^{|\coll|}$ blocks, however,
through the interplay of the itemsets, it can be as low as $|\coll|+1$.
As explained below, we can exploit, or even choose to limit, the structure of $\coll$, such that practical computation is guaranteed.

All we must do now is obtain the block probabilities $p(A \in T)$.  Since all transactions $t$ in a block $T$ have the same probability
\[
p(A=t)=u_{0} \prod_{X \in \coll} u_{X}^{S_{X}(t)}\;,
\]
it suffices to compute the number of transactions in $T$ to get $p(A \in T)$.
So, let us define $\exact{T}$ to be the number of transactions in $T$, then 
\[
	p(A \in T) = \sum_{t \in T} p(A = t) = \exact{T} u_{0} \prod_{X \in \coll} u_{X}^{S_{X}(T)}\;.
\]

\begin{algorithm}[tb!]
\DontPrintSemicolon
\Input{itemset collection $\coll=\{X_{1}, \ldots, X_{k}\}$}
\Output{block sizes $e(T)$ for each $T$ in $\trans_{\coll}$}

\For{$T$ in $\mathcal{T_{\coll}}$} {
	$I \define \bigcup \{X \mid X \in \sets{T ; \coll}$\}\;
	$\cum{T} \define 2^{N - \abs{I}}$\; \nllabel{algo:cumuprob}
}
sort the blocks in $\mathcal{T_{\coll}}$\; \nllabel{algo:sortblocks}
\For {$T_i$ in $\mathcal{T_{\coll}}$} {
	$\exact{T_i} \define \cum{T_{i}}$\;
	\For {$T_j$ in $\mathcal{T_{\coll}}$, with $j < i$} {
		\If{$T_i \subset T_j$} {
			$\exact{T_i} \define \exact{T_i} - \exact{T_j}$\;
		}
	}
}

\Return $\trans_{\coll}$\;

\caption{{\sc ComputeBlockSizes}($\coll$)
}
\label{algo:blocksizes}
\end{algorithm}

Algorithm~\ref{algo:blocksizes} describes \textsc{ComputeBlockSizes}, as given  in \citep{mampaey:11:tell}.
In order to compute the block sizes $\exact{T}$, we introduce a partial order on $\trans_{\coll}$.
Let 
\[
	\sets{T ; \coll} = \set{X \in \coll\mid S_{X}(T)=1}
\]
be the itemsets of $\coll$ that occur in the transactions of $T$. 
Note that every block corresponds to a unique subset of $\coll$; conversely a subset of $\coll$ either corresponds to an empty block of transactions, or to a unique nonempty transaction block.
We can now define the partial order on $\trans_{\coll}$ as follows, 
\[
T_{1} \subseteq T_{2}\quad\mathrm{if\ and\ only\ if}\quad\sets{T_1 ; \coll} \subseteq \sets{T_2; \coll}\;.
\] 
In order to compute the size $\exact{T}$ of a block, we start from its \emph{cumulative} size,
\begin{equation}
\cum{T} = \sum_{T' \supseteq T} \exact{T'}\;,
\end{equation}
which is the number of transactions that contain \emph{at least} all the itemsets in $\sets{T ; \coll}$. 
For a given block $T$, let $I = \bigcup\{X \mid X\in\sets{T ; \coll}\}$.
That is, $I$ are the items that occur in all transactions of $T$.
Then it holds that $c(T) = 2^{N - |I|}$, where $N$ is the total number of items.
To obtain the block sizes $\exact{T}$ from the cumulative sizes $\cum{T}$, we use the
Inclusion-Exclusion principle. 
To that end, the blocks are topologically sorted such that if $T_{2} \subset T_{1}$, then $T_{1}$ occurs before $T_{2}$.
The algorithm then reversely iterates over the blocks in a double loop,
subtracting block sizes, using the identity 
\begin{equation}
\exact{T} = \cum{T} - \sum_{T' \supsetneq T} \exact{T'}\;.
\end{equation}

\begin{Example}
\label{example:regularsubtract}
Assume again that we have a dataset with eight items ($a$ to $h$\hspace{0.1em}), and an itemset collection containing 
three itemsets $\coll = \left\{\mathit{abc}, \mathit{cd}, \mathit{def} \right\}$ with frequencies $0.5$, $0.4$ and $0.8$, respectively. 

Table~\ref{fig:blocks} shows the sizes of the transaction blocks. Note that while there are 256 transactions in $\mathcal{T}$,
there are only 7 blocks in $\trans_{\coll}$, whose sizes and probabilities are to be computed (the eighth combination 
`$\mathit{abc}$ and $\mathit{def}$ but not $\mathit{cd}$' is clearly impossible).

Let us compute the size of the first three blocks.
For the first block, $I=\mathit{abcdef}$ and therefore $\cum{T}=4$, for the second block $I=\mathit{abcd}$, and for the third block $I=\mathit{abc}$.
Since the first block is the maximum with respect to the order $\subseteq$, its cumulative size is simply its size, so $\exact{T}=4$.
For the second block, we subtract the first block, and obtain $\exact{T}=16-4=12$. 
From the third block we subtract the first two blocks, and we have $\exact{T}=32-12-4=16$.
Now, to compute, say, $p(\mathit{abc}=1)$, we simply need the sizes of the blocks containing $abc$,
and the current model parameters,
\[
p(\mathit{abc}=1) = 4 (u_{0} u_{1} u_{2} u_{3}) + 12 (u_{0} u_{1} u_{2}) + 16 (u_{0} u_{1})\;.
\]

\begin{table}[]
\caption{Transaction blocks for the running example above, with $X_{1}=\mathit{abc}$, $X_{2}=\mathit{cd}$, and $X_{3}=\mathit{def}$.}
\label{fig:blocks}
\centering
\begin{tabular}{ccccrrcr}
\toprule
$X_{1}$ & $X_{2}$ & $X_{3}$ && $c(T)$ & $e(T)$ && $p(A=t)$\\
\midrule
1&1&1 && 4 & 4 && $u_{0} u_{1} u_{2} u_{3}$ \\
1&1&0 && 16 & 12 && $u_{0} u_{1} u_{2}$ \\
1&0&0 && 32 & 16 && $u_{0} u_{1}$ \\
0&1&1 && 16 & 12 && $u_{0} u_{2} u_{3}$ \\
0&1&0 && 64 & 36 && $u_{0} u_{2}$ \\
0&0&1 && 32 & 16 && $u_{0} u_{3}$ \\
0&0&0 && 256 & 160 && $u_{0}$ \\
\bottomrule
\end{tabular}
\end{table}
\end{Example}

Since the algorithm performs a double loop over all transaction blocks, the complexity of {\sc ComputeBlockSizes} equals $O(\abs{\trans_{\coll}}^{2})$. Note that topologically sorting the blocks (line~\ref{algo:sortblocks}) takes $O(\abs{\trans_{\coll}} \log \abs{\trans_{\coll}}) \leq O(\abs{\trans_{\coll}} k)$, however, we can also simply ensure that the blocks are topologically sorted by construction.

In this work, we substantially improve upon the {\sc ComputeBlockSizes} algorithm, by using a generalized version of the Quick Inclusion-Exclusion ({\sc qie}) algorithm, introduced by \citet{calders:05:quick}. The new algorithm presented here, called {\sc QieBlockSizes}, has a lower complexity than {\sc ComputeBlockSizes}.
The idea behind Quick Inclusion-Exclusion is to reuse intermediate results to reduce the number of subtractions.
The standard {\sc qie} algorithm computes the supports of all generalized itemsets based on some given itemset of size $k$ (a generalized itemset is an itemset containing both positive and negative items, e.g., $a\overline{b}$ means $a$ and not $b$), using the supports of all of its (positive) subsets. For instance, from the supports of $\mathit{ab}$, $\mathit{a}$, $\mathit{b}$, and the empty set, we can infer the support of $\overline{\mathit{ab}}$: $\supp{\overline{ab}}=\supp{\emptyset}-\supp{a}-\supp{b}+\supp{ab}$. {\sc qie} therefore works on an array of size $2^{k}$, which allows an implementation of the algorithm to employ efficient array indexing using integers, and makes it easy to locate subsets using bit operations on the indices---where a positive item is represented by a 1 and a negative item by a 0. 

In our setting, we want to find the sizes of transaction blocks which correspond to subsets of $\coll$, starting from the cumulative sizes of said blocks. We can represent each block $T$ by a binary vector defined by the indicator functions $S_{X}$. However, an important difference with the {\sc qie} algorithm is that not every possible binary vector necessarily corresponds to a (nonempty) transaction block, i.e., it is possible that $\abs{\trans_{\coll}} < 2^{k}$. Clearly, if $\abs{\trans_{\coll}}\ll 2^{k}$, it would be inefficient to use an array of size $2^{k}$. Therefore, we must take this fact into account. Before we can discuss the algorithm itself, we first need to introduce the following definitions.

\begin{definition}
Given a collection of itemsets $\coll=\langle X_{1}, \ldots, X_{k}\rangle$ and an integer $j\in\{0,\ldots,k\}$, the \emph{$j$-prefix} of $\coll$ is defined as
\[
\coll_{j} = \{X_{1},\ldots,X_{j}\}\;.
\]
For a subcollection $\ifam{G}$ of $\coll$, we define the $\mathit{closure} : 2^{\coll}\rightarrow 2^{\coll}$ as
\[
	\closure{\ifam{G}} = \{X_{i} \in \coll \mid X_{i}\subseteq\bigcup_{X\in \ifam{G}}X \}\;.
\]
The \emph{$j$-closure} of $\ifam{G}$ is defined as
\begin{align*}
	\closure{\ifam{G}, j} &= \ifam{G} \cup \{X_{i} \in \coll \mid X_{i}\notin\coll_{j} \mathrm{\ and\ } X_{i}\subseteq\bigcup_{X\in \ifam{G}}X \}\\
	& = \ifam{G} \cup \left( \closure{\ifam{G}} \setminus \coll_{j} \right)
\end{align*}
\end{definition}

The following lemma states that there is a one-to-one mapping between the closed subsets $\ifam{G}$ of $\coll$, and the transaction blocks of $\trans_{\coll}$ .

\begin{lemma}
\label{lem:closure1}
Let $\ifam{G}$ be an itemset collection. Then $\ifam{G} = \closure{\ifam{G}}$ if and only if
there exists a block $T$ in $\trans_{\coll}$ such that $\ifam{G} = \sets{T; \coll}$.
\end{lemma}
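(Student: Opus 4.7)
The plan is to prove both directions by viewing $\sets{T;\coll}$ as recording precisely which itemsets the transactions of block $T$ support, and by exhibiting an explicit ``canonical'' transaction witnessing the reverse direction.

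For the ($\Leftarrow$) direction, I would fix a block $T$ and let $\ifam{G}=\sets{T;\coll}$, and then show $\closure{\ifam{G}}\subseteq\ifam{G}$ (the reverse inclusion $\ifam{G}\subseteq\closure{\ifam{G}}$ is immediate from the definition, since every $X\in\ifam{G}$ trivially satisfies $X\subseteq\bigcup_{Y\in\ifam{G}} Y$). So take $X_i\in\closure{\ifam{G}}$; then $X_i\subseteq\bigcup_{X\in\ifam{G}} X$. Pick any $t\in T$: by definition of $\ifam{G}$, the transaction $t$ has a $1$ on every item appearing in any $X\in\ifam{G}$, hence in particular on every item of $X_i$. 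Thus $S_{X_i}(t)=1$, so $X_i\in\sets{T;\coll}=\ifam{G}$.

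For the ($\Rightarrow$) direction, assume $\ifam{G}=\closure{\ifam{G}}$, and construct the canonical transaction $t^*\in\trans$ defined by $\pi_a(t^*)=1$ iff $a\in\bigcup_{X\in\ifam{G}} X$. For each $X_i\in\coll$, one has $S_{X_i}(t^*)=1$ iff $X_i\subseteq\bigcup_{X\in\ifam{G}} X$, which by definition is exactly the condition $X_i\in\closure{\ifam{G}}=\ifam{G}$. Letting $T$ be the (nonempty) block of $\trans_{\coll}$ containing $t^*$, we get $\sets{T;\coll}=\{X_i\in\coll\mid S_{X_i}(t^*)=1\}=\ifam{G}$, as required.

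There is no real obstacle here; the proof is essentially an unwinding of the definitions. The only subtlety worth a sentence is confirming that the block containing $t^*$ is indeed a legitimate member of $\trans_{\coll}$ (i.e., nonempty), which is immediate since it contains $t^*$ by construction. This lemma will then be the conceptual bridge that lets the subsequent {\sc QieBlockSizes} algorithm enumerate blocks by enumerating closed subcollections of $\coll$, rather than all $2^k$ subsets.
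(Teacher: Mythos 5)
Your proof is correct and takes essentially the same route as the paper: for the forward direction the paper constructs the very same canonical transaction with $1$'s exactly on $\bigcup_{X \in \ifam{G}} X$, and for the reverse direction it likewise picks an arbitrary $t \in T$ and unwinds the definition of $\closure{\ifam{G}}$. No gaps.
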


\begin{proof}
Assume that $\ifam{G} = \closure{\ifam{G}}$. Let $U = \bigcup_{X \in \ifam{G}}
X$ and let $t \in \trans$ be such that $t_i = 1$, if $a_i \in U$, and $t_i = 0$ otherwise.
Let $T \in \trans_\coll$ be the block containing $t$. If $X \in \ifam{G}$, then $S_X(t) = 1$.
On the other hand, if $S_X(t) = 1$, then $X \subseteq U$ and consequently $X \in \closure{\ifam{G}} = \ifam{G}$.
Hence, $\ifam{G} = \sets{T; \coll}$.

Assume now that there is a $T$ such that $\ifam{G} = \sets{T; \coll}$, let $t \in
T$ and $U = \bigcup_{X \in \ifam{G}} X$. It follows that $S_U(t) = 1$.  Let $X
\in \closure{\ifam{G}}$, then $X \subseteq U$ and $S_X(t) = 1$.  Hence, $X \in
\sets{T; \coll} = \ifam{G}$. Since $\ifam{G} \subseteq \closure{\ifam{G}}$, the lemma follows.
\end{proof}

Using the above lemma, we can introduce the following function, which maps subsets of $\coll$ to their corresponding blocks.
\begin{definition}
For a subset $\ifam{G}$ of a collection of itemsets $\coll$, we define
\[
	\block{\ifam{G}} = 
	\begin{cases}
		T \in \trans_{\coll} \mathrm{\ s.t.\ } \sets{T;\ifam{C}}=\ifam{G} & \mathrm{if\ } \closure{\ifam{G}}=\ifam{G}\;, \\
		\emptyset & \mathrm{otherwise}\;.\\
	\end{cases}
\]
That is, if $\ifam{G}$ is closed, the $\mathit{block}$ function simply maps it to the corresponding block in $\trans_{\coll}$. If $\ifam{G}$ is not closed, it is mapped to the empty transaction block.
Note that $\block{\sets{T;\coll}}=T$ for all $T\in\trans_{\coll}$. 
\end{definition}

\begin{algorithm}[t]
\caption{{\sc QieBlockSizes}($\coll$)}
\label{algo:qie}
\DontPrintSemicolon
\SetKwInOut{Input}{input}\SetKwInOut{Output}{output}

\Input{itemset collection $\coll=\{X_{1}, \ldots, X_{k}\}$}
\Output{block sizes $e(T)$ for each $T$ in $\trans_{\coll}$}

\For{$T$ in $\mathcal{T_{\coll}}$} {
	$I \define \bigcup \{X \mid X \in \sets{T ; \coll}$\}\;
	$\cum{T} \define 2^{N - \abs{I}}$\; \nllabel{algo:qie:cumuprob}
}

\For {$i=1,\ldots,k$} {\nllabel{algo:qie:iterk}
	\ForEach{$T \mathrm{\ in\ } \ifam{T}_{\coll}$}{
		$\ifam{G} \leftarrow \sets{T;\coll}$\;
		\If{$X_{i} \notin \ifam{G}$}{
			$\ifam{G}' \leftarrow \closure{\ifam{G}\cup \{X_{i}\}, i - 1}$\;
			$T' \leftarrow \block{\ifam{G}'}$\;
			\If{$T' \neq \emptyset$}{ \nllabel{algo:qie:findblock}
				$e(T) \leftarrow e(T) -e(T')$\;
			}
		}
	}
}
\end{algorithm}

{\sc QieBlockSizes} is given as Algorithm~\ref{algo:qie}. As before, we first compute the cumulative size of every block $T$ in $\trans_{\coll}$ (line~\ref{algo:qie:cumuprob}). Then, for each itemset $X_{i}$ (line~\ref{algo:qie:iterk}), the algorithm subtracts from each block $T$ for which $X_{i}\notin \ifam{G}=\sets{T;\coll}$, the current size of the block $T'$ corresponding to $\ifam{G}'=\closure{\ifam{G} \cup \{X_{i}\}, i-1}$ if $T'$ exists in $\trans_{\coll}$, i.e., the size of $T'=\block{\ifam{G}'}$. 

The following theorem states that {\sc QieBlockSizes} correctly computes the sizes of all blocks of transactions in $\trans_{\coll}$. For the proof, please refer to the Appendix.

\begin{theorem}
\label{theorem:itscorrect}
Given a collection of itemsets $\coll=\{X_{1},\ldots,X_{k}\}$, let $\trans_{\coll}$ be the corresponding partition with respect to $\coll$.
The algorithm {\sc QieBlockSizes} correctly computes the block sizes $e(T)$ for $T\in \trans_{\coll}$.
\end{theorem}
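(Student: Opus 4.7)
The plan is to prove a loop invariant by induction on the outer iteration counter $i$. Write $e_i(T)$ for the value of $e(T)$ after the $i$-th iteration, and define
\[
\mathcal{S}_i(T) \;:=\; \{T''\in \trans_{\coll} : T'' \supseteq T,\ \sets{T'';\coll}\setminus\sets{T;\coll}\subseteq \{X_{i+1},\ldots,X_k\}\}.
\]
I would show $e_i(T) = \sum_{T''\in\mathcal{S}_i(T)}\exact{T''}$ for all $i$. The base case $i=0$ is immediate from $e_0(T)=\cum{T}=\sum_{T''\supseteq T}\exact{T''}$, and the conclusion of the theorem is exactly the case $i=k$, since then $\sets{T'';\coll}=\sets{T;\coll}$ forces $T''=T$ via Lemma~\ref{lem:closure1}, leaving only $\exact{T}$ on the right.

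For the inductive step, I would fix $T$ and split on whether $X_i\in\sets{T;\coll}$. If so, the algorithm does not touch $e(T)$, and indeed $\mathcal{S}_{i-1}(T)=\mathcal{S}_i(T)$ because $X_i$ can never appear in $\sets{T'';\coll}\setminus\sets{T;\coll}$. If not, the algorithm subtracts $e_{i-1}(T')$ for $T'=\block{\closure{\sets{T;\coll}\cup\{X_i\},i-1}}$, and the task reduces to the set identity
\[
\mathcal{S}_{i-1}(T)\setminus \mathcal{S}_i(T) \;=\; \mathcal{S}_{i-1}(T');
\]
the induction hypothesis applied to $T'$ then supplies exactly the term being subtracted. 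The hypothesis is legitimate here because every $T'$ referenced in case~B satisfies $X_i\in\sets{T';\coll}$, so $T'$ falls under case~A within the current iteration and its stored value is still $e_{i-1}(T')$.

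One direction of the set identity is straightforward from $T'\supseteq T$ and $X_i\in\sets{T';\coll}$. The other direction uses closedness of $\sets{T'';\coll}$: together with $\sets{T;\coll}\cup\{X_i\}\subseteq\sets{T'';\coll}$, closedness forces every $X_\ell$ with $\ell\geq i$ that is contained in $\bigcup_{X\in\sets{T;\coll}\cup\{X_i\}}X$ into $\sets{T'';\coll}$, which is exactly the inclusion $\sets{T';\coll}\subseteq \sets{T'';\coll}$, i.e.\ $T''\supseteq T'$.

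The step I expect to be the main obstacle is the degenerate case $T'=\emptyset$, where $\closure{\sets{T;\coll}\cup\{X_i\},i-1}$ fails to be fully closed and no subtraction is performed. I would argue that in this case $\mathcal{S}_{i-1}(T)\setminus \mathcal{S}_i(T)=\emptyset$, so the algorithm remains correct. Non-closedness supplies some $X_\ell$ with $\ell<i$ and $X_\ell\notin\sets{T;\coll}$ lying inside $\bigcup_{X\in\sets{T;\coll}\cup\{X_i\}}X$; any candidate $T''$ in $\mathcal{S}_{i-1}(T)\setminus \mathcal{S}_i(T)$ would, again by closedness of $\sets{T'';\coll}$, be forced to contain $X_\ell$, contradicting $\sets{T'';\coll}\setminus\sets{T;\coll}\subseteq\{X_i,\ldots,X_k\}$ since $\ell<i$. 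The asymmetric definition of the $(i-1)$-closure is precisely what makes this case delicate and carries most of the content of the proof.
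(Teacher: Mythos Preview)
Your proposal is correct and follows essentially the same inductive strategy as the paper. The only difference is one of bookkeeping: the paper phrases the invariant at the transaction level, defining $S^i(\ifam{G})=\{t\in\trans : S_X(t)=1\text{ for }X\in\ifam{G},\ S_X(t)=0\text{ for }X\in\coll_i\setminus\ifam{G}\}$ and proving $e^i(T)=|S^i(\sets{T;\coll})|$, whereas you work at the block level and sum $\exact{T''}$ over your $\mathcal{S}_i(T)$; the two invariants are equivalent since the transactions in $S^i(\sets{T;\coll})$ are exactly the disjoint union of the blocks in your $\mathcal{S}_i(T)$. Your treatment of the degenerate case $T'=\emptyset$ is the contrapositive of the paper's (they pick a transaction in the difference and use Lemma~\ref{lem:closure3} to force $\ifam{G}'$ closed; you pick the offending $X_\ell$ witnessing non-closedness and rule out any block in the difference), and is arguably a little more direct.
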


\begin{Example}
Let us apply {\sc QieBlockSizes} to our running example. Recall that $\coll= \{\mathit{abc}, \mathit{cd}, \mathit{def} \}$ (see Table~\ref{fig:blocks}). For brevity, we restrict ourselves to the first three blocks. In step 1, the first three blocks remain unaffected, since they all contain $X_{1}$. In step 2, only the third block does not contain $X_{2}$; we subtract the second block from it, to obtain $32-16=16$. In step 3, we subtract the first block from the second block, and get $16-4=12$. From the third block we do not have to subtract anything, since the $3$-closure of the corresponding block does not appear in $\trans_{\coll}$. We have thus calculated the sizes of the first three blocks using two subtractions, rather than three, as was previously required in Example~\ref{example:regularsubtract}.
\end{Example}

Finally, we can significantly optimize the algorithm as follows. Assume that we can
divide $\coll$ into two disjoint groups $\coll_1$ and $\coll_2$, such that if
$X_1 \in \coll_1$ and $X_2 \in \coll_2$, then $X_1 \cap X_2 = \emptyset$.
Let $B = \bigcup \coll_1$ be the set of items occurring in $\coll_{1}$.
Theorem~\ref{thr:exponential} implies that $\pemp(A) = \pemp(B)\pemp(A \setminus B)$. 
In other words, the maximum entropy distribution can be factorized into two
\emph{independent} distributions, namely $\pemp(B)$ and $\pemp(A \setminus B)$, and more
importantly, the factor $\pemp(B)$ depends only on $\coll_1$. Consequently,
if we wish to compute the probability $\pemp(X = 1)$ such that $X \subset B$, we can
ignore all variables outside $B$ and all itemsets outside $\coll_1$.
The number of computations to be performed by \textsc{QieBlockSizes}
can now be greatly reduced, since in the case of independence it holds that
$\abs{\trans_{\coll}} = \abs{\trans_{\coll_{1}}} \times \abs{\trans_{\coll_{2}}}$,
and we can simply compute the block sizes for $\trans_{\coll_{1}}$
and $\trans_{\coll_{2}}$ separately. 
Naturally, this decomposition can also be applied when there are more than two disjoint groups of itemsets.

Moreover, in order to \emph{guarantee} that we can apply the above separation, we could reduce the solution space slightly by imposing a limit on the number of items (or itemsets) per group, such that the number of blocks for each group remains small. 
Alternatively, we could first partition the items of the dataset into smaller, approximately independent groups, and subsequently apply the algorithm for each group separately---for which the approach of~\citet{mampaey:10:summarising} of  identifying the optimal partitioning by {\sc mdl} would be a logical choice.

\subsection{Querying the Model}
\label{sec:modelquery}
We have seen how we can efficiently query the probability of an itemset $X \in \coll$ when given the maximum entropy distribution $\pemp_{\coll}$. In order to compute the probability of an arbitrary itemset $Y$ that is not a member of $\coll$, we do the following.
We first set $\coll' = \coll \cup \{Y\}$ and compute the block probabilities $\exact{T'}$ for all $T'$ in $\trans_{\coll'}$ by calling \textsc{QieBlockSizes}.
Then, we can simply use the parameters of $\pemp_{\coll}$ to compute $\pemp_{\coll}(Y=1)$ as follows,
\[
\pemp_{\coll}(Y = 1) = \sum_{T \in \trans_{\coll'} \atop S_Y(T) = 1} e(T) \prod_{X \in \coll} u_{X}^{S_{X}(T)}\;.
\]
Thus, to obtain the probability of an itemset, it suffices to compute the block probabilities in $\trans_{\coll'}$, 
for which we know that $\abs{\trans_{\coll'}} \leq 2 \abs{\trans_{\coll}}$.

\subsection{Computational Complexity}
Let us analyze the complexity of the \textsc{IterativeScaling} algorithm.  
To this end, we define $\partsize{\coll} = \abs{\trans_{\coll}}$ as the number of blocks in $\trans_{\coll}$. 
The computational complexity of \textsc{QieBlockSizes} is 
\[
O\left(k\cdot \partsize{\coll} \log\partsize{\coll}\right)\;,
\]
for a given collection $\coll$, with $\abs{\coll}=k$. The logarithmic factor comes from looking for the block $T'$ on line~\ref{algo:qie:findblock}.
Note that $\partsize{\coll} \leq 2^k$, and hence $\log \partsize{\coll} \leq k$.
Assume now that we can partition $\coll$ into $L$ disjoint parts $\coll=\coll_{1} \cup \cdots \cup \coll_{L}$,
such that if $X\in \coll_{i}$ and $Y\in \coll_j$ then $X\cap Y=\emptyset$.
As mentioned in Section~\ref{sec:modelcompute}, we can now simply compute $L$ independent distributions at a lower total cost.
Denoting $B_{i} = \bigcup_{X\in \coll_{i}} X$, it holds that
$\partsize{\coll_{i}} \leq \min\fpr{2^\abs{\coll_{i}}, 2^\abs{B_{i}}}$. 
If $\coll_{i}$ cannot be partitioned further, this usually means that either $\abs{\coll_{i}}$ is small,
or the itemsets in $\coll_{i}$ overlap a lot and $\partsize{\coll_{i}} \ll 2^{\abs{\coll_{i}}}$.
The total execution time of \textsc{IterativeScaling} is 
\[
O\left(K \sum_{i=1}^{L} \abs{C_{i}} \partsize{\coll_{i}}\log \partsize{\coll_{i}}\right)\;,
\]
where $K$ is the number of iterations, which is usually low.
The complexity of estimating the frequency of an itemset requires running 
\textsc{QieBlockSizes} once and, hence equals 
\[
O\left(\sum_{i=1}^{L} \abs{C_{i}}\partsize{\coll_{i}}\log\partsize{\coll_{i}}\right)\;.
\]

\subsection{Including Background Knowledge into the Model}
\label{sec:background}

Typically, when analyzing data we have some basic background knowledge about the data. For instance, we may already know the individual frequencies of the items, i.e., the \emph{column margins}. These margins supply some basic information about the data, for instance whether {\em tomatoes} are sold often or not in a supermarket database. These individual frequencies are intuitive and easy to calculate, yet already provide information on whether some combinations of items are more or less likely to occur frequently. For this reason, many existing techniques use the independence model as a basis to discover interesting patterns, \citep[e.g.,][]{brin:97:beyond,aggarwal:98:new}. Another form of background information that is often used are the \emph{row margins} of the data, that is, the probabilities that a transaction contains a certain number of items, e.g., \citep{gionis:07:assessing,hanhijarvi:09:tell,konto:10:sdm,tatti:10:bgrank}. If we know that most transactions are rather small, large itemsets are likely to have low frequencies.

Clearly, when we analyze data we want to incorporate this background knowledge, since otherwise we would simply rediscover it. If we do include it in our analysis, we discover itemsets that are interesting \emph{with respect to} what we already know. Therefore, we extend the {\sc bic} and {\sc mdl} quality scores of Definition~\ref{def:mdlscore} to incorporate background knowledge, say, $\bg$. Although in this section we focus on row and column margins as forms of background knowledge, many other patterns or count statistics that can be expressed as linear constraints on transactions could be used, for instance, a set of association rules and their confidences. Therefore, we intentionally omit the specification of $\bg$.

\begin{definition}
Given a dataset $\db$ and some background knowledge $\bg$, we define the {\sc bic} score for a collection of itemsets $\coll=\enset{X_{1}}{X_{k}}$, with respect to $\bg$ as
\begin{equation}
	\bicscore{\coll, \db ; \bg} = -\log \pemp_{\bg, \coll}\fpr{\db} +k/2 \log\abs{D}\;.
\end{equation}
Similarly, we define the {\sc mdl} score of $\coll$ with respect to $\bg$ as
\begin{equation}
	\mdlscore{\coll, \db ; \bg} = -\log \pemp_{\bg, \coll}\fpr{\db} + l_{1} k + l_{2} x + 1\;,
\end{equation}
where $\pemp_{\bg, \coll}$ is the maximum entropy distribution satisfying the background knowledge $\bg$ and $\pemp_{\bg, \coll}(X=1)=\freq{X}$ for all $X \in \coll$, and $l_{1}$ and $l_{2}$ are the same as in Definition~\ref{def:mdlscore}.
\end{definition}

Note that while the background knowledge is included in the log-likelihood term of $\score{\coll,\db ; \bg}$ (where $s$ denotes either {\sc bic} or {\sc mdl}), it is not included in the penalty term. We choose not to do so because we will assume that our background knowledge is consistent with the data and invariable. We could alternatively define
\[
	\score{\coll, \db, \bg} = \score{\coll, \db ; \bg} + L(\bg)\;,
\]
where $L(\bg)$ is some term which penalizes $\bg$, however, since this term would be equal for all $\coll$, for simplicity it might as well be omitted. If we were to compare different models that use different background knowledge, though, it must be included.

In this section, we show how to include row and column margins as background knowledge into our algorithm without, however, blowing up its computational complexity. If we were, for instance, to naively add all singleton itemsets $\ifam{I}$ and their frequencies to an itemset collection $\coll$, the number of transaction blocks in the corresponding partition would become $\abs{\trans_{\coll \cup \ifam{I}}} = \abs{\trans} = 2^{N}$\!, by which we would be back at square one. Therefore, we will consider the row and column margins separately from $\coll$ in our computations.

First, let us consider using only column margins, viz., item frequencies. With these, we build an independence model, while with $\coll$ we partition the transactions $\trans$ as above; then we simply combine the two to obtain the maximum entropy distribution. 
As before, the maximum entropy model has an exponential form:
\begin{equation}
	\pemp_{\coll'}(A=t) = u_{0} \prod_{X \in \coll} u_{X}^{S_{X}(t)} \prod_{i \in \ifam{I}} v_{i}^{S_{i}(t)}\;.
\end{equation}
The second part of the product defines an independence distribution
\begin{equation}
	v(A=t) = v_{0}\prod_{i}v_{i}^{S_{i}(t)}\;,
\end{equation}
where $v_{0}$ is a normalization factor. It is not difficult to see that $v(a_{i}=1)=v_{i}/(1+v_{i})$, for all for $a_{i} \in A$. It should be noted that while $\pemp(a_{i}=1)=\freq{a_{i}}$, in general it does not necessarily hold that $v(a_{i}=1)=\freq{a_{i}}$. Now we can write
\begin{equation}
	\pemp_{\coll'}(A\in T)=v(A\in T) \frac{u_{0}}{v_{0}} \prod_{X \in \coll} u_{X}^{S_{X}(T)}\;.
\end{equation}
Thus, we simply need to compute $v(A\in T)$ for each block $T$, which is computed very similarly to $e(T)$, using \textsc{QieBlockSizes}. Note that $e(A=t)$ is in fact nothing more than a uniform distribution over $\trans$, multiplied by $2^{N}$\!. To compute $v(A \in T)$, we simply initiate the algorithm with the cumulative sizes of the blocks with respect to $v$, which are equal to 
\[c(A \in T)=v_{0} \prod_{i\in I} v_{i}\;,\]
 where $I=\bigcup\sets{T;\coll}$.
Hence, we can include the item frequencies at a negligible additional cost. To update the $v_{i}$ parameters, we must query the probability of a single item. We can achieve this by simply adding the corresponding singleton to $\coll$, in exactly the same way as described in Section~\ref{sec:modelquery}.

Next, we also include row margins in the background information.
Let us define the indicator functions $S^{j}(t):\trans \rightarrow \{0,1\}$ for $j \in \{0, \ldots, N\}$ such that $S^{j}(t)=1$ if and only if the number of ones in $t$, denoted as $\abs{t}$, is equal to $j$. Further, for any distribution $p$ on $A$, let us write $p(\abs{A}=j)$ to indicate the probability that a transaction contains $j$ items. Again, the maximum entropy distribution has an exponential form,
\begin{equation}
	\pemp_{\bg,\coll}(A=t) = u_{0} \prod_{X\in\coll} u_{X}^{S_{X}(t)} \prod_{i \in \ifam{I}} v_{i}^{S_{i}(t)}\prod_{j=0}^{N} w_{j}^{S^{j}(t)}\;.
\end{equation}
The row and column margins define a distribution
\begin{equation}
	w(A=t) = v_{0} \prod_{i \in \ifam{I}} v_{i}^{S_{i}(t)}\prod_{j=0}^{N} w_{j}^{S^{j}(t)}\;,
\end{equation}
where $v_{0}$ is a normalization factor.
Now, for the probabilities $\pemp(A\in T)$, we have
\begin{align*}
	\pemp(A\in T, \abs{A}=j) &= w(A \in T, \abs{A}=j) \frac{u_{0}}{v_{0}} \prod_{X \in \coll} u_{X}^{S_{X}(T)}\\
	&=w_{j} v(A\in T, \abs{A}=j) \frac{u_{0}}{v_{0}} \prod_{X \in \coll} u_{X}^{S_{X}(T)}
\end{align*}
for $j=0,\ldots,N$, and we marginalize over $j$ to obtain 
\begin{align*}
	\pemp(A\in T) &= \sum_{j=0}^{N}\pemp(A\in T, \abs{A}=j) \\
			& = \frac{u_{0}}{v_{0}} \prod_{X \in \coll} u_{X}^{S_{X}(T)} \sum_{j=0}^{N} w_{j} v(A\in T, \abs{A}=j)
\end{align*}
As above, we compute the probabilities $v(A\in T, \abs{A}=j)$ using {\sc QieBlockSizes}. 
Let $I=\bigcup\{X \mid X \in \sets{\coll;T}\}$, then the corresponding cumulative probability becomes 
\[
	c(A\in T, \abs{A}=j) = v_{0}\prod_{i\in I} v_{i} \cdot v(\abs{A}=j \mid I=1)
\]

Computing the probabilities $v(\abs{A}=j)$, and similarly $v(\abs{A}=j \mid I=1)$, can be done from scratch in $O(N^{2})$ time and $O(N)$ space, using the following recurrence relation,
\[
	v(\abs{A_{i}} = j) = v(a_{i})\cdot v(\abs{A_{i-1}} = j-1) + (1-v(a_{i})) \cdot v(\abs{A_{i-1}} = j)\;,
\]
where $A_{i} = \enset{a_{1}}{a_{i}}$. Starting from $A_{0}=\emptyset$, the {\sc ComputeSizeProbabilities} algorithm adds each item $a_{i}$ until we have computed all probabilities $v(\abs{A_{N}}=j)$ where $A_{N}=A$; see Algorithm~\ref{algo:computeprobs}. The time complexity can be reduced to $O(N)$, by applying the updates that {\sc IterativeScaling} performs on $v(a_{i})$, to the probabilities $v(\abs{A}=j)$ as well, this is done by {\sc UpdateSizeProbabilities} in Algorithm~\ref{algo:updateprobs}. The algorithm first removes the item, and then re-adds it with the updates probability. For further details we refer to \citet{tatti:10:bgrank}. 
Computing item frequencies is done by adding singletons to $\coll$.
To compute the row margin probabilities $\pemp(\abs{A}=j)$, we simply marginalize over $\trans_{\coll}$,
\[
	\pemp(\abs{A}=j) = \sum_{T \in \trans_{\coll}} \pemp(A\in T, \abs{A}=j)\;.
\]
Hence, including the row margins increases the time and space complexity of model computation and inference by a factor of $N$.

\begin{algorithm}[ht!]
\caption{\textsc{ComputeSizeProbabilities}($v$)}
\label{algo:computeprobs}
\DontPrintSemicolon
\SetKwInOut{Input}{input}\SetKwInOut{Output}{output}

\Input {independence distribution $v$ over $A$, with probabilities $v_{i}=v(a_{i}=1)$ for $i=1, \ldots, N$}
\Output {probabilities $g_{j} = v(\abs{A}=j)$ for $j=0, \ldots, N$}

$g_{0} \define 1$\;
\For{$j=1,\ldots,N$} {
	$g_{j} \define 0$\;
}

\For{$i=1,\ldots,N$} {
	\For {$j=i, \ldots, 1$} {
		$g_{j} \define v_{i} \cdot g_{j-1} + (1-v_{i}) \cdot g_{j}$ \;
	}
	$g_{0} \define (1-v_{i}) \cdot g_{0}$ \;
}
\Return $g$ \;
\end{algorithm}

\begin{algorithm}[ht!]
\caption{\textsc{UpdateSizeProbabilities}($v$, $g$, $a_{i}$, $x$)}
\label{algo:updateprobs}
\DontPrintSemicolon
\SetKwInOut{Input}{input}\SetKwInOut{Output}{output}

\Input {probabilities $g_{j}=v(\abs{A}=j)$ for independence distribution $v$, with parameters $v_{i}=v(a_{i}=1)$, updated probability $v_{i}'$ for item $a_{i}$}
\Output {updated probabilities $g_{j} = v(\abs{A}=j)$ for $j=0, \ldots, N$}

$g_{0} \define g_{0} / (1-v_{i})$ \;

\For{$j=1,\ldots,N$} {
	$g_{j} \define \left(g_{j} - v_{i} g_{j-1}\right) / (1-v_{i})$ \;
}
update $v$ such that $v(a_{i}=1) = v_{i}'$ \;
\For {$j=N, \ldots, 1$} {
	$g_{j} \define v_{i}' \cdot g_{j-1} + (1-v_{i}') \cdot g_{j}$ \;
}
$g_{0} \define (1-v_{i}') \cdot g_{0}$ \;

\Return $g$ \;
\end{algorithm}

\section{Problem Statements}
\label{sec:problem}

In this section we identify four different problems that we intend to solve using the theory introduced above. 
We assume some given set $\bg$ that represents our background knowledge, 
e.g., the individual item frequencies, some arbitrary collection of itemsets, or simply the empty set.
We start simple, with a size constraint $k$ and a collection $\ifam{F}$ of potentially interesting itemsets to choose from, for instance, frequent itemsets, closed itemsets, itemsets of a certain size, etc.

\begin{prblm}[Most Informative $k$-Subset]\label{prb:miks}
Given a dataset $\db$, a set $\bg$ that represents our background knowledge, an integer $k$, and a collection of potentially interesting itemsets $\ifam{F}$, find the subset $\coll \subseteq \ifam{F}$ with $\abs{\coll} \leq k$ such that $\score{\coll, \db ; \bg}$ is minimal.
\end{prblm}
 
Note that if we choose $k = 1$, this problem reduces to `Find the Most Interesting Itemset in $\ifam{F}$', which means simply scoring $\coll=\{X\}$ with respect to $\bg$ for each set $X \in \ifam{F}$, and selecting the best one. Further, these scores provide a ranking of 
the itemsets $X \in \ifam{F}$ with regard to what we already know, that is, $\bg$.

Now, if we do not want to set $k$ ourselves, we can rely on either {\sc bic} or {\sc mdl} to identify the best-fitting, least-redundant model, a problem we state as the following.

\begin{prblm}[Most Informative Subset]\label{prb:mis}
Given a dataset $\db$, a set $\bg$ that represents our background knowledge, and a collection of potentially interesting itemsets $\ifam{F}$, find the subset $\coll \subseteq \ifam{F}$ such that $\score{\coll, \db ; \bg}$ is minimal.
\end{prblm}

When we do not want to constrain ourselves to a particular itemset collection $\ifam{F}$, we simply use all itemsets. Problem~\ref{prb:miks} then generalizes to the following.
\begin{prblm}[$k$ Most Informative Itemsets]\label{prb:miki}
Given a dataset $\db$, an integer $k$, and a set $\bg$ that represents our background knowledge, find the collection of itemsets $\coll$, with $\abs{\coll} \leq k$, such that $\score{\coll, \db ; \bg}$ is minimal.
\end{prblm}

Similarly, and most generally, we can simply consider finding the best collection of itemsets altogether.

\begin{prblm}[Most Informative Itemsets]\label{prb:mii}
Given a dataset $\db$ and a set $\bg$ that represents our background knowledge, find the collection of itemsets $\coll$ such that $\score{\coll, \db ; \bg}$ is minimal.
\end{prblm}

Note that these problem statements do not require $\ifam{F}$ to be explicitly available beforehand (let alone the complete set of itemsets), i.e., it does not have to be mined or materialized in advance (we postpone the details of this to Section~\ref{sec:looking}). 

Next, we discuss how we can efficiently mine sets of itemsets to solve the above problems.

\section{Mining Informative Succinct Summaries}
\label{sec:algorithm}

In Section~\ref{sec:theory} we described how to compute the maximum entropy model 
and its {\sc bic} or {\sc mdl} quality score \emph{given} a set of itemsets. 
Finding the \emph{optimal} collection as stated in Section~\ref{sec:problem}, however, 
is clearly nontrivial. The size of the search space is
\[
	\sum_{j=0}^{k}{\abs{\ifam{F}} \choose j} \leq 2^{\abs{\ifam{F}}}\;.
\]
If we do not restrict the candidate itemsets, then the
number of all (non-singleton) itemsets is $\abs{\ifam{F}} = 2^N - N - 1$.
Moreover, our quality scores are not monotonic, nor is there to our knowledge some
easily exploitable structure, which prevents us from straightforwardly exploring the search space.

Therefore, we resort to using a heuristic, greedy approach. 
Starting with a set of background knowledge---for instance the column margins---we 
incrementally construct our summary by iteratively adding the 
itemset that reduces the score function the most. The algorithm stops either when 
$k$ interesting itemsets are found, or when the score 
no longer decreases. The pseudo-code for our {\sc mtv} algorithm, 
which mines Maximally informaTiVe itemset summaries, is given in Algorithm~\ref{alg:mtv}.

Due to its incremental nature, we note that we can apply an optimization to the algorithm.
When we call \textsc{IterativeScaling} on line~\ref{algo:bestiterscale}, 
rather than computing $\pemp$ from scratch, we can initialize the algorithm with the parameters of the previous $\pemp$ (line~\ref{algo:inititerscale} of Algorithm~\ref{algo:iterscale}), instead of with the uniform distribution. 
In doing so, the \textsc{Iterative\-Scaling} procedure converges faster.
Further, we can also reuse part of the computations from {\sc QieBlockSizes}.

\begin{algorithm}[t]
\DontPrintSemicolon
\Input {binary dataset $\db$, background knowledge $\bg$, integer $k\leq\infty$}
\Output {itemset collection $\coll$}

$\ifam{I} \define $ items in $\db$\;
\While {$\score{\coll, \db;\bg}$ decreases \AND $\abs{\coll} < k$} {
	$X \define \textsc{FindMostInformativeItemset}(\emptyset, \ifam{I}, \emptyset)$\;
	$\coll \define \coll \cup \set{X}$\;
	$\pemp_{\bg,\coll} \define \textsc{IterativeScaling}(\coll)$\; \nllabel{algo:bestiterscale}
	compute $\score{\coll, \db; \bg}$\;
}
\Return $\coll$\;
\caption{\textsc{mtv}($\db$, $\bg$, $k$)}
\label{alg:mtv}
\end{algorithm}

\subsection{A Heuristic for Scoring Itemsets}

Finding the most informative itemset to add to the current collection is practically
infeasible, since it involves solving the maximum entropy model for each
and every candidate. This remains infeasible even if we restrict the search 
space (for example, using only frequent itemsets). Therefore, instead of 
selecting the candidate that optimizes the {\sc bic} or {\sc mdl} score directly, we select the 
candidate that maximizes a heuristic which expresses the divergence 
between its frequency and its estimate. 
To derive and motivate this heuristic we first present the following theorem.

\begin{theorem}
\label{th:argmax}
Given an itemset collection $\coll$, a dataset $\db$, and a candidate collection of itemsets $\ifam{F}$. Let $s$ denote either {\sc bic} or {\sc mdl}. It holds that
\begin{align*}
\argmin_{X\in\ifam{F}} \score{\coll\cup \{X\}} &=\argmax_{X\in\ifam{F}} \kl{\pemp_{\coll\cup \{X\}}}{\pemp_{\coll}} - r(X) \\
 &=\argmin_{X\in\ifam{F}} \kl{q_{\db}}{\pemp_{\coll\cup\{X\}}} + r(X)
\end{align*}
where 
\[
	r(X) = 	
		\begin{cases}
			0 &\textrm{if}\quad s=\bic\\
			\abs{X}\log N/\abs{D} &\textrm{if}\quad s=\mdl
		\end{cases}
\]
\end{theorem}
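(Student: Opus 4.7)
The plan is to prove both equalities by reducing all three expressions to a common form, namely minimizing $H(\pemp_{\coll\cup\{X\}}) + r(X)$ over $X\in\ifam{F}$. This is possible because $\pemp_{\coll\cup\{X\}}$ is an exponential model whose sufficient statistics match the observed frequencies, and this allows every relevant quantity (log-likelihood, $KL(q_\db\|\pemp_{\coll\cup\{X\}})$, and $KL(\pemp_{\coll\cup\{X\}}\|\pemp_\coll)$) to collapse to an entropy, up to additive constants that do not depend on $X$.

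First I would handle the score side. When we add $X$ to $\coll$, the change in the penalty term is independent of the likelihood: for $\bic$ it is the constant $\tfrac{1}{2}\log\abs{D}$, and for $\mdl$ it is $l_1 + l_2\abs{X}$. Dropping everything that does not depend on $X$ and dividing through by $\abs{D}$ (which preserves $\argmin$), the original minimization becomes
\[
\argmin_{X\in\ifam{F}}\; -\tfrac{1}{\abs{D}}\log\pemp_{\coll\cup\{X\}}(\db) + r(X),
\]
with $r(X)$ exactly as in the theorem (for $\mdl$, $l_2/\abs{D}=\log N/\abs{D}$, giving the stated $\abs{X}\log N/\abs{D}$). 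Now invoking Corollary~\ref{cor:likelihood}, the first term equals $\log u_0 + \sum_{Y\in\coll\cup\{X\}}f_Y\log u_Y$ with a minus sign; and since $\pemp_{\coll\cup\{X\}}$ matches the data frequencies $f_Y=q_\db(Y=1)$, this quantity is precisely $H(\pemp_{\coll\cup\{X\}})$. Hence the score minimization is equivalent to minimizing $H(\pemp_{\coll\cup\{X\}})+r(X)$.

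Next I would convert the two KL expressions into the same form. For $\kl{q_\db}{\pemp_{\coll\cup\{X\}}}$, expand using $\log\pemp_{\coll\cup\{X\}}(t) = \log u_0+\sum_Y S_Y(t)\log u_Y$ and take the expectation under $q_\db$; since $\mathrm{E}_{q_\db}[S_Y]=f_Y$, the cross-entropy equals $\log u_0+\sum_Y f_Y\log u_Y = -H(\pemp_{\coll\cup\{X\}})$, so $\kl{q_\db}{\pemp_{\coll\cup\{X\}}} = H(\pemp_{\coll\cup\{X\}}) - H(q_\db)$. This immediately gives the second equality in the theorem, since $H(q_\db)$ is a constant in $X$.

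Finally, for $\kl{\pemp_{\coll\cup\{X\}}}{\pemp_\coll}$ I would use the moment-matching property in the other direction: since $\pemp_{\coll\cup\{X\}}$ satisfies $f_Y=q_\db(Y=1)$ for every $Y\in\coll$ (it satisfies strictly more constraints than $\pemp_\coll$), the same log-linear expansion gives $\mathrm{E}_{\pemp_{\coll\cup\{X\}}}[\log\pemp_\coll]=-H(\pemp_\coll)$, and hence
\[
\kl{\pemp_{\coll\cup\{X\}}}{\pemp_\coll} = H(\pemp_\coll) - H(\pemp_{\coll\cup\{X\}}).
\]
This is the ``Pythagorean'' identity for maximum entropy models, and is the one non-mechanical step in the argument. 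Since $H(\pemp_\coll)$ does not depend on $X$, maximizing $\kl{\pemp_{\coll\cup\{X\}}}{\pemp_\coll}-r(X)$ is the same as minimizing $H(\pemp_{\coll\cup\{X\}})+r(X)$, establishing the first equality and closing the loop with the score formulation derived in the first paragraph.
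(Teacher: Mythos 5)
Your proof is correct and follows essentially the same route as the paper: it rests on Corollary~\ref{cor:likelihood} (the log-likelihood equals $-\abs{\db}\ent{\pemp_{\coll\cup\{X\}}}$), the identity $\kl{\pemp_{\coll\cup\{X\}}}{\pemp_{\coll}}=\ent{\pemp_{\coll}}-\ent{\pemp_{\coll\cup\{X\}}}$, and the observation that the penalty increments are $\tfrac{1}{2}\log\abs{\db}$ for \textsc{bic} and $l_1+\abs{X}\log N$ for \textsc{mdl}. The only difference is that you spell out the moment-matching justifications for the two KL-to-entropy reductions, which the paper leaves as ``a straightforward calculation'' and ``follows similarly.''
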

\begin{proof}
Let us write $\coll' = \coll \cup \set{X}$.
Corollary~\ref{cor:likelihood} states that 
$
	- \log \pemp_{\coll'}(\db) = \abs{\db}\ent{\pemp_{\coll'}}.
$
In addition, we can show with a straightforward calculation that
\[
	\kl{\pemp_{\coll'}}{\pemp_{\coll}} = \ent{\pemp_{\coll}} - \ent{\pemp_{\coll'}}\;.
\]
For {\sc bic} the difference in the penalty terms of $\score{\coll}$ and $\score{\coll'}$ is equal to $\nicefrac{1}{2}\log\abs{D}$, which is identical for all itemsets $X$, and hence may be eliminated from the $\argmax$.
For {\sc mdl}, the difference in penalty terms can similarly be reduced to $\abs{X}\log N$. The second equality follows similarly.
\end{proof}

Thus, we search for the itemset $X$ for which the new distribution diverges maximally from the previous one, 
or equivalently, brings us as close to the empirical distribution as possible---taking into account the penalty term for {\sc mdl}. Note that for {\sc bic}, since $r(X)=0$, the algorithm simply tries to maximize the likelihood of the model, and the penalty term functions as a stopping criterion; the algorithm terminates when the increase in likelihood (i.e., decrease of the negative log-likelihood) is not sufficient to counter the increase of the penalty. When we use {\sc mdl}, on the other hand, $r(X)$ represents a part of the penalty term, and hence this guides the algorithm in its search.

The heuristic we employ uses an approximation of the above $\mathit{KL}$ divergence,
and is in fact a simpler $\mathit{KL}$ divergence itself. In the expression 
\begin{equation}
\label{equation:kl}
	\kl{\pemp_{\coll'}}{\pemp_{\coll}}=\sum_{t\in\trans} \pemp_{\coll'}(A=t) \log \frac{\pemp_{\coll'}(A=t)}{\pemp_{\coll}(A=t)}\;
\end{equation}
we merge the terms containing $X$ into one term, and the terms not containing $X$ into another term.
To differentiate between these two divergences, let us write the function $\mathit{kl} : [0,1]\times[0,1] \rightarrow \mathbb{R}^{+}$ as follows,
\begin{equation}
	\mathit{kl}(x, y) = x \log \frac{x}{y} + (1 - x) \log \frac{1 - x}{1 - y}\;,
\end{equation}
then we approximate Eq.~\ref{equation:kl} by $\mathit{kl}\left(\freq{X}, \pemp_{\coll}(X = 1)\right)$.
We will write the latter simply as $\mathit{kl}(X)$ when $\mathit{fr}$ and $\pemp$ are clear from the context. To compute this heuristic, we only need the frequency of $X$, and its estimate according to the current $\pemp$ distribution. This gives us a measure of the divergence between $\freq{X}$ and $\pemp_{\coll}(X = 1)$, i.e., its surprisingness given the current model.

The following theorem shows the relation between $\mathit{KL}$ and $\mathit{kl}$.
\begin{theorem}
For an itemset collection $\coll$ and an itemset $X$, it holds that 
\[
0 \leq \mathit{kl}(X) \leq \kl{\pemp_{\coll\cup \{X\}}}{\pemp_{\coll}}\,.
\] 
Moreover, $\mathit{kl}(X)=0$ if and only if $\kl{\pemp_{\coll\cup \{X\}}}{\pemp_{\coll}}=0$, i.e., when $\freq{X}=\pemp_{\coll}(X = 1)$.
\end{theorem}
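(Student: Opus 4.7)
The plan is to recognize $\mathit{kl}(X)$ as the Kullback--Leibler divergence between two Bernoulli distributions, namely the marginal distributions of the indicator $S_X$ under $\pemp_{\coll\cup\{X\}}$ and under $\pemp_{\coll}$. Since $\pemp_{\coll\cup\{X\}}$ is consistent with the data by Theorem~\ref{th:maxlikelihood} (applied within the search space), we have $\pemp_{\coll\cup\{X\}}(X=1) = \freq{X}$, while $\pemp_{\coll}(X=1)$ is by definition the marginal probability of $X$ under the smaller model. Thus
\[
	\mathit{kl}(X) = \kl{\mathrm{Bern}(\pemp_{\coll\cup\{X\}}(X=1))}{\mathrm{Bern}(\pemp_{\coll}(X=1))}.
\]
The left inequality $0 \leq \mathit{kl}(X)$, together with the fact that $\mathit{kl}(X)=0$ iff $\freq{X}=\pemp_{\coll}(X=1)$, then follows from the standard non-negativity (and strict convexity) of KL divergence applied to Bernoulli distributions.

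For the right inequality, I would invoke the information-theoretic fact that KL divergence is non-increasing under the application of any (deterministic) function: for any $f:\trans\to\{0,1\}$ and any two distributions $p,q$ on $\trans$, the pushforwards satisfy $\kl{p\circ f^{-1}}{q\circ f^{-1}} \leq \kl{p}{q}$. This is the data-processing inequality and is a direct consequence of the log-sum inequality. Taking $f = S_X$, $p=\pemp_{\coll\cup\{X\}}$ and $q=\pemp_{\coll}$ immediately yields $\mathit{kl}(X)\leq \kl{\pemp_{\coll\cup\{X\}}}{\pemp_{\coll}}$. If one prefers an elementary route, the same inequality can be obtained by grouping the sum in Eq.~\ref{equation:kl} into the two blocks $\{t:S_X(t)=1\}$ and $\{t:S_X(t)=0\}$ and applying the log-sum inequality to each block.

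Finally, the ``moreover'' part is the biconditional $\mathit{kl}(X)=0 \iff \kl{\pemp_{\coll\cup\{X\}}}{\pemp_{\coll}}=0$. One direction is immediate from the inequality just proved. For the other direction, suppose $\freq{X} = \pemp_{\coll}(X=1)$. Then $\pemp_{\coll}$ already lies in $\ifam{P}_{\coll\cup\{X\}}$, and since it has the maximum entropy over the (larger) feasible set $\ifam{P}_{\coll}\supseteq\ifam{P}_{\coll\cup\{X\}}$, by uniqueness of the maximum entropy distribution (Theorem~\ref{thr:exponential}) we conclude $\pemp_{\coll\cup\{X\}}=\pemp_{\coll}$, so the outer KL vanishes. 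This argument mirrors the proof of Theorem~\ref{thr:deviate} and I would cite it for brevity.

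The main conceptual step is the identification of $\mathit{kl}(X)$ as a marginalized KL and the subsequent appeal to the data-processing inequality; everything else is bookkeeping. No calculation should be long, so I would keep the write-up compact and lean on the log-sum inequality or on the already-established properties of $\pemp_{\coll}$ rather than manipulating exponential-form parameters directly.
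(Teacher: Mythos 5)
Your proposal is correct and follows essentially the same route as the paper: the paper also derives both inequalities from the log-sum inequality (your data-processing phrasing is just a named repackaging of the block-grouping you describe as the ``elementary route''), and it handles the equality case by the same observation that $\freq{X}=\pemp_{\coll}(X=1)$ forces $\pemp_{\coll\cup\{X\}}=\pemp_{\coll}$.
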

\begin{proof}
Both inequalities follow directly from the log-sum inequality, which states that for any nonnegative numbers $a_{i},b_{i}$, with $a=\sum_{i} a_{i}$ and $b=\sum_{i} b_{i}$, it holds that
\[
	\sum_{i} a_i\log\frac{a_i}{b_i}\geq a\log\frac{a}{b}\;.
\]
For equality to zero, we have $\mathit{kl}(\freq{X}, \pemp_{\coll}(X=1))=0$ if and only if $\freq{X}=\pemp_{\coll}(X=1)$. In this case it holds that $\pemp_{\coll'}=\pemp_{\coll}$ which is true if and only if $\kl{\pemp_{\coll'}}{\pemp_{\coll}}=0$.
\end{proof}

Using Theorem~\ref{th:argmax}, the heuristic function we employ is defined as
\begin{equation}
	h(X) = \mathit{kl}(\freq{X}, \pemp_{\coll}(X=1)) - r(X)
\end{equation}
and we will make use of the following assumption: 
\[
	\argmin_{X\in\ifam{F}} \score{\coll \cup \{X\}} = \argmax_{X\in\ifam{F}} h(X)\;.
\]

Note that $h$ has an elegant interpretation: it is equal to 
the Kullback-Leibler divergence 
after exactly one step in the {\sc IterativeScaling} algorithm---when initializing $\pemp$ with the parameters from the previous model, as discussed above.
Since the $\mathit{KL}$ divergence increases monotonically during the Iterative Scaling procedure, if the total divergence is large, then we expect to already see this in the first step of the procedure.

\subsection{Searching for the Most Informative Itemset}
\label{sec:looking}

\begin{algorithm}[t]
\DontPrintSemicolon
\Input {itemset $X$, remaining items $Y$, currently best set $Z$}
\Output {itemset between $X$ and $XY$ maximizing $h$, or $Z$}

compute $\freq{X}$ and $\pemp(X)$\; \nllabel{algo:freqfly}
\If {$h(X) = \mathit{kl}(\freq{X}, \pemp(X))-r(X) > h(Z)$}  {
	$Z \define X$\;
}
compute $\freq{XY}$ and $\pemp(XY)$\;
$\mathit{b}\define\max\{\mathit{kl}(\freq{X},\pemp(XY)), \mathit{kl}(\freq{XY},\pemp(X))\}-r(X)$\;
\If {$\mathit{b} > h(Z)$} {
	\For {$y \in Y$} {
		$Y \define Y \setminus \{y\}$\;
		$Z \define$ \textsc{FindMostInformativeItemset}($X \cup \{y\}$, $Y$, $Z$)\;
	}
}
\Return $Z$\;
\caption{\textsc{FindMostInformativeItemset}($X$, $Y$, $Z$)
}
\label{algo:bestitemset}
\end{algorithm}

To find the itemset maximizing $h(X)$, 
we take a depth-first branch-and-bound approach. 
We exploit the fact that $\mathit{kl}$ is a convex function, and
employ the bound introduced by \citet{nijssen:09:correlated} 
to prune large parts of the search space as follows.
Say that for a candidate itemset $X$ in the search space,
its maximal possible extension in the branch below it is $X\cup Y$ (denoted $XY$), then for any itemset $W$ such that 
$X \subseteq W \subseteq XY$ it holds that
\begin{equation}
	h(W) = \mathit{kl}(W) - r(W) \leq \max \Bigl\{ \mathit{kl}\bigl( \freq{X}\!, \pemp(XY)\bigr), \mathit{kl}\bigl( \freq{XY}\!, \pemp(X)\bigr)\Bigr\} - r(X)\,.
\end{equation}
If this upper bound is lower than the best value of the heuristic seen so far, we know that no (local) extension $W$ of $X$
can ever become the best itemset with respect to the heuristic, and therefore we can safely prune the branch of the search space below $X$. 
The \textsc{FindMostInformativeItemset} algorithm is given in Algorithm~\ref{algo:bestitemset}.

An advantage of this approach is that we do not need to collect the frequencies of all candidate itemsets beforehand.
Instead, we just compute them on the fly when we need them (line~\ref{algo:freqfly}).
For instance, if we wish to pick itemsets from a collection $\ifam{F}$ of frequent itemsets for some minimum support
threshold, we can integrate the support counting with the depth-first traversal of the algorithm, rather than first 
mining and storing $\ifam{F}$ in its entirety.
Since for real datasets and nontrivial minimal support thresholds billions of frequent itemsets are easily discovered, this indubitably makes our approach more practical.

\section{Experiments}
\label{sec:experiments}
In this section we experimentally evaluate our method and empirically validate the quality of the returned summaries.

\subsection{Setup}
We implemented our algorithm in C++, and provide the source code for research purposes.\!\footnote{http://www.adrem.ua.ac.be/implementations}
All experiments were executed on a 2.67GHz (six-core) Intel Xeon machine with 12GB of memory, running Linux. All reported timings are of the single-threaded implementation of our algorithm.

We evaluate our method on three synthetic datasets, as well as on eleven real datasets. Their basic characteristics are given in Table~\ref{tab:datasets}.

The \emph{Independent} data has independent items with random frequencies between 0.2 and 0.8. In the \emph{Markov} dataset each item is a noisy copy of the previous one, with a random copy probability between 0.5 and 0.8. 
The \emph{Mosaic} dataset is generated by randomly planting five itemsets of size 5 with random frequencies between 0.2 and 0.5, in a database with 1\% noise. 

The {\em Abstracts} dataset contains the abstracts of all accepted papers at the ICDM conference up to 2007, where all words have been stemmed and stop words have been removed~\cite{konto:10:sdm}.

The \emph{Accidents}, \emph{Kosarak}, \emph{Mushroom}, and \emph{Retail} datasets were obtained from the FIMI dataset repository~\citep{goethals:04:fimi,geurts:03:using,brijs:99:using}.

The \emph{Chess (kr--k)} and \emph{Plants} datasets were obtained from the UCI ML Repository~\citep{uciml},
the former was converted into binary form, by creating an item for each attribute-value pair. The latter contains a list of plants, and the U.S.\ and Canadian states where they occur.

The {\em DNA Amplification} data contains information on DNA copy number amplifications~\citep{myllykangas:06:dna}. Such copies activate oncogenes and are hallmarks of nearly all advanced tumors. 
Amplified genes represent attractive targets for therapy, diagnostics and prognostics. 
In this dataset items are genes, and transactions correspond to patients.

The {\em Lotto} dataset was obtained from the website of the Belgian National Lottery, and contains the results of all lottery draws between May 1983 and May 2011.\!\footnote{http://www.nationaleloterij.be} Each draw consist of seven numbers (six plus one bonus ball) out of a total of 42.

The {\em Mammals} presence data consists of presence records of European mammals within geographical areas 
of 50$\times$50 kilometers~\citep{mitchell-jones:99:atlas}.\!\footnote{The full version of the {\em Mammals} dataset is available for research purposes from the Societas Europaea Mammalogica at http://www.european-mammals.org} \citet{heikinheimo:07:mammals} analyzed the distribution of these mammals over these areas, and showed they form environmentally distinct and spatially coherent clusters.

The {\em MCADD} data was obtained from the Antwerp University Hospital. Medium-Chain Acyl-coenzyme A Dehydrogenase Deficiency (MCADD) \citep{baumgartner:05:modelling,vandenbulcke:11:data} is a deficiency newborn babies are screened for during a Guthrie test on a heel prick blood sample. The instances are represented by a set of 21 features: 12 different acylcarnitine concentrations measured by tandem mass spectrometry (TMS), together with 4 of their calculated ratios and 5 other biochemical parameters, each of which we discretized using \emph{k}-means clustering with a maximum of 10 clusters per feature.

\begin{table}[t!]
\caption{The synthetic and real datasets used in the experiments. Shown for each dataset are the number of items $\abs{A}$, the number of transactions $\abs{\db}$, the minimum support threshold for the set of candidate frequent itemsets $\ifam{F}$ and its size. 
}
\label{tab:datasets}
\centering
\begin{tabular*}{0.95\linewidth}{@{\extracolsep{\fill}}l rr rr }
\toprule
& \multicolumn{2}{c}{\textit{Data Properties}} & \multicolumn{2}{c}{\textit{Candidate Collection}}\\
\cmidrule{2-3}\cmidrule{4-5}
\textit{Dataset} & $\abs{A}$ & $\abs{\db}$ & $\mathit{minsup}$ & $\abs{\ifam{F}}$ \\ 
\midrule
Independent & 50 & 100\,000 & 5\,000 & 1\,055\,921 \\ 
Markov & 50 & 100\,000 & 5\,000 & 377\,011 \\ 
Mosaic & 50 & 100\,000 & 5\,000 & 101\,463 \\ [0.5em] 
Abstracts & 3\,933 & 859 & 10 & 75\,061 \\ 
Accidents & 468 & 340\,183 & 50\,000 & 2\,881\,487 \\ 
Chess (kr--k) & 58 & 28\,056 & 5 & 114\,148 \\ 
DNA Amplification & 391 & 4\,590 & 5 & 4.57$\cdot$10\textsuperscript{12} \\ 
Kosarak & 41\,270 & 990\,002 & 1\,000 & 711\,424 \\ 
Lotto & 42 & 2\,386 & 1 & 139\,127 \\ 
Mammals & 121 & 2\,183 & 200 & 93\,808\,244 \\ 
MCADD & 198 & 31\,924 & 50 & 1\,317\,234 \\ 
Mushroom & 119 & 8\,124 & 100 & 66\,076\,586 \\ 
Plants & 70 & 34\,781 & 2\,000 & 913\,440 \\ 
Retail & 16\,470 & 88\,162 & 10 & 189\,400 \\ 
\bottomrule
\end{tabular*}
\end{table}

The core of our method is parameter-free. That is, it will select itemsets from the complete space of possible itemsets. In practice, however, it may not always be feasible or desirable to consider \textit{all} itemsets. For dense or large datasets, for instance, we might want to explicitly exclude low-frequency itemsets, or itemsets containing many items.
General speaking, choosing a larger candidate space, yields a larger search space, and hence potentially better models. In our experiments we therefore consider collections of frequent itemsets $\ifam{F}$ mined at support thresholds as low as feasible. The actual thresholds and corresponding size of $\ifam{F}$ are depicted in Table~\ref{tab:datasets}. Note that although the minimum support threshold is used to limit the size of $\ifam{F}$, it can be seen as an additional parameter to the algorithm. 
To ensure efficient computation, we impose a maximum of 10 items per group, as described at the end of Section~\ref{sec:modelcompute}. Further, we terminate the algorithm if the runtime exceeds two hours.
For the sparse datasets with many transactions (the synthetic ones, \emph{Accidents}, \emph{Kosarak}, and \emph{Retail}), we mine and store the supports of the itemsets, rather than computing them on the fly. Caching the supports is faster in these cases, since for large datasets support counting is relatively expensive. The runtimes reported below include this additional step.

In all experiments, we set the background information $\bg$ to contain the column margins, i.e., we start from the independence model. For \emph{Chess (kr--k)} and \emph{Mushroom}, we also perform experiments which include the row margins, since their original form is categorical, and hence we know that each transaction has a fixed size. For the \emph{Lotto} data we additionally use \emph{only} the row margins, which implicitly assumes all numbers to be equiprobable, and uses the fact that each draw consist of seven numbers.

\subsection{A First Look at the Results}

In Tables~\ref{tab:bicscores}~and~\ref{tab:mdlscores} we present the scores and sizes of the discovered summaries, the time required to compute them, and for comparison we include the score of the background model, using {\sc bic} and {\sc mdl} respectively as quality score $s$.
For most of the datasets we consider, the algorithm identifies the optimum quite rapidly, i.e., within minutes. For these datasets, the number of discovered itemsets, $k$, is indicated in bold.
For three of the datasets, i.e., the dense \emph{MCADD} dataset, and the large \emph{Kosarak} and \emph{Retail} datasets, the algorithm did not find the optimum within two hours. 

For both {\sc bic} and {\sc mdl}, we note that the number of itemsets in the summaries is very small, and hence manual inspection of the results by an expert  is feasible. 
We see that for most datasets the score (and thus relatedly the negative log-likelihood) decreases a lot, implying that these summaries model the data well. 
Moreover, for highly structured datasets, such as \emph{DNA}, \emph{Kosarak}, and \emph{Plants}, this improvement is very large, and only a handful of itemsets are required to describe their main structure.

Comparing the two tables, we see that the number of itemsets selected by {\sc bic} compared to {\sc mdl} tends to be the same or a bit higher, indicating that {\sc bic} is more permissive in adding itemsets---an expected result. 
If we (crudely) interpret the raw {\sc bic} and {\sc mdl} scores as negative log-likelihoods, we see that {\sc bic} achieves lower, and hence better scores. This follows naturally from the larger collections of itemsets that {\sc bic} selects; the larger the collection, the more information it provides, and hence the higher the likelihood.

\begin{table}[t]
\caption{Statistics of the discovered summaries using \textsc{bic}, with respect to the background information $\ifam{B}$. Shown are the number of itemsets $k$,  
the wall clock time, the score of $\ifam{C}$, and the score of the empty collection. (Lower scores are better.)  Values for $k$ identified as optimal by \textsc{bic} are given in boldface. }

\label{tab:bicscores}
\centering
\begin{tabular*}{0.90\linewidth}{@{\extracolsep{\fill}}l rrrr}
\toprule
\textit{Dataset} & $k$ & \textit{time} & $\bicscore{\coll, \db ; \bg}$ & $\bicscore{\emptyset, \db ; \bg}$ \\
\midrule
Independent & \bf{7} &2m14s & 4\,494\,219 &4\,494\,242\\ 
Markov & \bf{62} & 15m44s & 4\,518\,101 & 4\,999\,963\\ 
Mosaic & \bf{16} & 6m06s & 807\,603 & 2\,167\,861\\[0.5em] 
Abstracts & \bf{220} & 27m21s & 233\,483 & 237\,771\\ 
Accidents & \bf{74} & 18m35s & 24\,592\,869 & 25\,979\,244\\ 
Chess (kr--k) & \bf{67} & 83m32s & 766\,235 & 786\,651\\
DNA Amplification & \bf{204} & 4m31s & 79\,164 & 183\,121\\ 
Kosarak & 261 & 120m36s & 66\,385\,663 & 70\,250\,533\\ 
Lotto & \bf{29} & 0m31s & 64\,970 &  65\,099\\
Mammals & \bf{76} & 25m23s& 99\,733 & 119\,461\\ 
MCADD & 80 & 121m02s & 2\,709\,240 & 2\,840\,837\\
Mushroom & \bf{80} & 28m11s & 358\,369 & 441\,130\\ 
Plants & \bf{94} & 66m56s & 732\,145 & 1\,271\,950\\
Retail & 62 & 121m46s & 8\,352\,161 & 8\,437\,118\\ 
\bottomrule
\end{tabular*}
\end{table}
\begin{table}[t]
\caption{Statistics of the discovered summaries using \textsc{mdl}, with respect to the background information $\ifam{B}$. Shown are the number of itemsets $k$, the wall clock time, the description length of $\ifam{C}$, and the description length of the empty collection.  (Lower scores are better).  Values for $k$ identified as optimal by \textsc{mdl} are given in boldface.}
\label{tab:mdlscores}
\centering
\begin{tabular*}{0.90\linewidth}{@{\extracolsep{\fill}}l rrrr}
\toprule
\textit{Dataset} & $k$ & \textit{time} & $\mdlscore{\coll, \db ; \bg}$ & $\mdlscore{\emptyset, \db ; \bg}$ \\
\midrule
Independent & \bf{0} & 1m57s & 4\,494\,243 & 4\,494\,243\\ 
Markov & \bf{62} & 16m16s & 4\,519\,723 & 4\,999\,964\\ 
Mosaic & \bf{15} & 7m23s & 808\,831 & 2\,167\,861\\[0.5em] 
Abstracts & \bf{29} & 1m26s & 236\,522 & 237\,772\\ 
Accidents & \bf{75} & 19m35s & 24\,488\,943 & 25\,979\,245\\ 
Chess (kr--k) & \bf{54} & 76m20s & 767\,756  & 786\,652\\
DNA Amplification & \bf{120} & 1m06s & 96\,967 & 183\,122\\ 
Kosarak & 262 & 120m53s & 66\,394\,995 & 70\,250\,534\\ 
Lotto & \bf{0} & 0m01s & 65\,100 & 65\,100\\
Mammals & \bf{53} & 2m17s & 102\,127 & 119\,461\\ 
MCADD & 82 & 123m34s & 2\,700\,924 & 2\,840\,838 \\
Mushroom & \bf{74} & 22m00s & 361\,891 & 441\,131\\
Plants & \bf{91} &63m9s & 734\,558 & 1\,271\,951\\
Retail & 57 & 122m44s & 8\,357\,129 & 8\,437\,119\\ 
\bottomrule
\end{tabular*}
\end{table}

\subsection{Summary Evaluation}
Next, we inspect the discovered data summaries in closer detail. 

For the \emph{Independent} dataset we see that using {\sc mdl} the returned summary is empty, i.e., no itemset can improve on the description length of the background model, which is the independence model. 
In general, {\sc mdl} does not aim to find any underlying `true' model, but simply the model that it considers best, given the available data. In this case, however, we see that the discovered model correctly corresponds to the process by which we generated the data.
Using {\sc bic}, on the other hand, the algorithm discovers 7 itemsets. These itemsets have an observed frequency in the data which is slightly higher or lower than their predicted frequencies under the independence assumption. Among the 7 itemsets, the highest absolute difference between those two frequencies is lower than 0.3\%. While these itemsets are each significant by themselves, after correcting their p-values to account for Type I error (see Section~\ref{sec:pvalues}), we find that none of them are statistically significant.

For the \emph{Markov} data, we see that all itemsets in the discovered summary are very small (about half of them of size 2, the rest mostly of size 3 or 4), and they all consist of consecutive items. Since the items in this dataset form a Markov chain, this is very much in accordance with the underlying generating model. In this case, the summaries for {\sc bic} and {\sc mdl} are identical. 
Knowing that the data is generated by a Markov chain, we can compute the {\sc mdl} score of the `best' model. Given that $\bg$ contains the singleton frequencies, the model that fully captures the Markov chain contains all 49 itemsets of size two containing consecutive items. The description length for this model is 4\,511\,624, which is close to what our algorithm discovered.

The third synthetic dataset, \emph{Mosaic}, contains 5 itemsets embedded in a 1\%-noisy database. The five first itemsets returned, both by {\sc bic} and {\sc mdl}, are exactly those itemsets. These sets are responsible for the better part of the decrease of the score, as can be seen in Figure~\ref{fig:mosaic}, which for {\sc mdl} depicts the description length as a function of the summary size. After these first five highly informative itemsets, a further ten additional itemsets are discovered that help explain the overlap between the itemsets---which, because we construct a probabilistic model using itemset frequencies cannot be inferred otherwise---and while these further itemsets do help in decreasing the score, their effect is much less strong than for the first sets. After discovering fifteen itemsets, the next best itemset does not decrease the log likelihood sufficiently to warrant the {\sc mdl} model complexity penalty, and the algorithm terminates. 

\begin{figure}[t!]
   \centering
	\includegraphics[scale=1.2]{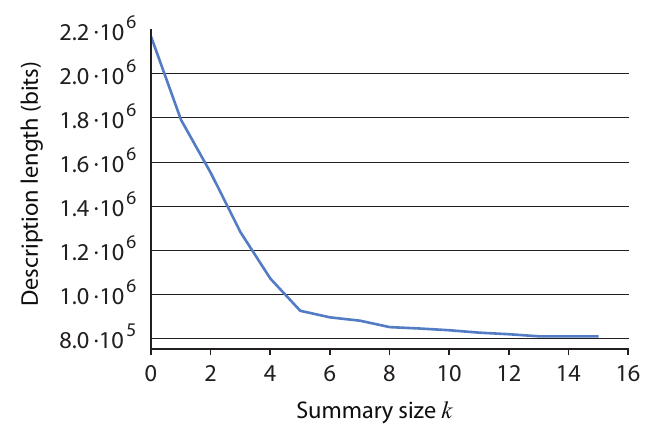}
   \caption{The description length of the \emph{Mosaic} dataset, as a function of the summary size $k$. The first five discovered itemsets correspond to the process which generated the dataset. The minimum {\sc mdl} score is attained at $k=15$.
}
   \label{fig:mosaic}
\end{figure}

For \emph{Lotto}, we see that using {\sc mdl}, our algorithm fails to discover any informative itemsets.\!\footnote{This is a desired result, assuming of course that the lottery is fair.} Using {\sc bic}, we find a summary of 29 itemsets, which are all combinations of 3 numbers, having been drawn between one and three times in the past twenty-odd years.  While for each itemset individually this is significantly lower than the expected absolute support (which is about 11), when we adjust for Type I error, they are no longer statistically significant. This gives evidence that in our setup, and with these data sizes, {\sc bic} may not be strict enough.
We additionally ran experiments using only row margins as background knowledge, i.e., using the fact that every transaction contains exactly seven items, the numbers of each lottery draw. Then, according to the maximum entropy background distribution, this implies that every number has exactly the same probability of being picked in a particular draw---namely \nicefrac{1}{6}. If our algorithm should find that a ball is drawn significantly more or less often, then it would be included in the model. Further, if there were any meaningful correlation between some of the numbers (positive or negative), these numbers would be included in a reported itemset. 
Using {\sc mdl}, our algorithm again finds no itemsets of interest. Using {\sc bic}, we find 7 itemsets (in a timespan of two hours, after which the algorithm was terminated), which are all combinations of five or six numbers, with absolute supports between 2 and 4, which is higher than expected. After applying Bonferroni adjustment, none of these itemsets turn out to be significant.

In the case of the \emph{DNA Amplification} data, our algorithm finds that the data can be described using 120 itemsets. As this dataset is banded, it contains a lot of structure \citep{garriga:11:banded}. Our method correctly discovers these bands, i.e., blocks of consecutive items corresponding to related genes, which lie on the same chromosomes. 
The first few dozen sets are large, and describe the general structure of the data. Then, as we continue, we start to encounter smaller itemsets, which describe more detailed nuances in the correlations between the genes. 
Figure~\ref{fig:dna-block-fig} depicts a detail of the \emph{DNA} dataset, together with a few of the itemsets from the discovered summary.
\begin{figure}[t!]
\centering
\includegraphics[scale=0.5]{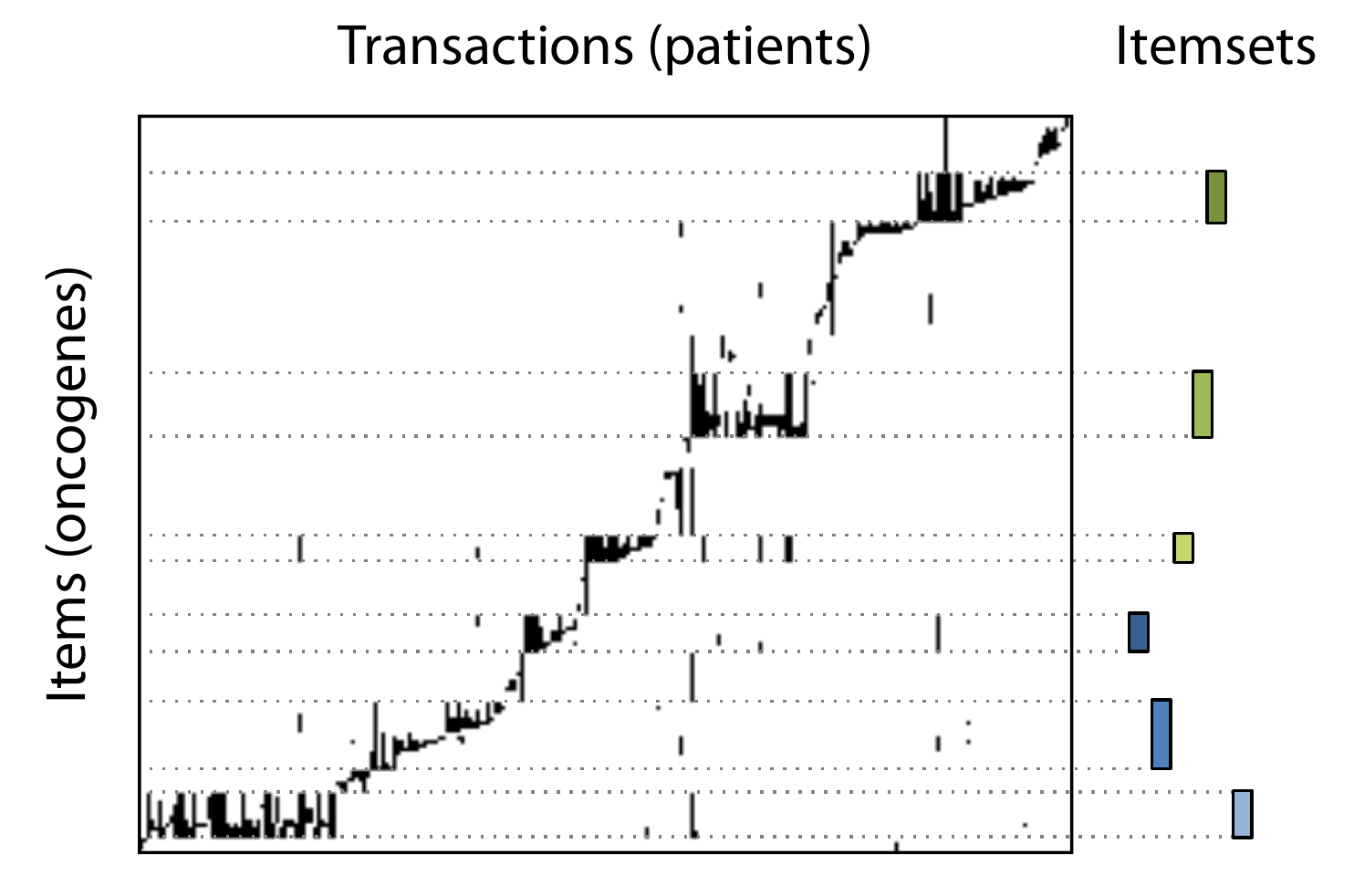}
\caption{Detail of the \emph{DNA Amplification} dataset (left), along with six of the discovered itemsets (right).}
\label{fig:dna-block-fig}
\end{figure}

\begin{figure}[t]
\centering
\hphantom{.}
\hfill
\includegraphics[scale=0.55]{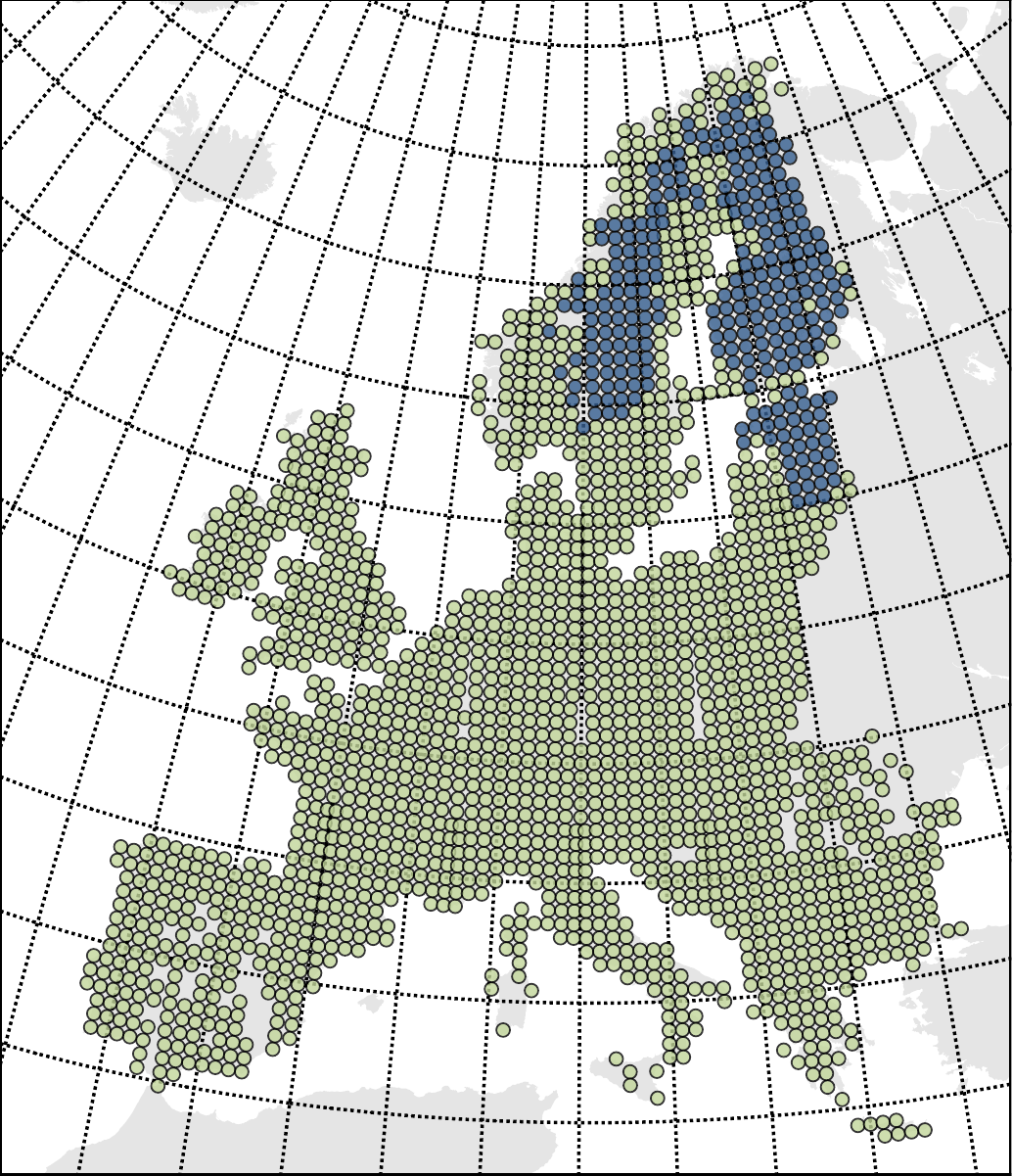}
\hfill
\includegraphics[scale=0.55]{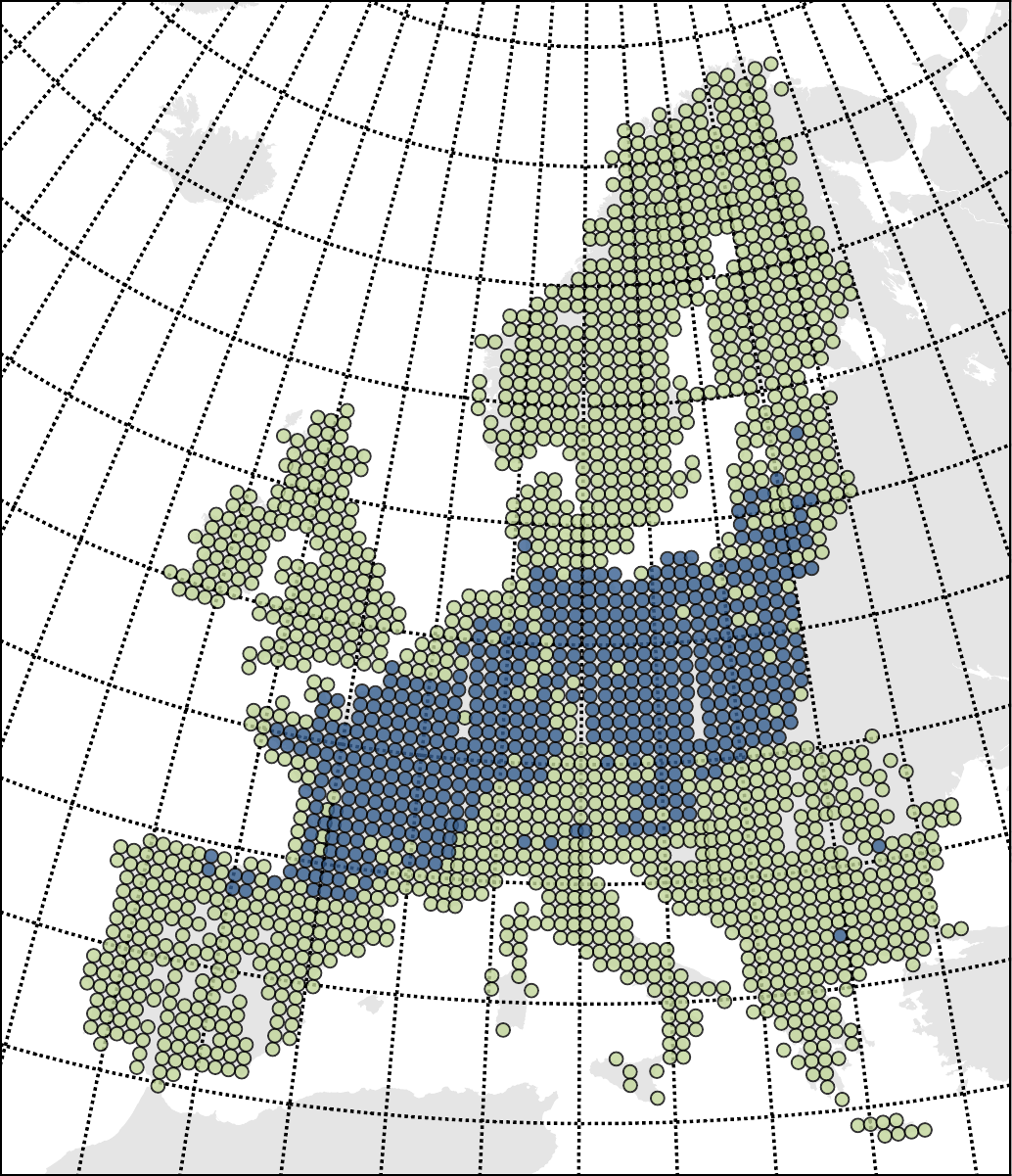}
\hfill
\hphantom{.}\\
\vspace{-0.5em}
\hphantom{.}
\hfill
\includegraphics[scale=0.55]{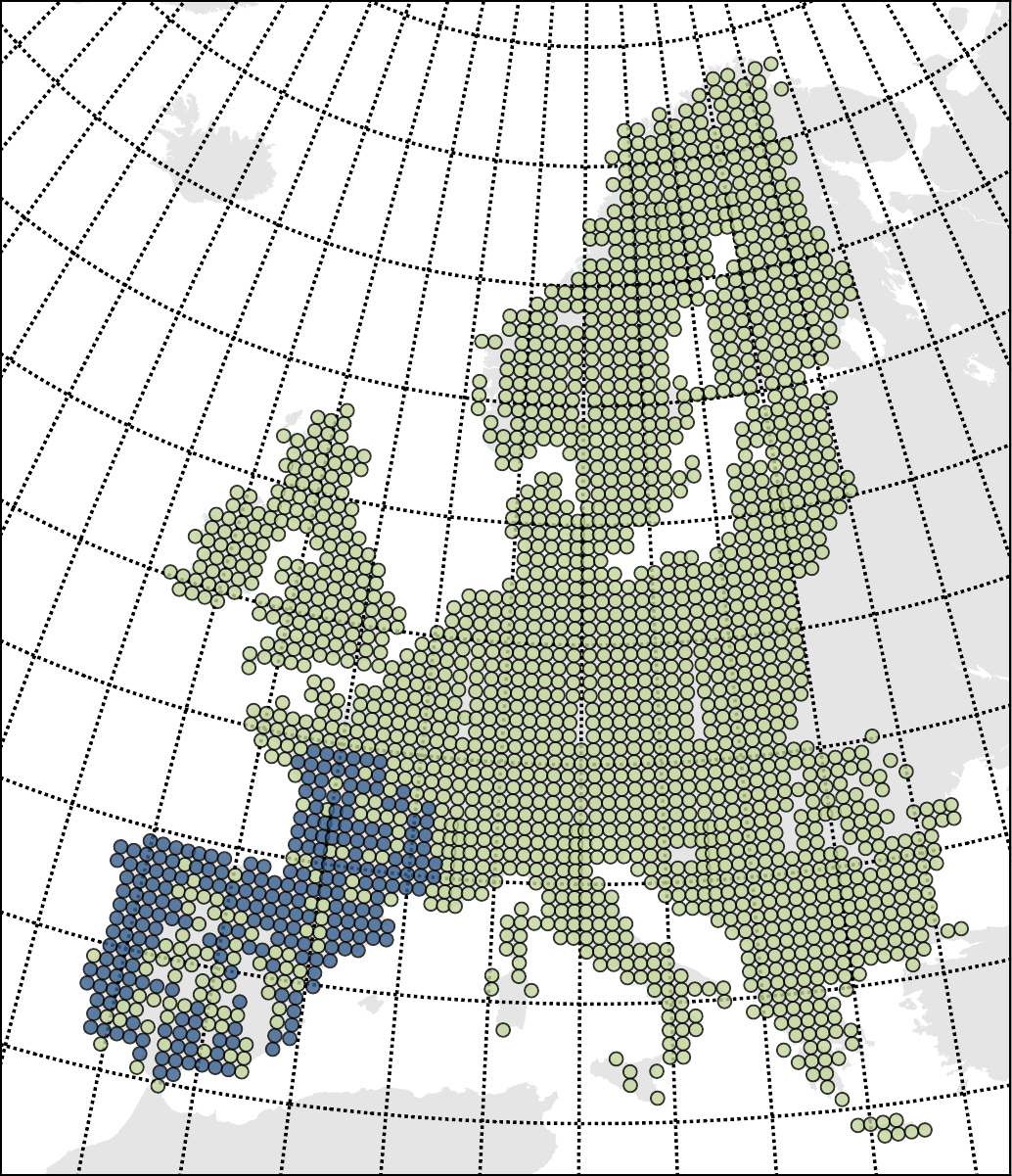}
\hfill
\includegraphics[scale=0.55]{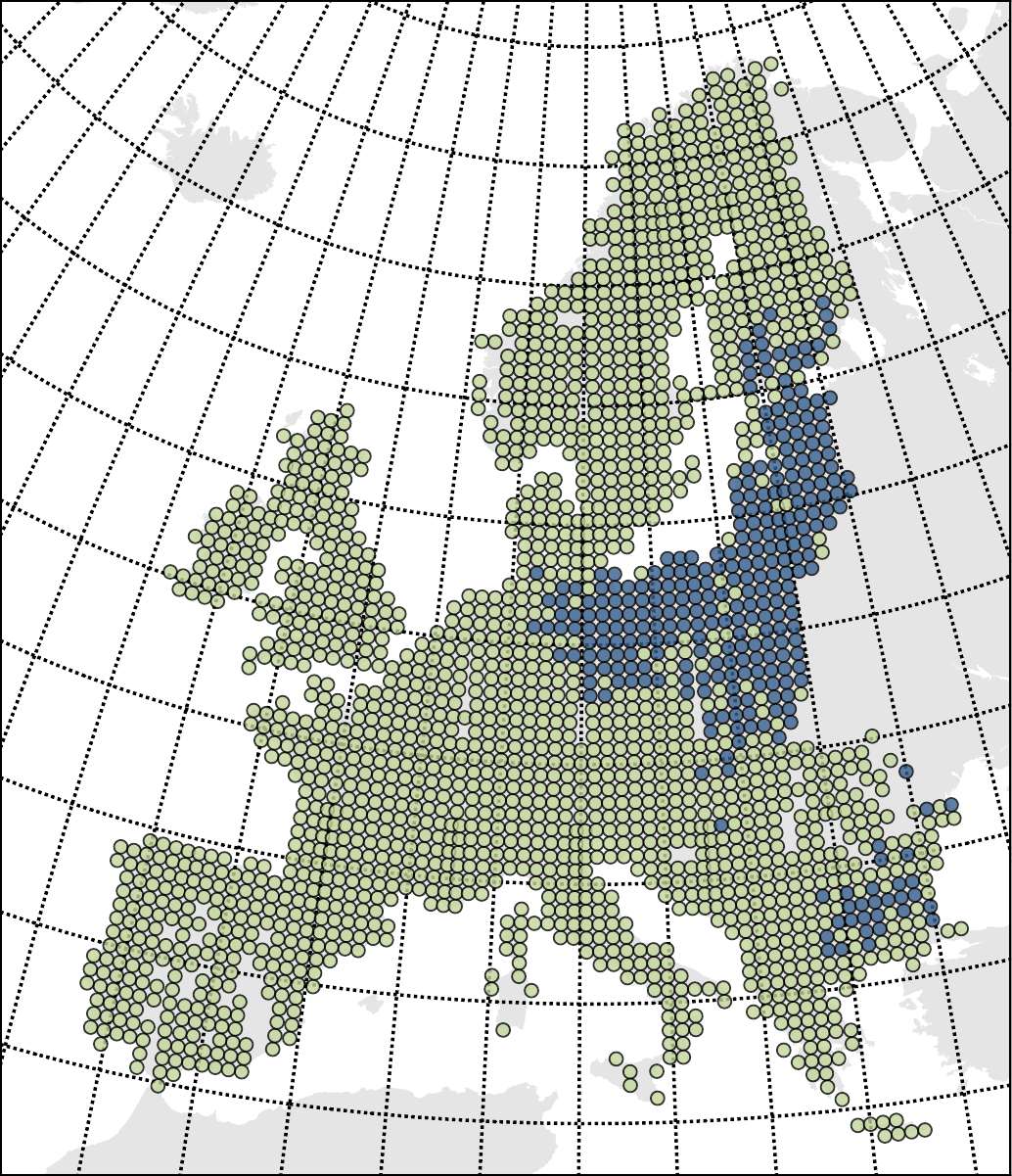}
\hfill
\hphantom{.}
\caption{For the \emph{Mammals} dataset, for four of the itemsets discovered by our method, we depict the transactions (i.e., locations) that support that itemset in blue (dark). A transaction supports the itemset if all of the mammals identified by the set have been recorded in the data to occur in that area.}
\label{fig:mammals}
\end{figure}

For the \emph{Chess (kr--k)} and \emph{Mushroom} datasets, we also ran experiments including the row margins, since we know that these datasets originated from categorical data, and hence the margins of these datasets are fixed. As remarked in Section~\ref{sec:background}, this increases the runtime of the algorithm by a factor $N=\abs{A}$. For both \textsc{bic} and \textsc{mdl}, the algorithm therefore took longer to execute, and was terminated after two hours for both datasets. The size of the summaries in all cases was equal to 6. Comparing between them reveals that some (but not all) of the itemsets also occur in the summaries that did not use the row margins as background knowledge, which leads us to conclude that the use of row margins as background information for these datasets does not have a substantial effect on the discovered summaries, at least not for the top-six.

The items in the \emph{Mammals} data are European mammals, and the transactions correspond to their occurrences in geographical areas of (50$\times$50)~{km}\textsuperscript{2}. The discovered itemsets represent sets of mammals that are known to co-exist in certain regions. For instance, one such set contains the Eurasian elk (moose), the mountain hare, the Eurasian lynx, and the brown bear, which are all animals living in colder, northern territories. Going back to the transactions of the data, we can also look at the areas where these itemsets occur. Figure~\ref{fig:mammals} depicts the geographical areas for some of the discovered itemsets. Some clear regions can be recognized, e.g., Scandinavia (for the aforementioned itemset), Central Europe, the Iberian peninsula, or Eastern Europe.

\citet{heikinheimo:07:mammals} considered the same data, annotated with environmental data, and applied mixture modelling and $k$-means clustering to identify regions with similar mammal presence.
When comparing to these expert-validated results, we observe a strong correlation to the regions (i.e., itemsets) MTV discovers. Although our method does not hard-cluster the map, we observe that our itemsets relate to clusters are various level of detail; the region identified in the top-right of Fig.~\ref{fig:mammals} shows strong correlation to the large `mainland' cluster of~\cite{heikinheimo:07:mammals} at $k=2,3$ whereas the mammals identified in the bottom-left plot of Fig.~\ref{fig:mammals} correspond to 2 clusters that only show up at $k\in[11,13]$.

The \emph{Plants} dataset is similar to the \emph{Mammals} data, except than now the plants form transactions, and the items represent the U.S.\ and Canadian states where they occur. The discovered itemsets, then, are states that exhibit similar vegetation. Naturally, these are typically bordering states. For instance, some of the first discovered itemsets are $\{\textsc{ct, il, in, ma, md, nj, ny, oh, pa, va}\}$ (North-Western states), $\{\textsc{al, ar, ga, ky, la, mo, ms, nc, sc, tn}\}$ (South-Western states), and $\{\textsc{co, id, mt, or, ut, wa, wy}\}$ (North-Eastern states).

Finally, for the \emph{MCADD} data, we find about 80 itemsets after running the algorithm for two hours. 
The attributes in this dataset are measured fatty acid concentrations, ratios between these, and some further biochemical parameters, which have all been discretized. The items therefore are attribute-value pairs. In the discovered summary, we see that the selected itemsets mostly consist of items corresponding to a few known key attributes. Furthermore, we can identify strongly correlated attributes by regarding those combinations of attributes within the itemsets selected in the summary.
As an example, one of the first itemsets of the summary corresponds to particular values for attributes $\{\mathit{MCADD}, C8, \frac{C8}{C2},\frac{C8}{C10},\frac{C8}{C12}\}$, respectively the class label, an acid, and some of its calculated ratios. This acid and its ratios, and the identified values, are commonly used diagnostic criteria for screening MCADD, and were also discovered in previous in-depth studies \citep{baumgartner:05:modelling,vandenbulcke:11:data}, and similarly for other itemsets in the summary.

\subsection{Comparison with Other Methods}

In Table~\ref{tab:abstracttop}, we give the top-10 itemsets in the {\em Abstracts} dataset, as discovered by our algorithm (using \textsc{mdl}), the method by \citet{konto:10:sdm}, the compression-based {\sc Krimp} algorithm~\citep{vreeken:11:krimp}, and Tiling~\citep{geerts:04:tiling}.
We see that our algorithm discovers important data mining topics such as {\em support vector machines}, {\em naive bayes}, and {\em frequent itemset mining}. 
Further, there is little overlap between the itemsets, and there is no variations-on-the-same-theme type of redundancy.

The results of the Information-Theoretic Noisy Tiles algorithm by \citet{konto:10:sdm}, based on the Information Ratio of tiles, are different from ours, but seem to be more or less similar in quality for this particular dataset.

The {\sc Krimp} algorithm does not provide its resulting itemsets in an order, so in order to compare between the different methods, following its {\sc mdl} approach, we selected the top-10 itemsets from the code table that have the highest usage, and hence, shortest associated code. From a compression point of view, the items in these sets co-occur often, and thus result in small codes for the itemsets. 
Arguably, this does not necessarily make them the most interesting, however, and we observe that some rather general terms such as {\em state [of the] art} or {\em consider problem} are ranked highly.

\begin{table}[h]
\caption{The top-10 itemsets of the \emph{Abstracts} dataset for our \textsc{mtv} algorithm using \textsc{mdl} (top left), \citet{konto:10:sdm} (top right), \textsc{Krimp}~\citep{vreeken:11:krimp} (bottom left), and Tiling~\citep{geerts:04:tiling} (bottom right).}
\label{tab:abstracttop}
\small
\centering
\begin{tabular*}{0.99\linewidth}{@{\extracolsep{\fill}}c c c}
\cmidrule{1-1}\cmidrule{3-3}
\multicolumn{1}{c}{\textsc{mtv}} && \multicolumn{1}{c}{\textit{Information-Theoretic Noisy Tiles}} \\
\cmidrule{1-1}\cmidrule{3-3}
support vector machin svm && support vector machin \\
associ rule mine && effici discov frequent pattern mine algorithm \\
nearest neighbor && associ rule mine algorithm database \\
frequent itemset mine && train learn classifi perform set \\
naiv bay && frequent itemset\\
linear discrimin analysi lda && mine high dimensional cluster \\
cluster high dimension && synthetic real \\
state art && time seri \\
frequent pattern mine algorithm && decis tree classifi \\
synthet real && problem propos approach experiment result \\[0.5em]
\cmidrule{1-1}\cmidrule{3-3}
\multicolumn{1}{c}{\textsc{Krimp}} && \multicolumn{1}{c}{\textit{Tiling}} \\
\cmidrule{1-1}\cmidrule{3-3}
\hphantom{experiment propos problem approach result}\vspace{-0.9em}&&
\hphantom{experiment propos problem approach result}\\
algorithm experiment result set && algorithm mine \\
demonstr space  && algorithm base \\
larg databas  && result set\\
consid problem  && approach problem\\
knowledg discoveri && propos method \\
experiment demonstr && experiment result\\
rule mine associ databas  && algorithm perform\\
algorithm base approach cluster  && model base\\
state art  && set method\\
global local  && algorithm gener\\
\end{tabular*}
\end{table}

Finally, for Tiling we provide the top-10 tiles of at least two items, i.e., the ten tiles whose support times their size is highest. Without the minimum size constraint, only singleton itemsets are returned, which, although objectively covering the largest area individually, are not very informative. Still, the largest discovered tiles are of size two, and contain quite some redundancy, for instance, the top-10 contains only 13 (out of 20) distinct items.

Each of these methods formalize the task of summarization differently, and  hence optimize objective scores quite different from ours. It therefore is impossible to fairly and straightforwardly quantitatively compare to these methods; neither construct a itemset-frequency-prediction model as we do, and not even result itemsets together with their frequencies. Tiling and Information Theoretic Tiling both result in tiles, sets of itemsets with tid-lists, while \textsc{Krimp} code tables contain itemsets linked to relative \emph{usage} frequencies of covering the data without overlap. It hence makes as little sense to score \textsc{Krimp} itemsets by our score, as it would to consider our itemsets as a code table.

\subsection{Significant Itemsets}
\label{sec:pvalues}
Next, we investigate the significance of the itemsets that are included in the summaries we discover, as well as the significance of the itemsets in $\ifam{F}$ that were not included. 
To properly compute the p-values of the itemsets, we employ holdout evaluation, as described by \citet{webb:07:discovering}. That is, we equally split each dataset into an exploratory (training) set $\db_{e}$ and a holdout (test) set $\db_{h}$, apply our algorithm to the exploratory data, and evaluate the itemsets on the holdout set. Since $\db_{e}$ contains less data than $\db$, the discovered models tend to contain fewer itemsets (i.e., as there is less data to fit the model on, \textsc{bic} and \textsc{mdl} both allow for less complex models).

The significance of an itemset $X$ is evaluated as follows. We compute its estimated probability $\pemp(X=1)$ (using either $\bg$, or $\bg$ and $\coll$, consistent with $D_{e}$). This is the null hypothesis. Then, we calculate the two-tailed p-value given the observed frequency in the holdout data, $\freq{X}$. Let us write $d=\abs{\db_{h}}=\abs{\db}/2$, $f=d\cdot\freq{X}$, and $p=\pemp(X=1)$. The p-value of $X$ expresses the probability of observing an empirical frequency $q_{\db_{h}}(X)$ at least as extreme (i.e., improbable) as the observed frequency $\freq{X}$, with respect to the model, according to a binomial distribution parametrized by $d$ and $p$
\[
	B(d;p)(f)={d \choose f} p^{f} (1-p)^{(d-f)}\;.
\]
Assuming that $\freq{X} \geq \pemp(X=1)$, we calculate the two-tailed p-value of $X$ as
\begin{align*}
	\mathrm{p\mhyphen value} & = \mathrm{Prob}(q_{\db_{h}}(X)\geq f \mid p) + \mathrm{Prob}(q_{\db_{h}}(X) \leq f' \mid p)\\
	&= \sum_{i=f}^{d} {d \choose i} p^{i} (1-p)^{d-i} + \sum_{i=0}^{f'} {d \choose i} p^{i} (1-p)^{d-i}
\end{align*}
where \[f' = \max \{f' \in [0,dp) \mid B(d;p)(f') \leq B(d;p)(f) \}\;.\]
The p-value for the case $\freq{X} < \pemp(X=1)$ is defined similarly.

It is expected that the itemsets that our algorithm discovers are significant with respect to the background model. Simply put, if the frequency of an itemset is close to its expected value, it will have a high p-value, and thus it will not be significant. Moreover, it will also have a low heuristic value $h(X)$, and hence it will not greatly improve the model. The following demonstrates this connection between $\mathit{kl}$ (and hence $h$), and the p-value. Using Stirling's approximation, we can write the logarithm of the binomial probability $B(d;p)(f)$ as follows.
\begin{align*}
	\log {d \choose f} p^{f} (1-p)^{d-f} &\approx d\log d - f\log f - (d-f) \log (d-f) \\
	& \quad+ f\log p + (d-f) \log (1-p)\\
	&= - f \log \frac{f}{dp} - (d-f) \log \frac{d-f}{d(1-p)}\\
	&= -d \cdot \mathit{kl}(X)
\end{align*}
Therefore, when we maximize the heuristic $h$, we also indirectly minimize the p-value. Note, however, that in our algorithm we do not employ a significance level as a parameter; we simply let \textsc{bic} or \textsc{mdl} decide when to stop adding itemsets. Further, we can also take the complexity of the itemsets into account.

In Table~\ref{table:pvalues} we show the number of significant selected and unselected itemsets. Since we are testing multiple hypotheses, we apply  Bonferroni adjustment, to avoid Type I error---falsely rejecting a true hypothesis \cite{shaffer:95:multiple}. That is, if we were to test, say, 100 true hypotheses at significance level 0.05, then by chance we expect to falsely reject five of them. Therefore, at significance level $\alpha$, each p-value is compared against the adjusted threshold $\alpha/n$ where $n$ is the number of tests performed.

The first column of Table~\ref{table:pvalues} shows the number of itemsets in $\coll$ that are significant with respect to the background knowledge $\bg$. In general, we see that all itemsets are significant. However, for the \emph{Accidents} dataset, e.g., we find two itemsets that are not significant with respect to $\bg$. Nevertheless, they are not redundant in this case; upon inspection, it turns out that these itemsets (say, $X_{1}$ and $X_{2}$) are subsets of another itemset (say, $Y$) in $\coll$ that was significant. After adding $Y$ to the model, the estimates of $X_{1}$ and $X_{2}$ change, causing them to become significant with respect to the intermediate model. A similar observation is made for the \emph{Chess (kr--k)} and \emph{Mushroom} datasets. In the second column, we therefore also show the number of itemsets $X_{i+1} \in \coll$ that are significant with respect to the previous model $\coll_{i}$. In this case we see that indeed all itemsets are significant, from which we conclude that all itemsets really contribute to the summary, and are not redundant. 

Next, we compute the p-values of the itemsets in the candidate set $\ifam{F}$ that were not included in $\coll$. Since for all datasets the candidate set $\ifam{F}$ is very large, we uniformly sample 1\,000 itemsets from $\ifam{F} \setminus \coll$, and compute their p-values with respect to $\pemp_{\bg,\coll}$. We see that for seven of the datasets, there are few significant itemsets, which means that $\coll$ captures the data quite well. For two datasets (\emph{Mushroom} and \emph{Retail}), a few hundred are significant, but still many are not. For the five remaining datasets, however, we see that almost all itemsets are significant. Nonetheless, this does not automatically mean that the discovered models are poor. That is, apart from considering the deviation of an itemset's frequency, we also consider its complexity and the complexity of a model as a whole. This means that even if the observed frequency of an itemset is surprising to some degree, it may be too complex to include it; upon inspection, we indeed find that among the sampled itemsets, there tend to be many large ones.

\begin{table}[ht]
\caption{The number of significant itemsets in the discovered summary $\coll$, with respect to background knowledge $\bg$ and each intermediate model $\coll_{i}$, and the number of significant itemsets among 1\,000 itemsets sampled from $\ifam{F} \setminus \coll$, denoted by $\ifam{S}$. We use a significance level of 0.05, and Bonferroni-adjusted p-values.}
\label{table:pvalues}
\centering
\begin{tabular}{lrrrcrr}
\toprule
& \multicolumn{3}{c}{$X \in \coll$} && \multicolumn{2}{c}{$X \in \ifam{S}$} \\
 \cmidrule{2-4} \cmidrule{6-7}
 & \parbox{4em}{\textit{\raggedleft\# signif.\\\hfill w.r.t.\ $\bg$}} & \parbox{4em}{\textit{\raggedleft\# signif.\\\hfill w.r.t.\ $\coll_{i}$}} & $\abs{\coll}$ && \parbox{4em}{\textit{\raggedleft\# signif.\\\hfill w.r.t.\ $\coll$}} & $|\ifam{S}|$\\
\midrule
Independent & 0 & 0 & \bf{0} && 0 & 1\,000\\
Markov & 64 & 64 & \bf{64} &&  6 & 1\,000\\
Mosaic & 14 & 14 & \bf{14} && 0 & 1\,000\\[0.5em]
Abstracts & 12 & 12 & \bf{12} && 14 & 1\,000\\
Accidents  & 69 &71 & \bf{71} && 883 & 1\,000\\
Chess (kr--k) & 42 & 42 & \bf{43} && 37 & 1\,000\\
DNA Amplification & 87 & 87 & \bf{87} && 990 & 1\,000\\
Kosarak & 268 & 268 & 268 && 993 & 1\,000\\
Lotto & 0 & 0 & \bf{0} && 0 & 1\,000\\
Mammals & 39 & 39 & \bf{39} && 986 & 1\,000\\
MCADD & 61 & 61 & 61 && 89 & 1\,000\\
Mushroom & 59 & 63 & \bf{63} && 139 & 1\,000\\
Plants & 87 & 87 & \bf{87} && 998 & 1\,000\\
Retail & 65 & 65 & 65 && 302 & 1\,000\\
\bottomrule
\end{tabular}
\end{table}

\section{Discussion}
\label{sec:discussion}

The approach introduced in this paper fulfills several intuitive expectations one might have about summarization, such as succinctness, providing a characteristic description of the data, and having little redundancy. 
The experiments show that quantitatively we can achieve good {\sc bic} and {\sc mdl} scores with only a handful of itemsets, 
and that these results are highly qualitative and meaningful; moreover, we can discover them in a relatively short amount of time. In practice, we see that the results using \textsc{mdl} are slightly better than those using \textsc{bic}; the former tends to be a bit more conservative, in the sense that it does not discover spurious itemsets. Furthermore, using our \textsc{mdl} score, we have more control over the complexity of the summary, since it takes into account not only the summary size but also the sizes of the itemsets within it. 

\textsc{mdl} does not provide a free lunch. First of all, although highly desirable, it is not trivial to bound the score. For Kolmogorov complexity, which \textsc{mdl} approximates, we know this is incomputable. For our models, however, we have no proof one way or another. Furthermore, although \textsc{mdl} gives a principled way to construct an encoding, this involves many choices that determine what structure is rewarded. As such, we do not claim our encoding is suited for all goals, nor that it cannot be improved.
Since we take a greedy approach, and in each step optimize a heuristic function, we also do not necessarily find the globally optimal solution with respect to \textsc{mdl} or \textsc{bic}. However,
we argue that in practice it is not strictly necessary to find an optimal summary, but to find one that provides a significant amount of novel information. 

In this paper we consider data mining as an iterative process. By starting off with what we already know---our background knowledge---we can identify those patterns that are the surprising to us. Simply finding the itemset that is most surprising, is a problem that \citet{hanhijarvi:09:tell} describe as `{tell me something I don't know}'\!. When we repeat this process, in the end, we will have identified a group of itemsets that `{tell me all there is to know}' about the data. Clearly, this group strongly overfits the data. This is where the {\sc mdl} principle provides a solution, as it automatically identifies the most informative group. Hence, we paraphrase our approach as `{tell me what I need to know}'\!. As such, by our Information Theoretic approach it can be seen as an instantiation of the general iterative framework recently proposed by~\citet{debie:11:inftheoryframework}.

The view that we take here on succinctness and non-redundancy is fairly strict. 
Arguably, there are settings conceivable where limited redundancy (at the cost of brevity) can give some robustness to a technique, or provide alternative insights by restating facts differently. However, this is not the intention of this paper, and we furthermore argue that our method can perfectly be complemented by techniques such as redescription mining \citep{zaki:05:reasoning}.

Data mining is not only an iterative process, but also an interactive one. The {\sc mtv} algorithm above simply returns a set of patterns to the user, with respect to his or her background knowledge. However, in practice we might want to dynamically guide the exploration and summarization process. That is, we may want to add or remove certain itemsets, next let the algorithm add some more itemsets, etc. Our algorithm can easily be embedded into an interactive environment. Deleting an itemset from a summary is quite straightforward; we just collapse the transaction partition, and re-run the iterative scaling algorithm to discover the parameters of the simplified model.

Even though in this paper we significantly extend upon \citet{mampaey:11:tell}, there are some further improvements possible for our method. For instance, one problem setting in which our method is applicable, is that of finding the best specialization of a given itemset $X$. That is, to identify the superset $Y$ of $X$ that provides the best score $s(\coll \cup \{Y\})$. This setup allows experts to interactively discover interesting itemsets. As part of future work, we are currently investigating this in practice for finding patterns in proteomics and mass-spectrometry data.

The experiments showed that we discover high-quality summaries, and that we can efficiently compute the maximum entropy model for a given collection of itemsets in practice---even though the latter is an NP-hard problem in general. In specific cases, however, there is room for further optimization. For computational reasons, we split up the distribution into smaller groups, by restricting the number of items per group. However, this does imply that exactly modeling a (very) long Markov chain, for instance, is not possible, as only parts up to $n$ items will be modeled. While a complete Markov chain can easily be described by our model, computing it may prove to be difficult in practice. It would be interesting to see then, how modeling techniques as decomposition, for instance, using junction trees \cite{cowell:99:probabilistic}, could be combined with our methods.

We see several further possibilities for improving upon our current, unoptimized, implementation of \textsc{mtv}. For example, massive parallelization can be applied to the search for the most informative itemset. This requires computing itemset probabilities with respect to a single model, for many candidates, which can easily be done in parallel. This can also benefit pruning in this phase, since we can use the maximum heuristic value over all processes. Another option is to simplify the computation of the row margin distribution, which adds a factor $N$ to the runtime of the algorithm. Rather than computing the probability that a transaction contains a certain number of items, we could group these probabilities in to a coarser granularity, to reduce this factor.

\section{Conclusion}
\label{sec:conclusion}

We introduced a well-founded method for iteratively mining non-redundant collections of itemsets that form high-quality summaries of transaction data.
By employing the Maximum Entropy principle, we obtain unbiased probabilistic models of the data, through which we can identify informative itemsets, and subsequently iteratively build succinct summaries of the data by updating our model accordingly. As such, unlike static interestingness models, our approach does not return patterns that are redundant with regard to what we have learned, or already knew as background knowledge, and hence the result is kept succinct and maximally informative. 

To this end, we presented the {\sc mtv} algorithm for mining informative summaries, which can either mine the top-\emph{k} most informative itemsets, or by employing either the Bayesian Information Criterion ({\sc bic}) or the Minimum Description Length ({\sc mdl}) principle, we can automatically identify the set of itemsets that as a whole provides a high quality summary of the data. Hence, informally said, our method `tells you what you need to know' about the data.

Although in the general case modeling by Maximum Entropy is NP-hard, we showed that in our case we can do so efficiently using Quick Inclusion-Exclusion. Furthermore, {\sc mtv} is a one-phase algorithm; rather than picking itemsets from a pre-materialized user-provided candidate set, it calculates on-the-fly the supports for only those itemsets that stand a chance of being included in the summary.

Experiments show that we discover succinct summaries, which correctly identify important patterns in the data. The resulting models attain high log-likelihoods using only few itemsets, are easy to interpret, contain significant itemsets with low p-values, and for a wide range of datasets are shown to give highly intuitive results.

\begin{acks}
The authors wish to thank Tijl De Bie for pre-processing and making available of the ICDM Abstracts data, the i-ICT of Antwerp University Hospital (UZA) for providing the MCADD data and expertise, and last, but not least, the anonymous referees whose insightful comments and suggestions helped to improve the paper considerably.
\end{acks}

\appendix
\section*{APPENDIX}
\section{Proof of Corollary~\lowercase{\ref{cor:likelihood}}}
\begin{apxcorollary43}[]
The log-likelihood of the maximum entropy distribution $\pemp_{\langle \coll, \faai\rangle}$ for a collection of itemsets and frequencies $\langle \coll, \faai\rangle$ is equal to
\begin{align*}
	\log \pemp_{\langle \coll, \faai\rangle}\fpr{\db} &= \abs{\db}\bigg(\log u_0 + \sum_{(X_{i}, f_{i}) \in \langle \coll, \faai\rangle} f_{i} \log u_{X_{i}}\bigg)\,.
\end{align*}
\end{apxcorollary43}
\begin{proof} From Theorem~\ref{thr:exponential}, we have that $\pemp_{\langle \coll, \faai\rangle}\fpr{A = t} = u_0 \prod_{X \in \coll} u_X^{S_X(t)}$. Hence
\begin{align*}
	\log \pemp_{\langle \coll, \faai\rangle}\fpr{\db}  &= \sum_{t\in\db} \log \pemp_{\langle \coll, \faai\rangle} (A=t)\\
	&= \sum_{t\in\db} \left(\log u_0 + \sum_{X_i \in \coll} S_{X_i}(t) \log u_{X_i} \right)\\
	&= \abs{\db}\bigg(\log u_0 + \sum_{(X_{i}, f_{i}) \in \langle \coll, \faai\rangle} f_{i} \log u_{X_{i}}\bigg) \\
	&= -\abs{\db}\ent{\pemp_{\langle \coll, \faai\rangle}}\,.
\end{align*}
The transition from the second to the third line follows from the fact that $\sum_{t\in\db} S_{X_i}(t)=\supp{X}$.
\end{proof}

\section{Proof of Theorem~\lowercase{\ref{theorem:4.5}}}
\begin{apxtheorem45}[]
Assume a collection of itemsets $\coll = \enset{X_1}{X_k}$ and let $\pemp_\coll$ be the maximum entropy model,
computed from a given dataset $\db$.
Assume also an alternative model $r(A = t \mid f_1, \ldots, f_k)$, where $f_i = \freq{X_i \mid D}$, 
that is, a statistical model parametrized by the frequencies of $\coll$. Assume
that for any two datasets $\db_1$ and $\db_2$, where $\freq{X \mid \db_1}
= \freq{X \mid \db_2}$ for any $X \in \coll$, it holds that
\[
	1/\abs{D_1} \log r(D_1 \mid f_1, \ldots, f_k) = 1/\abs{D_2} \log r(D_2 \mid f_1, \ldots, f_k)\;.
\]
Then $\pemp_{\coll}(\db) \geq r(\db)$ for any dataset $\db$.
\end{apxtheorem45}

\begin{proof}
There exists a sequence of finite datasets $D_j$ such that $\freq{X_i \mid D_j} = f_i$
and $q_{D_j} \to \pemp_\coll$. To see this, first note that $f_i$ are rational numbers
so that the set of distributions $\ifam{P}_\coll$ is a polytope with faces defined
by rational equations. In other words, there is a sequence of distributions $p_j$ with only rational
entries reaching $\pemp$. Since a distribution with rational entries can be represented
by a finite dataset, we can have a sequence $D_j$ such that $q_{D_j} = p_j$.

Now as $j$ goes to infinity, we have
\[
	\frac{1}{\abs{D}}\log r(D) = \frac{1}{\abs{D_j}} \log r(D_j) = \sum_{t \in \trans} q_{D_j}(A = t) \log r(A = t) \to \sum_{t \in \trans} \pemp_\coll(A = t) \log r(A = t)\;.
\]
Since the left side does not depend on $j$ we actually have a constant sequence.
Hence, 
\[
	0 \leq \kl{\pemp_\coll}{r} = \ent{\pemp_\coll} - \sum_{t \in \trans} \pemp_\coll(A = t) \log r(A = t) = 1/\abs{D}(\log \pemp_\coll(D) - \log r(D))\;,
\]
which proves the theorem.
\end{proof}

\section{Proof of Theorem~\lowercase{\ref{theorem:itscorrect}}}

First, we state two lemmas, which are then used in the proof of the subsequent theorem.

\begin{lemma}
\label{lem:closure2}
Let $\ifam{G}$ and $\ifam{H}$ be itemset collections such that $\ifam{G} \subseteq \ifam{H} \subseteq \closure{\ifam{G}}$, then
$\closure{\ifam{H}} = \closure{\ifam{G}}$.
\end{lemma}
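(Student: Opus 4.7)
The plan is to reduce the equality of the two closures to the equality of the underlying unions of items, since $\closure{\cdot}$ depends on its argument only through $\bigcup_{X\in\ifam{G}} X$. Concretely, writing $U_{\ifam{G}}=\bigcup_{X\in\ifam{G}} X$ and $U_{\ifam{H}}=\bigcup_{Y\in\ifam{H}} Y$, it suffices to show $U_{\ifam{G}}=U_{\ifam{H}}$, because the definition $\closure{\ifam{G}}=\set{X_i\in\coll \mid X_i\subseteq U_{\ifam{G}}}$ then forces the two closures to coincide.

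For the inclusion $U_{\ifam{G}}\subseteq U_{\ifam{H}}$, I would simply use the hypothesis $\ifam{G}\subseteq\ifam{H}$: any item appearing in some $X\in\ifam{G}$ appears in that same $X$ viewed as a member of $\ifam{H}$. For the reverse inclusion $U_{\ifam{H}}\subseteq U_{\ifam{G}}$, I would invoke the other hypothesis $\ifam{H}\subseteq\closure{\ifam{G}}$: every $Y\in\ifam{H}$ lies in $\closure{\ifam{G}}$, so by the definition of closure $Y\subseteq U_{\ifam{G}}$; taking the union over $Y\in\ifam{H}$ gives $U_{\ifam{H}}\subseteq U_{\ifam{G}}$.

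Combining both inclusions yields $U_{\ifam{G}}=U_{\ifam{H}}$, so $\closure{\ifam{H}}=\set{X_i\in\coll \mid X_i\subseteq U_{\ifam{H}}}=\set{X_i\in\coll \mid X_i\subseteq U_{\ifam{G}}}=\closure{\ifam{G}}$, finishing the argument. There is no real obstacle here: the lemma is a direct unfolding of definitions, and the only substantive observation is that the sandwich condition $\ifam{G}\subseteq\ifam{H}\subseteq\closure{\ifam{G}}$ exactly enforces that $\ifam{H}$ contributes no new items beyond those already present in $\ifam{G}$.
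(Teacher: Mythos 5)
Your proof is correct and follows essentially the same route as the paper's: both reduce the claim to comparing the unions $\bigcup_{X\in\ifam{G}}X$ and $\bigcup_{Y\in\ifam{H}}Y$, using $\ifam{G}\subseteq\ifam{H}$ for one direction and $\ifam{H}\subseteq\closure{\ifam{G}}$ for the other. The only cosmetic difference is that you establish equality of the unions outright, whereas the paper derives the two closure inclusions separately and invokes idempotency of the closure for the second; your version is marginally more direct.
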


\begin{proof}
Write $\ifam{F} = \closure{\ifam{G}}$ and let
$U = \bigcup_{X \in \ifam{G}} X$, $V = \bigcup_{X \in \ifam{H}} X$, and $W = \bigcup_{X \in \ifam{\ifam{F}}} X$.
By definition we have $U \subseteq V$ which implies that $\closure{\ifam{G}} \subseteq \closure{\ifam{H}}$. Also since $V \subseteq W$,
we have $\closure{\ifam{H}} \subseteq \closure{\ifam{F}} = \closure{\ifam{G}}$, where the second equality follows
from the idempotency of closure.
\end{proof}

\begin{lemma}
\label{lem:closure3}
Let $\ifam{G}$ be an itemset collection and let $Y \notin \ifam{G}$ be an itemset. Assume that there is a $t \in \trans$
such that $S_X(t)$ for every $X \in \ifam{G}$ and $S_Y(t) = 0$. Then $Y \notin \closure{\ifam{G}}$.
\end{lemma}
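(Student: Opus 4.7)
}
The plan is to argue by a direct contrapositive-style inspection of the items carried by the witnessing transaction $t$. Let $U = \bigcup_{X \in \ifam{G}} X$. The hypothesis that $S_X(t) = 1$ for every $X \in \ifam{G}$ says exactly that $t$ has a $1$ in every coordinate indexed by any item of any $X \in \ifam{G}$, so first I would observe that $t_a = 1$ for every $a \in U$.

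Next, I would unpack the hypothesis $S_Y(t) = 0$: this means there exists at least one item $y \in Y$ with $t_y = 0$. Combining the two previous observations, such a $y$ cannot lie in $U$, since every item of $U$ has $t$-coordinate equal to $1$. Hence $y \in Y \setminus U$, which shows $Y \not\subseteq U$.

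Finally, I would invoke the definition $\closure{\ifam{G}} = \{X_i \in \coll \mid X_i \subseteq \bigcup_{X \in \ifam{G}} X\}$: since $Y \not\subseteq U = \bigcup_{X \in \ifam{G}} X$, we conclude $Y \notin \closure{\ifam{G}}$, as required. There is no serious obstacle here; the only subtlety is making sure the direction of the indicator-function condition matches the set-inclusion direction in the definition of closure, which the coordinate-wise argument above handles cleanly.
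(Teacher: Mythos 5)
Your proposal is correct and follows essentially the same argument as the paper: both identify an item $y \in Y$ with $t_y = 0$, observe that it cannot belong to any $X \in \ifam{G}$ (hence not to $\bigcup_{X \in \ifam{G}} X$), and conclude $Y \notin \closure{\ifam{G}}$ from the definition of closure. No differences worth noting.
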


\begin{proof}
$S_Y(t) = 0$ implies that there is $a_i \in Y$ such that $t_i = 0$. Note that $a_i \notin X$ for any $X \in \ifam{G}$, otherwise
$S_X(t) = 0$. This implies that $Y \nsubseteq \bigcup_{X \in \ifam{G}} X$ which proves the lemma.
\end{proof}

With the lemmas above, we can now prove the main theorem.

\begin{apxtheorem418}[]
Given a collection of itemsets $\coll=\{X_{1},\ldots,X_{k}\}$, let $\trans_{\coll}$ be the corresponding partition with respect to $\coll$.
The algorithm {\sc QieBlockSizes} correctly computes the block sizes $e(T)$ for $T\in \trans_{\coll}$.
\end{apxtheorem418}
\begin{proof}
Let us denote $e^{0}(T)=c(T)$ for the initialized value, and let $e^{i}(T)$ be
the value of $e(T)$ after the execution of the $i$th iteration of {\sc
QieBlockSizes} for $i=1,\ldots,k$.

Let us write
\[
	S^{i}(\ifam{G}) = \set{t \in \trans \mid S_{X}(t)=1 \mathrm{\ for\ } X \in \ifam{G} \mathrm{\ and\ } S_{X}(t)=0 \mathrm{\ for\ } X \in \coll_{i} \setminus \ifam{G}}\;.
\]

The following properties hold for $S^i$: 
\begin{enumerate}
\item \label{property:subset}
$S^i(\ifam{G}) \subseteq S^{i - 1}(\ifam{G})$ for any $\ifam{G}$ and $i > 0$.
\item \label{property:split}
If $X_i \notin \ifam{G}$, then $S^{i - 1}(\ifam{G}) = S^i(\ifam{G}) \cup S^{i - 1}(\ifam{G} \cup \set{X_i})$,
and $S^{i - 1}(\ifam{G} \cup \set{X_i}) \cap S^i(\ifam{G}) = \emptyset$.
\item \label{property:closure}
$S^i(\ifam{G}) = S^i(\closure{\ifam{G}, i})$ for any $\ifam{G}$.
\item \label{property:shift}
If $X_i \in \ifam{G}$, then $S^i(\ifam{G}) = S^{i - 1}(\ifam{G})$.
\end{enumerate}

Note that $\abs{S^k(\sets{T; \coll})} = e(T)$, hence to prove the theorem we will show by
induction that $e^i(T) = \abs{S^i(\sets{T; \coll})}$ for $i = 0, \ldots, k$ and for $T \in \trans_{\coll}$.

For $i=0$ the statement clearly holds. Let $i > 0$ and make an induction
assumption that $e^{i - 1}(T) = \abs{S^{i - 1}(\sets{T; \coll})}$. 
If $X_{i} \in \ifam{G}$, then by definition $e^i(T) = e^{i - 1}(T)$.
Property~\ref{property:shift} now implies that $S^i(\ifam{G}) = S^{i - 1}(\ifam{G})$, proving the induction step.

Assume that $X_{i} \notin \ifam{G}$. Let us define $\ifam{F} = \ifam{G} \cup \set{X_i}$
and let $\ifam{G}' = \closure{\ifam{F}, i - 1}$ and $\ifam{H} = \closure{\ifam{G}'}$.
Since it holds that $\ifam{F} \subseteq \ifam{G}' \subseteq \closure{\ifam{F}}$, Lemma~\ref{lem:closure2}
implies that $\ifam{H} = \closure{\ifam{F}}$.

Assume that $T'=\block{\ifam{G}'}=\emptyset$.
Then we have $e^i(T) = e^{i - 1}(T)$, hence we need to show
that $S^i(\ifam{G}) = S^{i - 1}(\ifam{G})$.  Assume otherwise.
We will now show that $\ifam{G}' = \ifam{H}$ and apply Lemma~\ref{lem:closure1} to conclude
that $\block{\ifam{G}'}\neq\emptyset$, which contradicts our assumption.

First note that if there would exist an $X \in \ifam{H} \setminus \ifam{G}'$, then $X \in \coll_{i - 1}$, since
\begin{align*}
	\ifam{H} \setminus \ifam{G}' &= \ifam{H} \setminus (\ifam{F} \cup (\closure{\ifam{F}} \setminus \coll_{i - 1})) \\
	&= \ifam{H} \setminus (\ifam{F} \cup (\ifam{H} \setminus \coll_{i - 1})) \\
	&\subseteq \ifam{H} \setminus (\ifam{H} \setminus \coll_{i - 1}) \\
	& \subseteq \coll_{i - 1}\;.
\end{align*}

Since $S^i(\ifam{G}) \subseteq S^{i - 1}(\ifam{G})$, we can choose by our
assumption $t \in S^{i - 1}(\ifam{G}) \setminus S^i(\ifam{G})$.  Since $t \in
S^{i - 1}(\ifam{G})$, we have $S_X(t) = 1$ for every $X \in \ifam{G}$. Also
$S_{X_i}(t) = 1$, otherwise $t \in S^i(\ifam{G})$.  Lemma~\ref{lem:closure3}
now implies that for every $X \in \coll_{i - 1} \setminus \ifam{F}$, it holds
that $X \notin \closure{\ifam{F}} = \ifam{H}$.  This proves that $\ifam{H} =
\ifam{G}'$, and Lemma~\ref{lem:closure1} now provides the needed contradiction. Consequently, $S^{i -
1}(\ifam{G}) = S^i(\ifam{G})$.

Assume now that $T' = \block{\ifam{G}'} \neq \emptyset$, i.e., there exists a $T' \in \trans_{\coll}$ such that $\ifam{G}' = \sets{T', \coll}$. The induction assumption now guarantees that $e^{i - 1}(\ifam{G}') = \abs{S^{i - 1}(\ifam{G}')}$.

We have
\begin{align*}
	\abs{S^i(\ifam{G})} & = \abs{S^{i - 1}(\ifam{G})} - \abs{S^{i - 1}(\ifam{F})} \tag*{Property~\ref{property:split}}\\
	                    & = \abs{S^{i - 1}(\ifam{G})} - \abs{S^{i - 1}(\ifam{G}')} \tag*{Property~\ref{property:closure}}\\
	                    & = e^{i - 1}(\ifam{G}) - e^{i - 1}(\ifam{G}')\;, \tag*{induction assumption}\\
\end{align*}
which proves the theorem.
\end{proof}

\bibliographystyle{acmsmall}
\bibliography{abbreviations,bib-jilles}

\received{July 2011}{April 2012}{May 2012}

\end{document}